\documentclass[11pt,a4paper,reqnno]{amsart}
\usepackage{mathtools,bm,a4wide,amssymb,amsmath,amsthm,graphicx,fancyhdr,cases,float,accents}
\usepackage[center]{caption}
\usepackage[toc,title]{appendix}
\usepackage[usenames,dvipsnames]{color}

\newtheorem{thm}{Theorem}[section]

\newtheorem{lem}[thm]{Lemma}
\newtheorem{prop}[thm]{Proposition}

\theoremstyle{definition}
\newtheorem{conj}[thm]{Conjecture}

\newtheorem{remark}[thm]{Remark}

\definecolor{light-gray}{gray}{0.85}

\catcode`,\active
\newcommand{\uhat}{\underaccent{\check}}

\catcode`\,12

\numberwithin{equation}{section}

\begin{document}
\title{Analytic solutions of $q$-$P(A_1)$ near its critical points}
\author{Nalini Joshi}
\address{School of Mathematics and Statistics F07, The University of Sydney, 
	NSW 2006, Australia}
	\email{nalini.joshi@sydney.edu.au}
\author{Pieter Roffelsen}
\address{School of Mathematics and Statistics F07, The University of Sydney, 
	NSW 2006, Australia}
\email{P.Roffelsen@maths.usyd.edu.au}

\begin{abstract}
For transcendental functions that solve non-linear $q$-difference equations, the best descriptions available are the ones obtained by expansion near critical points at the origin and infinity. We describe such solutions of a $q$-discrete Painlev\'e equation, with 7 parameters whose initial value space is a rational surface of type $A_1^{(1)}$. The resultant expansions are shown to approach series expansions of the classical sixth Painlev\'e equation in the continuum limit.
\end{abstract}

\maketitle

\section{Introduction}
Sakai \cite{Sakai} classified initial value spaces of Painlev\'e equations with singularities in $\mathbb P^2$. On $11$ such spaces, $q$-discrete Painlev\'e equations are realised when choosing a direction of the corresponding affine Weyl group acting on it. While many questions about solutions of the classical Painlev\'e equations have been answered, corresponding questions remain open for most of these $q$-discrete Painlev\'e equations, with the gap being widest for those equations ranked higher in the classification of Sakai \cite{Sakai}.  This paper focuses on such an example, denoted by $q$-$P(A_1)$. We describe its solutions near the origin and infinity and show that in the continuum limit these reduce to the expansions found by Jimbo \cite{Jimbo} in his groundbreaking study of the third, fifth and sixth Painlev\'e equation.

More specifically, we study analytic properties near the origin and infinity of solutions of the equation
\begin{equation}
\label{qpa1}
q\text{-}P(A_1)\quad
\begin{cases}
\displaystyle\frac{(gf-t^2)(g\overline{f}-qt^2)}{(gf-1)(g\overline{f}-1)}=\displaystyle\frac{(g-b_1  t)(g-b_2  t)(g-b_3  t)(g-b_4  t)}{(g-b_5)(g-b_6)(g-b_7)(g-b_8)},\\
\displaystyle\frac{(g\overline{f}-qt^2)(\overline{g}\overline{f}-q^2t^2)}{(g\overline{f}-1)(\overline{g}\overline{f}-1)}=\displaystyle\frac{(\overline{f}- b_1^{-1} qt)(\overline{f}- b_2^{-1}qt)(\overline{f}- b_3^{-1} qt)(\overline{f}-  b_4^{-1}qt)}{(\overline{f}-b_5^{-1})(\overline{f}-b_6^{-1})(\overline{f}-b_7^{-1})(\overline{f}-b_8^{-1})},
\end{cases}
\end{equation}
where $f=f(t)$ and $g=g(t)$ are the dependent variables, $t$ is the independent variable, we denote $\overline{f}=f(qt)$ and $\overline{g}=g(qt)$, and $b_1,\ldots,b_8\in \mathbb{C}^*$ are complex parameters satisfying the single constraint
\begin{equation}
q=\frac{b_1b_2b_3b_4}{b_5b_6b_7b_8}. \label{eqconstraint}
\end{equation}
Given any analytic $q$-periodic function $\Lambda$, i.e. $\Lambda(qt)=\Lambda(t)$, subject to a certain bound, and any nonvanishing analytic function $\phi$  satisfying $\overline{\phi}=\lambda \phi$, where $ b_1b_2b_3b_4\lambda=\Lambda^2$, we construct a meromorphic solution to $q$-$P(A_1)$ whose critical behaviour near the origin is described by
\begin{equation}
\begin{cases}
&f(t)\sim \phi(t)t\\
&g(t)\sim \Lambda(t)\phi(t)t
\end{cases}
\label{asymptotics}
\end{equation}
as $t\rightarrow 0$, on a given domain.

The family of solutions we find depends on two degrees of freedom (i.e., two arbitrary $q$-periodic functions). This richness collapses in the continuum limit $q\rightarrow 1$ to two complex constants and the resulting solutions for the sixth Painlev\'e equation \eqref{pVI} are identified as the well known solutions derived by Jimbo \cite{Jimbo}.

The theory of Painlev\'e equations goes back more than a century, to the pioneering work of Painlev\'e, Gambier and their colleagues, in their study of second-order nonlinear ordinary differential equations. Painlev\'e et al. found $6$ new ordinary differential equations, now known as the Painlev\'e equations, which define new transcendental functions called Painlev\'e transcendents. Since their discovery, the Painlev\'e equations have appeared in numerous physical applications. For an overview, see for instance Fokas et al. \cite{Fokas}.

The discrete Painlev\'e equations started appearing in the 1980s in different contexts. Br\'ezin and Kazakov \cite{Brezin} first calculated the continuum limit of what is now known as $d$-$P_{\text{I}}$ in 1990 and thereby identified it as a discrete version of the first Painlev\'e equation. This initiated an exciting new area of research, from which a surprisingly rich theory of discrete Painlev\'e equations arose. Many other discrete Painlev\'e equations were found after Grammaticos et al. \cite{grammaticossing} formulated a discrete analog of the Painlev\'e property called the singularity confinement property. In particular, Ramani, Grammaticos and Hietarinta \cite{GRH91} discovered a $q$-discrete version of the third Painlev\'e equation. Later work by Jimbo and Sakai \cite{JS96} uncovered a $q$-discrete version of the sixth Painlev\'e equation and Ramani and Grammaticos \cite{firstqp6} also deduced another $q$-discrete version of the sixth Painlev\'e equation, which turned out to be $q$-$P(A_1)$. 

In 2001, Sakai \cite{Sakai} completed the list of discrete Painlev\'e equations, whilst classifying them in terms of groups of Cremona isometries on the Picard group of certain rational surfaces. The rational surface at the top of the list is of type $A_0^{(1)}$, while $A_1^{(1)}$ is the next highest.  $q$-$P(A_1)$ is associated with the latter rational surface. We remark that in the scheme given by Sakai, $q$-$P(A_1)$ is ranked higher than $q$-$P_{\text{VI}}$ and hence higher than the continuous Painlev\'e equations.

Jimbo and Sakai \cite{JS96} were the first to study the linear problem of $q$-$P_{\text{VI}}$ from Birkhoff's analytic point of view. Moreover, Grammaticos et al's studies of singularity confinement and special solutions could be described as analytic information about solutions. Ohyama \cite{Ohyama1,Ohyama} studied analytic properties of $q$-Painlev\'e equations,  classified all analytic solutions to the equations $q$-$P_{\text{VI}}$, $q$-$P_{\text{V}}$ and $q$-$P_{\text{III}}$ around the origin and solved the associated linear connection problems.
Mano \cite{Mano}, building on Ohyama's results, derived solutions to the equation $q$-$P_{\text{VI}}$ described by a broad range of asymptotics near the origin and likewise near infinity. He solved the connection problem for these, in analogy to Jimbo's work \cite{Jimbo} on Painlev\'e VI.

The plan of this paper is as follows. We start our investigation by considering holomorphic solutions of $q$-$P(A_1)$ at the origin in Section \ref{sectionholo}. In Section \ref{sectionlo}, we identify the leading order autonomous system \eqref{autonomoufg} of $q$-$P(A_1)$ as a QRT mapping of a special type.
Using the generic solution to the autonomous leading order system, we can construct a corresponding formal series solution in $t$ and $\phi$ to the entire $q$-$P(A_1)$ equation, as shown in Section \ref{sectiongeneral}. In Section \ref{sectiontruesol} we show how substituting analytic functions for the formal variables, the formal series solution becomes a true solution to $q$-$P(A_1)$ with asymptotics as described by equations \eqref{asymptotics}.

\section{Solutions of $q$-$P(A_1)$ holomorphic at the origin}
\label{sectionholo}
In this section we study holomorphic solutions of $q$-$P(A_1)$ at the origin. These solutions play a special role in the more general solution we derive later, as they correspond to constant solutions of the leading order autonomous system \eqref{losystemeq}. Classifying the holomorphic solutions is done using the power series method and using the $q$-Briot-Bouquet Theorem \ref{genbriottheoremsev} to prove convergence.
Ohyama \cite{Ohyama1,Ohyama} classified the meromorphic solutions of the discrete Painlev\'e equations $q$-$P_{\text{VI}}$, $q$-$P_{\text{V}}$ and $q$-$P_{\text{III}}$ around the origin in this fashion. As the $q$-$P(A_1)$ case is quite similar, we keep our discussion brief. We denote
\begin{equation*}
\mathbf{b}=(b_1,b_2,b_3,b_4,b_5,b_6,b_7,b_8),
\end{equation*}
and throughout this article we write the parameter space of $q$-$P(A_1)$ as
\begin{equation*}
\mathcal{B}=\{\mathbf{b}\in\mathbb{C}^8|\text{$b_i\neq 0$ for $1\leq i\leq 8$} \}.
\end{equation*}
Let us rewrite the $q$-$P(A_1)$ equation \eqref{qpa1} as
\begin{subequations}
\label{qpa1expand}
\begin{align}
&\begin{multlined}[b][.87\textwidth]
(gf-t^2)(g\overline{f}-qt^2)(g-b_5)(g-b_6)(g-b_7)(g-b_8)=\\
(gf-1)(g\overline{f}-1)(g-b_1  t)(g-b_2  t)(g-b_3  t)(g-b_4  t),
\end{multlined}\\
&\begin{multlined}[b][.87\textwidth]
(g\overline{f}-qt^2)(\overline{g}\overline{f}-q^2t^2)(\overline{f}-b_5^{-1})(\overline{f}-b_6^{-1})(\overline{f}-b_7^{-1})(\overline{f}-b_8^{-1})=\\
(g\overline{f}-1)(\overline{g}\overline{f}-1)(\overline{f}- b_1^{-1} qt)(\overline{f}- b_2^{-1}qt)(\overline{f}- b_3^{-1} qt)(\overline{f}-  b_4^{-1}qt).
\end{multlined}
\end{align}
\end{subequations}
Suppose we have a power series solution of $q$-$P(A_1)$ around $t=0$, say
\begin{align}
f(t)&=\sum_{n=0}^{\infty}{f_nt^n}, & g(t)&=\sum_{n=0}^{\infty}{g_nt^n}.
\end{align}
Evaluating equation \eqref{qpa1expand} at $t=0$ gives
\begin{subequations}
\label{constantterms}
\begin{align}
(f_0g_0-1)^2f_0^4&=f_0^2g_0^2(f_0-b_5^{-1})(f_0-b_6^{-1})(f_0-b_7^{-1})(f_0-b_8^{-1}),\\
(f_0g_0-1)^2g_0^4&=f_0^2g_0^2(g_0-b_5)(g_0-b_6)(g_0-b_7)(g_0-b_8).
\end{align}
\end{subequations}
Equation \eqref{constantterms} has several trivial solutions, given by $(f_0,g_0)=(0,0)$ and $(f_0,g_0)=(b_i^{-1},b_i)$ for $i=5,6,7,8$. Furthermore there are generally three nontrivial solutions, given by
\begin{subequations}
\label{orderzerocoef}
\begin{align}
\left(f_0^{(0,1)},g_0^{(0,1)}\right)&=\left(\frac{b_5b_6-b_7b_8}{b_5b_6(b_7+b_8)-b_7b_8(b_5+b_6)}, \frac{b_5b_6-b_7b_8}{b_5+b_6-(b_7+b_8)}\right),\label{constant1}\\
\left(f_0^{(0,2)},g_0^{(0,2)}\right)&=\left(\frac{b_6b_7-b_8b_5}{b_6b_7(b_8+b_5)-b_8b_5(b_6+b_7)}, \frac{b_6b_7-b_8b_5}{b_6+b_7-(b_8+b_5)}\right),\label{constant2}\\
\left(f_0^{(0,3)},g_0^{(0,3)}\right)&=\left(\frac{b_5b_7-b_6b_8}{b_5b_7(b_6+b_8)-b_6b_8(b_5+b_7)}, \frac{b_5b_7-b_6b_8}{b_5+b_7-(b_6+b_8)}\right).\label{constant3}
\end{align}
\end{subequations}
If $(f_0,g_0)=(0,0)$, then there are no terms $t^n$ with $n<4$ appearing in \eqref{qpa1expand}, equating the coefficients of $t^4$ in \eqref{qpa1expand} gives
\begin{subequations}
\label{degreeoneterm}
\begin{align}
(f_1g_1-1)^2b_1b_2b_3b_4&=(g_1-b_1)(g_1-b_2)(g_1-b_3)(g_1-b_4),\\
(f_1g_1-1)^2\frac{1}{b_1b_2b_3b_4}&=(f_1-b_1^{-1})(f_1-b_2^{-1})(f_1-b_3^{-1})(f_1-b_4^{-1}).
\end{align}
\end{subequations}
Equation \eqref{degreeoneterm} has several trivial solutions, given by $(f_1,g_1)=(0,0)$ and $(f_1,g_1)=(b_i^{-1},b_i)$ for $i=1,2,3,4$. Furthermore there are generally three nontrivial solutions, given by
\begin{subequations}
\label{orderonecoef}
\begin{align}
\left(f_1^{(1,1)},g_1^{(1,1)}\right)&=\left(\frac{b_1+b_2-(b_3+b_4)}{b_1b_2-b_3b_4},\frac{b_1b_2(b_3+b_4)-b_3b_4(b_1+b_2)}{b_1b_2-b_3b_4}\right),\label{degreeone1}\\
\left(f_1^{(1,2)},g_1^{(1,2)}\right)&=\left(\frac{b_2+b_3-(b_4+b_1)}{b_2b_3-b_4b_1},\frac{b_2b_3(b_4+b_1)-b_4b_1(b_2+b_3)}{b_2b_3-b_4b_1}\right),\label{degreeone2}\\
\left(f_1^{(1,3)},g_1^{(1,3)}\right)&=\left(\frac{b_1+b_3-(b_2+b_4)}{b_1b_3-b_2b_4},\frac{b_1b_3(b_2+b_4)-b_2b_4(b_1+b_3)}{b_1b_3-b_2b_4}\right).\label{degreeone3}
\end{align}
\end{subequations}
Each of the cases in equations \eqref{orderzerocoef} and \eqref{orderonecoef} generically determines an unique converging power series solution.
\begin{prop}
\label{merosolution0}
For $k\in\{1,2,3\}$, the $q$-$P(A_1)$ equation has an unique power series solution
\begin{align}
f^{(0,k)}(t)&=\sum_{n=0}^{\infty}{f_n^{(0,k)}t^n},&&g^{(0,k)}(t)=\sum_{n=0}^{\infty}{g_n^{(0,k)}t^n},\label{eq:holo0}
\end{align}
with $f_1^{(0,k)}$ and $g_1^{(0,k)}$ as defined in equation \eqref{orderzerocoef}, given that the following conditions are satisfied for the case $k=1$, $k=2$ and $k=3$ respectively,
\begin{align*}
\frac{b_5 b_6}{b_7 b_8}\notin q^\mathbb{Z}, & & b_5+b_6\neq b_7+b_8 & &\text{and} & & b_5^{-1}+b_6^{-1}\neq b_7^{-1}+b_8^{-1},\\
\frac{b_6 b_7}{b_5 b_8}\notin q^\mathbb{Z}, & & b_6+b_7\neq b_5+b_8 & &\text{and} & &  b_6^{-1}+b_7^{-1}\neq b_5^{-1}+b_8^{-1},\\
\frac{b_5 b_7}{b_6 b_8}\notin q^\mathbb{Z}, & & b_5+b_7\neq b_6+b_8 & &\text{and}& & b_5^{-1}+b_7^{-1}\neq b_6^{-1}+b_8^{-1}.
\end{align*}
Furthermore, for each of these power series solutions, if the first condition still holds when $q^\mathbb{Z}$ is replaced by its closure $\overline{q^\mathbb{Z}}$, then the series have a positive radius of convergence.
\end{prop}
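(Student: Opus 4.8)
The plan is to construct the coefficients $f_n^{(0,k)},g_n^{(0,k)}$ recursively from the polynomial form \eqref{qpa1expand} and then to establish convergence via the $q$-Briot--Bouquet Theorem \ref{genbriottheoremsev}. First I would substitute the formal series $f=\sum_{n\geq0}f_nt^n$, $g=\sum_{n\geq0}g_nt^n$ into \eqref{qpa1expand}. The coefficient of $t^0$ reproduces \eqref{constantterms}, one of whose nontrivial roots is the pair $(f_0^{(0,k)},g_0^{(0,k)})$ from \eqref{orderzerocoef}; here the hypotheses $b_5+b_6\neq b_7+b_8$ and $b_5^{-1}+b_6^{-1}\neq b_7^{-1}+b_8^{-1}$ (and their cyclic analogues for $k=2,3$) are precisely what is needed for the numerators and denominators in \eqref{orderzerocoef} to define a finite point $(f_0^{(0,k)},g_0^{(0,k)})\in\mathbb{C}^2$, while the third hypothesis (excluding the value $q^0=1$) keeps it distinct from the trivial roots. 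For each $N\geq1$, collecting the coefficient of $t^N$ in \eqref{qpa1expand} yields a $2\times2$ linear system
\[
M_N\begin{pmatrix}f_N\\ g_N\end{pmatrix}=R_N,
\]
where $R_N$ is a polynomial in $f_0,\dots,f_{N-1},g_0,\dots,g_{N-1},\mathbf{b},q$, while the matrix $M_N$ depends on $\mathbf{b},q,N$ only through $(f_0^{(0,k)},g_0^{(0,k)})$ and $q^N$.

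The heart of existence and uniqueness is the claim that $\det M_N\neq0$ for every $N\geq1$. I would compute this determinant and verify that, up to a nonzero factor, it equals $(q^Nb_7b_8-b_5b_6)(q^Nb_5b_6-b_7b_8)$ in the case $k=1$ (and the corresponding expression under the relevant permutation of $b_5,b_6,b_7,b_8$ when $k=2,3$); the conceptual reason is that the leading-order autonomous system is the special QRT map identified in Section \ref{sectionlo}, whose linearisation at the fixed point $(f_0^{(0,k)},g_0^{(0,k)})$ has multipliers $b_5b_6/(b_7b_8)$ and $b_7b_8/(b_5b_6)$, so that resonances occur exactly at those $N\geq1$ with $b_5b_6/(b_7b_8)\in\{q^N,q^{-N}\}$. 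Hence the hypothesis $b_5b_6/(b_7b_8)\notin q^{\mathbb{Z}}$ makes $M_N$ invertible for all $N\geq1$, the recursion then determines $(f_N,g_N)$ uniquely, and we obtain a unique formal power series solution \eqref{eq:holo0} with the prescribed leading term; the case $N=1$ is covered by the same determinant formula.

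For convergence I would return to \eqref{qpa1} itself and, near $t=0$ and near the fixed point, solve for $(\overline{f},\overline{g})$ in terms of $(f,g,t)$: the first equation of \eqref{qpa1} is linear-fractional in $\overline{f}$ and the second is then linear-fractional in $\overline{g}$, and the coefficients occurring are non-vanishing at the fixed point because of the leading-order structure, so the implicit function theorem yields a holomorphic first-order $q$-difference system $(\overline{f},\overline{g})=\Phi(t,f,g)$ with $\Phi(0,f_0^{(0,k)},g_0^{(0,k)})=(f_0^{(0,k)},g_0^{(0,k)})$ and with $D_{(f,g)}\Phi$ at that point again having eigenvalues $b_5b_6/(b_7b_8)$ and $b_7b_8/(b_5b_6)$. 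After bringing this system into the normal form demanded by Theorem \ref{genbriottheoremsev}, I would apply that theorem: provided neither eigenvalue lies in the closure $\overline{q^{\mathbb{Z}}}$ --- equivalently, provided the first hypothesis continues to hold with $q^{\mathbb{Z}}$ replaced by $\overline{q^{\mathbb{Z}}}$, the reciprocal eigenvalue lying in the closure if and only if the eigenvalue does, since $\overline{q^{\mathbb{Z}}}$ is stable under inversion --- the theorem furnishes a genuine holomorphic solution of $q$-$P(A_1)$ near $t=0$ taking the value $(f_0^{(0,k)},g_0^{(0,k)})$ at the origin. By the uniqueness of the formal expansion this solution has \eqref{eq:holo0} as its Taylor series, which therefore has a positive radius of convergence.

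The step I expect to be the main obstacle is the honest verification of the hypotheses of Theorem \ref{genbriottheoremsev}: one must check that the resolution for $(\overline{f},\overline{g})$ is valid on a full neighbourhood of the fixed point (non-vanishing of the relevant Jacobian, governed once more by the leading-order QRT structure) and, more delicately, perform whatever normalisation of the lower-order terms the theorem requires while keeping track of the linearisation's eigenvalues. Understanding why it is the closure $\overline{q^{\mathbb{Z}}}$, rather than $q^{\mathbb{Z}}$ itself, that controls convergence --- a distinction that is vacuous unless $|q|=1$ with $q$ not a root of unity, where $q^{\mathbb{Z}}$ is dense in the unit circle --- is the point needing genuine care; by contrast the determinant bookkeeping for $M_N$ and the coefficient recursion for $R_N$ are routine, if laborious.
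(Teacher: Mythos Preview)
Your proposal is correct and lands on exactly the same mechanism as the paper: the Jacobian of the time-one map at the fixed point $(f_0^{(0,k)},g_0^{(0,k)})$ has eigenvalues $b_5b_6/(b_7b_8)$ and $b_7b_8/(b_5b_6)$, and the $q$-Briot--Bouquet Theorem~\ref{genbriottheoremsev} then delivers both the formal uniqueness and the convergence.

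The only difference is one of economy. You split the argument into two halves --- first building the formal series by hand via the recursion $M_N(f_N,g_N)^T=R_N$, then invoking Theorem~\ref{genbriottheoremsev} for convergence --- whereas the paper simply rewrites $q$-$P(A_1)$ as $(\overline f,\overline g)=H(t,f,g)$, checks holomorphicity at $(0,f_0^{(0,k)},g_0^{(0,k)})$, computes the Jacobian explicitly, and applies Theorem~\ref{genbriottheoremsev} once to obtain both conclusions simultaneously. Your first half is essentially re-deriving the formal part of Theorem~\ref{genbriottheoremsev} in situ, so it is redundant rather than wrong. Also, your worry about ``whatever normalisation of the lower-order terms the theorem requires'' is unfounded: Theorem~\ref{genbriottheoremsev} as stated needs only $H(\mathbf 0,\mathbf Y)=\mathbf Y$ and the eigenvalue condition on the Jacobian, nothing more. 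One small inaccuracy: the QRT map in Section~\ref{sectionlo} governs the $t\ll f,g\ll 1$ regime and involves $b_1,\dots,b_4$; the structure relevant here (the $f,g\asymp 1$ regime with $b_5,\dots,b_8$) is its image under the B\"acklund transformation $\mathcal T_1$, not the map of Section~\ref{sectionlo} itself.
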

\begin{proof}
We discuss the case $k=1$. Note that we can rewrite $q$-$P(A_1)$ as
\begin{align}
\overline{f}=H_1(t,f,g),&& \overline{g}=H_2(t,f,g), \label{Hqpa1}
\end{align}
for some rational functions $H_1$ and $H_2$.\\
We apply the $q$-Briot-Bouquet theorem \ref{genbriottheoremsev} with $m=1$ and $n=2$ to this system, where $y_1=f$, $y_2=g$ and 
\begin{equation*}
\mathbf{Y}=(f_0,g_0)=\left(f_0^{(1)},g_0^{(1)}\right).
\end{equation*}
It is not hard to see that $H(t,f,g)$ is holomorphic at $(t,f,g)=(0,f_0,g_0)$ and $H(0,f_0,g_0)=(f_0,g_0)$, as this is essentially the calculation done to obtain the case \eqref{constant1}. We hence calculate
\begin{equation*}
\begin{pmatrix}
\frac{\partial H_1}{\partial f}(0,\mathbf{Y}) & \frac{\partial H_1}{\partial g}(0,\mathbf{Y})\\
\frac{\partial H_2}{\partial f}(0,\mathbf{Y}) & \frac{\partial H_2}{\partial g}(0,\mathbf{Y})
\end{pmatrix}=\begin{pmatrix}
-1 & -\frac{(b_5+b_6-b_7-b_8)(b_5 b_6+b_7 b_8)}{(b_5 b_6(b_7+b_8)-b_7b_8(b_5+b_6))^2}\\
\frac{(b_5b_6+b_7b_8)(b_5 b_6(b_7+b_8)-b_7b_8(b_5+b_6))^2}{b_5b_6b_7b_8(b_5+b_6-b_7-b_8)^2} & \frac{(b_5b_6+b_7b_8)^2}{b_5b_6b_7b_8}-1
\end{pmatrix}.
\end{equation*}
The eigenvalues of this matrix are equal to $\frac{b_5 b_6}{b_7 b_8}$ and $\frac{b_7 b_8}{b_5 b_6}$. Since $\frac{b_5 b_6}{b_7 b_8}\neq q^n$ for any $n\in\mathbb{Z}^*$, we can apply the $q$-Briot-Bouquet theorem \ref{genbriottheoremsev} to obtain the desired results.
\end{proof}
\begin{prop}
\label{merosolution1}
For $k\in\{1,2,3\}$, the $q$-$P(A_1)$ equation has an unique power series solution
\begin{align}
f^{(1,k)}(t)&=\sum_{n=1}^{\infty}{f_n^{(1,k)}t^n},&&g^{(1,k)}(t)=\sum_{n=1}^{\infty}{g_n^{(1,k)}t^n},\label{eq:holo1}
\end{align}
with $f_1^{(1,k)}$ and $g_1^{(1,k)}$ as defined in equation \eqref{orderonecoef}, given that the following conditions are satisfied for the case $k=1$, $k=2$ and $k=3$ respectively,
\begin{align*}
\frac{b_1 b_2}{b_3 b_4}\notin q^\mathbb{Z}, & & b_1+b_2\neq b_3+b_4 & &\text{and} & & b_1^{-1}+b_2^{-1}\neq b_3^{-1}+b_4^{-1},\\
\frac{b_2 b_3}{b_1 b_4}\notin q^\mathbb{Z}, & & b_2+b_3\neq b_1+b_4 & &\text{and} & &  b_2^{-1}+b_3^{-1}\neq b_1^{-1}+b_4^{-1},\\
\frac{b_1 b_3}{b_2 b_4}\notin q^\mathbb{Z}, & & b_1+b_3\neq b_2+b_4 & &\text{and}& & b_1^{-1}+b_3^{-1}\neq b_2^{-1}+b_4^{-1}.
\end{align*}
Furthermore, for each of these power series solutions, if the first condition still holds when $q^\mathbb{Z}$ is replaced by its closure $\overline{q^\mathbb{Z}}$, then the series have a positive radius of convergence.
\end{prop}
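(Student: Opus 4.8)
The plan is to follow the proof of Proposition~\ref{merosolution0}, the only genuinely new ingredient being a rescaling that absorbs the vanishing of $f$ and $g$ at the origin. Since a solution of the form \eqref{eq:holo1} has $f(0)=g(0)=0$, I would introduce new dependent variables $u,v$ by $f(t)=t\,u(t)$ and $g(t)=t\,v(t)$, so that $\overline f=qt\,\overline u$ and $\overline g=qt\,\overline v$; then a series \eqref{eq:holo1} corresponds exactly to a power series solution $u=\sum_{n\ge0}u_nt^n$, $v=\sum_{n\ge0}v_nt^n$ with prescribed constant terms $u_0=f_1^{(1,k)}$, $v_0=g_1^{(1,k)}$ given by \eqref{orderonecoef}.

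Substituting into \eqref{qpa1}, each factor of the form $gf-t^2$, $g\overline f-qt^2$, $\overline g\overline f-q^2t^2$, or a factor involving $b_i$ with $i\le4$, picks up a power of $t$, while the remaining denominator factors tend to nonzero constants. Cancelling a common $t^4$ from numerator and denominator of each component and invoking the constraint \eqref{eqconstraint}, one checks that at $t=0$ the first component is linear in $\overline u$ and the second linear in $\overline v$; solving for $\overline u,\overline v$ casts the system as $\overline u=K_1(t,u,v)$, $\overline v=K_2(t,u,v)$ with $K_1,K_2$ rational, and setting $t=0$ returns precisely equation \eqref{degreeoneterm}. Hence $K=(K_1,K_2)$ fixes $(f_1^{(1,k)},g_1^{(1,k)})$; that $K$ is holomorphic there reduces to $f_1^{(1,k)}\ne0$ and $g_1^{(1,k)}\ne0$ --- exactly what the second and third side conditions provide --- together with a short check that the apparent pole at $f_1^{(1,k)}g_1^{(1,k)}=1$ is removable (this equality forces two of $b_1,\dots,b_4$ to coincide, which makes the competing numerator vanish as well). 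The remaining condition $b_1b_2\ne b_3b_4$, which is part of $\tfrac{b_1b_2}{b_3b_4}\notin q^{\mathbb Z}$, is what makes $(f_1^{(1,k)},g_1^{(1,k)})$ well-defined in the first place.

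I would then apply the $q$-Briot--Bouquet theorem~\ref{genbriottheoremsev} with $m=1$, $n=2$, $y_1=u$, $y_2=v$ and $\mathbf Y=(f_1^{(1,k)},g_1^{(1,k)})$. The only input left is the Jacobian $\bigl(\partial K_i/\partial y_j\bigr)(0,\mathbf Y)$, whose computation is formally identical to the one in Proposition~\ref{merosolution0} upon relabelling $(b_5,b_6,b_7,b_8)\to(b_1,b_2,b_3,b_4)$, and should give eigenvalues $\tfrac{b_1b_2}{b_3b_4}$ and $\tfrac{b_3b_4}{b_1b_2}$ for $k=1$, with the cyclic analogues for $k=2,3$. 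Since $\tfrac{b_1b_2}{b_3b_4}\notin q^{\mathbb Z}$, neither eigenvalue equals $q^n$ for any $n\in\mathbb Z^*$, so Theorem~\ref{genbriottheoremsev} delivers a unique formal series $(u,v)$ with the prescribed constant terms; undoing $f=t\,u$, $g=t\,v$ then gives the asserted power series solution \eqref{eq:holo1}. If the non-resonance survives replacing $q^{\mathbb Z}$ by its closure $\overline{q^{\mathbb Z}}$, the same theorem yields a positive radius of convergence.

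The step I expect to cost the most care is the holomorphy check in the second paragraph: verifying that, after cancelling $t^4$ and carrying out the $0/0$ simplifications at the fixed point, $K_1$ and $K_2$ are genuinely holomorphic near $(0,f_1^{(1,k)},g_1^{(1,k)})$, and tracing each non-degeneracy to the correct side condition. Everything else --- the eigenvalue computation and the appeal to the $q$-Briot--Bouquet theorem --- is routine, being the $k$-th case of Proposition~\ref{merosolution0} in disguise.
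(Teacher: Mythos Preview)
Your approach is correct and is in fact the first of two methods the paper mentions: the authors explicitly note that one can ``apply the $q$-Briot-Bouquet Theorem~\ref{genbriottheoremsev} as done in the proof of Proposition~\ref{merosolution0}'', which is exactly your rescaling $f=tu$, $g=tv$ followed by the Jacobian computation. However, the paper then opts for a shorter route: apply the B\"acklund transformation $\mathcal{T}_1$ from \eqref{backlundt}, which sends $(f,g)\mapsto(t/f,\,t/g)$ and swaps the parameter groups $\{b_1,\dots,b_4\}\leftrightarrow\{b_5^{-1},\dots,b_8^{-1}\}$. Under $\mathcal{T}_1$, each holomorphic solution $(f^{(0,k)},g^{(0,k)})$ of Proposition~\ref{merosolution0}, which is nonvanishing at $t=0$, maps to a power series solution vanishing to first order at $t=0$, and the three side conditions on $b_5,\dots,b_8$ translate directly into the stated conditions on $b_1,\dots,b_4$.

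The trade-off is clear: your direct method is self-contained but asks you to redo the holomorphy verification and the Jacobian eigenvalue computation (your remark that it mirrors Proposition~\ref{merosolution0} under $(b_5,\dots,b_8)\to(b_1,\dots,b_4)$ is correct and is what makes this manageable), whereas the B\"acklund argument recycles Proposition~\ref{merosolution0} wholesale and sidesteps any new analytic checks, at the cost of invoking a symmetry of the equation. Your discussion of the potential pole at $f_1^{(1,k)}g_1^{(1,k)}=1$ is a little loose --- the cleaner observation is that \eqref{degreeoneterm} forces any such point to be one of the trivial fixed points $(b_i^{-1},b_i)$, hence distinct from the nontrivial $(f_1^{(1,k)},g_1^{(1,k)})$ for generic parameters --- but this does not affect the validity of your overall argument.
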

\begin{proof}
Note we can apply the $q$-Briot-Bouquet Theorem \ref{genbriottheoremsev} as done in the proof of Proposition \ref{merosolution0}.
However, for a more elegant proof, we make use of one of the many symmetries of $q$-$P(A_1)$. Indeed applying the B\"acklund transformation $\mathcal{T}_1$, defined in \eqref{backlundt}, to each of the solutions defined in Proposition \ref{merosolution0}, gives the desired results directly.
\end{proof}
By Remark \ref{remarkincorporatepar}, the solutions defined in Propositions \ref{merosolution0} and \ref{merosolution1} are also analytic in the parameters $\mathbf{b}$.
Furthermore meromorphic solutions at $t=\infty$ can be obtained by application of the B\"acklund transformation $\mathcal{T}_4$ \eqref{backlundt}
to each of these.

\section{The leading order autonomous system}
\label{sectionlo}
For complex functions $f$ and $g$ we write $f(t)\asymp g(t)$ as $t\rightarrow t_0$ if and only if $f(t)=\mathcal{O}(g(t))$ and $g(t)=\mathcal{O}(f(t))$ as $t\rightarrow t_0$. Note that the holomorphic solutions $(f,g)$ defined in Propositions \ref{merosolution0} and \ref{merosolution1} satisfy respectively $f,g\asymp 1$ and $f,g\asymp t$ as $t\rightarrow 0$.
We therefore consider, on a formal level, any of the following  $25$ combinations of asymptotic relations as $t\rightarrow 0$, for a solution $(f,g)$ of $q$-$P(A_1)$,
\begin{align*}
f&\ll t, &  f&\asymp t,& t&\ll f\ll 1,& f&\asymp 1 &\text{or}& &f&\gg 1;& &and\\
g&\ll t, &  g&\asymp t,& t&\ll g\ll 1,& g&\asymp 1 &\text{or}& &g&\gg 1. &&
\end{align*}
Using B\"acklund transformations $\mathcal{T}_1$ and $\mathcal{T}_3$ \eqref{backlundt}, we can reduce the number of individual cases to be studied to $9$. 
By some laborious comparison of dominant and subdominant terms in equations \eqref{qpa1expand}, it is possible to show that, for generic parameter values, the only $3$ consistent combinations are
\begin{align}
f,g\asymp t,&& t\ll f,g\ll 1,&& f,g\asymp 1. \label{eq:consistent}
\end{align}
Furthermore, there are $6$ combinations which are only conditionally consistent, given by
\begin{subequations}
	\label{eq:conditionalcomb}
\begin{align}
f,g&\ll t, & f\ll t &\text{ and } g\asymp t, & f\asymp t &\text{ and } g\ll t,\\
f,g&\gg 1, & f\gg 1 &\text{ and } g\asymp 1, & f\asymp 1 &\text{ and } g\gg 1.
\end{align}
\end{subequations}
For example, $f,g\ll t$ is only consistent if
\begin{align}\label{eq:condfgsmallert}
b_1+b_2+b_3+b_4=0&& \text{and} && b_1^{-1}+b_2^{-1}+b_3^{-1}+b_4^{-1}=0,
\end{align}
and $f\ll t$ with $g\asymp t$ is only consistent if
\begin{align}\label{eq:condfsmallertgt}
b_1+b_2=b_3+b_4,&& b_1+b_3=b_2+b_4 && \text{or} && b_1+b_4=b_2+b_3.
\end{align}
We give explicit examples of such cases in Section \ref{section:6special}.
The interested reader can find the conditions, of the other conditionally consistent combinations, using B\"acklund transformations $\mathcal{T}_1$ and $\mathcal{T}_3$ \eqref{backlundt}. The remaining combinations are inconsistent for all parameter values $\mathbf{b}\in\mathcal{B}$. From these considerations it easily follows that Propositions \ref{merosolution0} and \ref{merosolution1} list all meromorphic solutions of $q$-$P(A_1)$ for generic parameter values.

Let us focus on the case $t\ll f,g\ll 1$ in \eqref{eq:consistent}. We put $f=tf_1$, and $g=tg_1$, then $1\ll f_1,g_1\ll t^{-1}$ as $t\rightarrow 0$, and by substitution into equations \eqref{qpa1expand}, we obtain
\begin{subequations}
\label{autonomoufg}
\begin{align}
(g_1f_1-1)(g_1\overline{f}_1-1)&\sim \left(b_1^{-1}g_1-1\right)\left(b_2^{-1}g_1-1\right)\left(b_3^{-1}g_1-1\right)\left(b_4^{-1}g_1-1\right),\\
(g_1\overline{f}_1-1)(\overline{g}_1\overline{f}_1-1)&\sim \left(b_1\overline{f}_1-1\right)\left(b_2\overline{f}_1-1\right)\left(b_3\overline{f}_1-1\right)\left(b_4\overline{f}_1-1\right),
\end{align}
\end{subequations}
as $t\rightarrow 0$.\\
So up to leading order $f_1$ and $g_1$ satisfy an autonomous equation.

\subsection{Derivation generic solution of leading order system}
\label{subsectionlo}
Inspired by equations \eqref{autonomoufg}, we study the following autonomous system of equations,
\begin{subequations}
\label{losystemeq}
\begin{align}
(GF-1)(G\overline{F}-1)&= \left(b_1^{-1}G-1\right)\left(b_2^{-1}G-1\right)\left(b_3^{-1}G-1\right)\left(b_4^{-1}G-1\right),\\
(G\overline{F}-1)(\overline{G}\overline{F}-1)&= \left(b_1\overline{F}-1\right)\left(b_2\overline{F}-1\right)\left(b_3\overline{F}-1\right)\left(b_4\overline{F}-1\right),
\end{align}
\end{subequations}
which we identify as a QRT mapping \eqref{QRT} with
\begin{align*}
A_0=\begin{pmatrix}
0 & 0 & 1\\
0 & S_2^- & -S_1^-\\
S_4^- & -S_3^- & 0
\end{pmatrix}, & &
 A_1=\begin{pmatrix}
0 & 0 & 0\\
0 & 1 & 0\\
0 & 0 & -1
\end{pmatrix},
\end{align*}
where $S_i^\pm$ denotes the $i$th degree elementary symmetric polynomial in $b_1^{\pm 1}$, $b_2^{\pm 1}$, $b_3^{\pm 1}$ and $b_4^{\pm 1}$, that is,
\begin{equation*}
(z-b_1^{\pm 1})(z-b_2^{\pm 1})(z-b_3^{\pm 1})(z-b_4^{\pm 1})=z^4-S_1^\pm z^3+S_2^\pm z^2-S_3^\pm z+S_4^\pm.
\end{equation*}
Note that conditions \eqref{assumptions} are satisfied, which means we can apply the method to find the generic solution as described in Section \ref{qrtsolve}.
First of all, the invariant of \eqref{losystemeq} is given by
\begin{equation*}
I(F,G)=\frac{F^2+S_2^- FG+S_4^- G^2-S_1^- F-S_3^-G}{FG-1},
\end{equation*}
and we set $I(F,G)=P$.\\
The linear system \eqref{mcmillanlinear} becomes
\begin{align}
F+\overline{F}+(S_2^--P)G=S_1^-, & & G+\overline{G}+b_1b_2b_3b_4(S_2^--P)\overline{F}=S_1^+. \label{mcmillanFG}
\end{align}
If $b_1b_2b_3b_4(P-S_2^-)^2\neq 4$, there exists an equilibrium solution $\left(F_{eq},G_{eq}\right)$ to this system given by
\begin{subequations}
\label{FGeqsol}
\begin{align}
F_{eq}=\frac{S_1^+(P-S_2^-)+2S_1^-}{4-b_1b_2b_3b_4(P-S_2^-)^2}, && G_{eq}=\frac{S_1^-(P-S_2^-)+2S_1^+}{4-b_1b_2b_3b_4(P-S_2^-)^2}.
\end{align}
\end{subequations}
The special case $b_1b_2b_3b_4(P-S_2^-)^2=4$, requires a separate analysis, which we discuss in Section \ref{section:loga}.
The matrix $M$ \eqref{Mdefi} equals
\begin{equation*}
M=\begin{pmatrix}
-1 & P-S_2^-\\
-b_1b_2b_3b_4(P-S_2^-)& b_1b_2b_3b_4(P-S_2^-)^2-1
\end{pmatrix},
\end{equation*}
and its characteristic equation is given by
\begin{equation}\label{eq:Mcharacter}
|M-\lambda I|=\lambda^2+\left(2-b_1b_2b_3b_4 (P-S_2^-)^2\right)\lambda+1=0.
\end{equation}
At this stage, we consider $P$ as a formal variable satisfying $\overline{P}=P$, and as such, the characteristic equation of $M$ does not have a solution $\lambda\in\mathbb{C}(P)$. However we can rewrite \eqref{eq:Mcharacter} as
\begin{equation*}
b_1b_2b_3b_4 (P-S_2^-)^2=\lambda+2+\lambda^{-1}=\left(\lambda^{\tfrac{1}{2}}+\lambda^{-\tfrac{1}{2}}\right)^2,
\end{equation*}
which inspires us to reparameterise
\begin{equation}
\label{Pdefi}
P=\epsilon_0+\frac{\Lambda}{b_1b_2b_3b_4}+\Lambda^{-1},
\end{equation}
where $\overline{\Lambda}=\Lambda$, giving
\begin{equation*}
|M-\lambda I|=\left(\lambda-\frac{\Lambda^2}{b_1b_2b_3b_4}\right)\left(\lambda-\frac{b_1b_2b_3b_4}{\Lambda^2}\right).
\end{equation*}
We put $\lambda=\frac{\Lambda^2}{b_1b_2b_3b_4}$, and $M$ can be diagonalised as follows,
\begin{align*}
M=Q\begin{pmatrix}
\lambda & 0\\
0& \lambda^{-1}
\end{pmatrix} Q^{-1}, && Q=\begin{pmatrix}
1 & 1\\
\Lambda& \frac{b_1b_2b_3b_4}{\Lambda}
\end{pmatrix}.
\end{align*}
We introduce an independent variable $\phi$ characterised by $\overline{\phi}=\lambda \phi$, which allows us to write the general solution to the linear system \eqref{mcmillanFG} as
\begin{align}
F(\phi)=F_{\text{eq}}(\Lambda,\mathbf{b})+\phi+\mu\phi^{-1},&& G(\phi)=G_{\text{eq}}(\Lambda,\mathbf{b})+\Lambda\phi+\frac{b_1b_2b_3b_4}{\Lambda}\mu\phi^{-1}, \label{FGdefi}
\end{align}
where $\mu$ is an arbitrary periodic constant, that is, $\overline{\mu}=\mu$, and by substituting identity \eqref{Pdefi} into equations \eqref{FGeqsol},
\begin{align*}
F_{\text{eq}}(\Lambda,\mathbf{b})&=-\frac{b_1b_2b_3b_4\Lambda\left(S_1^++2S_1^-\Lambda+S_3^-\Lambda^2\right)}{(b_1b_2b_3b_4-\Lambda^2)^2},\\
G_{\text{eq}}(\Lambda,\mathbf{b})&=-\frac{b_1b_2b_3b_4\Lambda\left(S_3^++2S_1^+\Lambda+S_1^-\Lambda^2\right)}{(b_1b_2b_3b_4-\Lambda^2)^2}.
\end{align*}
It is easy to see, that for $F$ and $G$ as defined in equation \eqref{FGdefi}, the identity $I(F,G)=P$, is equivalent to
\begin{equation}
\label{mudefi}
\mu=\mu(\Lambda,\mathbf{b}):=\frac{\Lambda(\Lambda+b_1b_2)(\Lambda+b_1b_3)(\Lambda+b_1b_4)(\Lambda+b_2b_3)(\Lambda+b_2b_4)(\Lambda+b_3b_4)}{\left(b_1b_2b_3b_4-\Lambda^2\right)^4}.
\end{equation}
So $F$ and $G$ as defined in equation \eqref{FGdefi}, with $\mu=\mu(\Lambda,\mathbf{b})$ as defined above, satisfy equations \eqref{mcmillanFG} and \eqref{inv1}. Hence, by Lemma \ref{lemqrtmclillan},
\begin{align}
F(\phi)=\phi+F_{\text{eq}}(\Lambda,\mathbf{b})+\mu(\Lambda,\mathbf{b})\phi^{-1},&& G(\phi)=\Lambda\phi+G_{\text{eq}}(\Lambda,\mathbf{b})+\frac{b_1b_2b_3b_4}{\Lambda}\mu(\Lambda,\mathbf{b})\phi^{-1}, \label{generalsolaut}
\end{align}
defines a solution to the QRT mapping \eqref{losystemeq}, for all $\Lambda$ and $\phi$ satisfying
\begin{align}
\overline{\Lambda}=\Lambda,&& \overline{\phi}=\lambda \phi,&& \lambda=\frac{\Lambda^2}{b_1b_2b_3b_4}.\label{indvariabledefi}
\end{align}

\subsection{Six special families of solutions}
\label{section:special6}
Note that the autonomous system \eqref{losystemeq} reduces to the system of algebraic equations \eqref{degreeoneterm} if we assume $\overline{F}_1=F_1$ and $\overline{G}_1=G_1$. In particular, equations \eqref{orderonecoef} give three constant solutions to system \eqref{losystemeq}. In this section we see that any of these constant solutions has two associated $1$-parameter families of solutions of  \eqref{losystemeq}. We denote the roots of $\mu(\Lambda,\mathbf{b})$ by $\Lambda=\Lambda_k^\pm$ ($k=1,2,3$), where
\begin{align*}
\Lambda_1^{+}&=-b_1b_2,&\Lambda_2^{+}&=-b_2b_3,& \Lambda_3^{+}&=-b_1b_3,\\
\Lambda_1^{-}&=-b_3b_4,& \Lambda_2^{-}&=-b_1b_4,& \Lambda_3^{-}&=-b_2b_4.
\end{align*}
Let $k\in\{1,2,3\}$, then we have
\begin{align}
F_{\text{eq}}\left(\Lambda_k^{\pm},\mathbf{b}\right)=f_1^{(1,k)},&& G_{\text{eq}}\left(\Lambda_k^{\pm},\mathbf{b}\right)=g_1^{(1,k)}, \label{constantsolaut}
\end{align}
where the $f_1^{(1,k)}$ and $g_1^{(1,k)}$, as defined in \eqref{orderonecoef}, denote a constant solutions of \eqref{losystemeq}.\\
Associated we find two special $1$-parameter families of solutions, by setting $\Lambda=\Lambda_k^\pm$ in \eqref{generalsolaut}, given by
\begin{align} \label{eq:6specialfamilies}
F_k^\pm(\phi):=\phi+f_1^{(1,k)},&& G_k^\pm(\phi):=\Lambda_k^\pm\phi+g_1^{(1,k)}, && \overline{\phi}=\lambda_k^{\pm 1}\phi,
\end{align}
where
\begin{align}
\lambda_1=\frac{b_1b_2}{b_3b_4}, &&\lambda_2=\frac{b_2b_3}{b_1b_4}, && \lambda_3=\frac{b_1b_3}{b_2b_4}.
\end{align}
Note that for the particular choice $\phi=0$, the families $\left(F_k^+(\phi),G_k^+(\phi)\right)$ and $\left(F_k^-(\phi),G_k^-(\phi)\right)$  coincide with the constant solution $\left(f_1^{(1,k)},g_1^{(1,k)}\right)$.

\subsection{Exceptional logarithmic-type solutions}
\label{section:loga}
We consider the remaining case 
\begin{equation*}
b_1b_2b_3b_4(P-S_2^-)^2=4,
\end{equation*} for the linear system \eqref{mcmillanFG}. Note that the equilibrium solution \eqref{FGeqsol} no longer exists and we show that this case gives rise to logarithmic-type solutions. We write $r_\pm=\pm \sqrt{b_1b_2b_3b_4}$ and assume
\begin{equation} \label{eq:Pcondition}
P=S_2^-+\frac{2}{r_\pm}.
\end{equation}
The system of equations \eqref{mcmillanFG} becomes
	\begin{align}
	\overline{F}=-F+\frac{2}{r_\pm} G+S_1^-,&&
	\overline{G}=-2r_\pm F+3G+2r_\pm S_1^- +S_1^+.\label{FGloglin}
	\end{align}
We write $V=G-r_\pm F$, then
\begin{equation}\label{eq:V}
\overline{V}=V+S_1^++r_\pm S_1^-,
\end{equation}
and we therefore introduce a formal variable $\chi$ which satisfies 
\begin{equation}
\label{chieq}
\overline{\chi}=\chi+1,
\end{equation}
and set
\begin{equation*}
V=\left(S_1^++r_\pm S_-\right)\chi.
\end{equation*}
This allows the first equation in \eqref{FGloglin} to be rewritten as
\begin{equation*}
\overline{F}=F+2\left(\frac{1}{r_\pm}S_1^++S_-\right)\chi+S_1^-,
\end{equation*}
which gives
\begin{equation}
\label{Flogdefi}
F(\chi)=F_0-\frac{1}{r_\pm}S_1^+\chi+\left(\frac{1}{r_\pm}S_1^++S_1^-\right)\chi^2,
\end{equation}
for some $F_0$ with $\overline{F}_0=F_0$.\\
As $V=G-r_\pm F$, we obtain a corresponding expression for $G$,
\begin{equation}
\label{Glogdefi}
G(\chi)=r_\pm F_0+r_\pm S_1^-\chi+\left(S_1^++r_\pm S_1^-\right)\chi^2.
\end{equation}
Upon substitution of \eqref{Flogdefi} and \eqref{Glogdefi} into the identity $I(F,G)=P$, or equivalently into the leading order autonomous system \eqref{losystemeq}, we find
\begin{equation*}
F_0=\frac{2+r_\pm S_2^-}{S_1^++r_\pm S_1^-}.
\end{equation*}
We conclude that the general solution of the leading order autonomous system \eqref{losystemeq}, subject to \eqref{eq:Pcondition}, is given by
\begin{subequations}
	\label{eq:logsol}
\begin{align}
F_l^\pm(\chi)&=\frac{2+r_\pm S_2^-}{S_1^++r_\pm S_1^-}-\frac{1}{r_\pm}S_1^+\chi+\left(\frac{1}{r_\pm}S_1^++S_1^-\right)\chi^2,\\
G_l^\pm(\chi)&=\frac{2r_\pm+S_2^+}{S_1^++r_\pm S_1^-}+r_\pm S_1^-\chi+\left(S_1^++r_\pm S_1^-\right)\chi^2,
\end{align}
\end{subequations}
where $\chi$ free satisfying \eqref{chieq}.\\
The subscripts `l' stand for logarithmic-type, as the time evolution of $\chi$, equation \eqref{chieq}, is characteristic for $\log_q(t)$  when interpreted as a $q$-difference equation in $t$.
Note that we used $S_1^++r_\pm S_1^-\neq 0$ in the above derivation, we leave the degenerate case $S_1^++ r_\pm S_1^-=0$ to the interested reader.

\begin{remark}
	We would like to note that the classification of solutions of \eqref{losystemeq} is now complete. That is, given any   initial data $(F_0,G_0)\in\mathbb{C}^2$ satisfying regularity condition $F_0\cdot G_0\neq 0,1$. Let $F_{n+1}=\overline{F}_n$ and $G_{n+1}=\overline{G}_n$ be defined recursively by \eqref{losystemeq} for $n\in\mathbb{Z}$. Then $(F_n,G_n)_{n\in\mathbb{Z}}$ is captured by \eqref{generalsolaut} or \eqref{eq:logsol}. Indeed let $P=I(F_0,G_0)$, and assume $b_1b_2b_3b_4(P-S_2^-)^2\neq 4$, then \eqref{Pdefi} has two distinct solutions $\Lambda=\Lambda_1,\Lambda_2$, which are related by $\Lambda_1\Lambda_2=b_1b_2b_3b_4$. Hence $\mu(\Lambda_1)=0$ iff $\mu(\Lambda_2)=0$. Assume $\mu(\Lambda_1)\neq 0$, then the overdetermined system
	\begin{align*}
	F_0=\phi_0+F_{\text{eq}}(\Lambda_1,\mathbf{b})+\mu(\Lambda_1,\mathbf{b})\phi_0^{-1},&& G_0=\Lambda_1\phi_0+G_{\text{eq}}(\Lambda_1,\mathbf{b})+\frac{b_1b_2b_3b_4}{\Lambda_1}\mu(\Lambda_1,\mathbf{b})\phi_0^{-1},
	\end{align*}
	has an unique solution $\phi_0\in\mathbb{C}^*$, and we obtain, for all $n\in\mathbb{Z}$,
	\begin{align*}
	F_n=\phi_n+F_{\text{eq}}(\Lambda_1,\mathbf{b})+\mu(\Lambda_1,\mathbf{b})\phi_n^{-1},&& G_n=\Lambda_1\phi_n+G_{\text{eq}}(\Lambda_1,\mathbf{b})+\frac{b_1b_2b_3b_4}{\Lambda_1}\mu(\Lambda_1,\mathbf{b})\phi_n^{-1},
	\end{align*}
	where $\phi_n=\phi_0 \lambda_1^n$ with $\lambda_1=\Lambda_1^2/(b_1b_2b_3b_4)$.
	Of course the choice $\Lambda=\Lambda_2$ would have led to the same result. This, however, is no longer the case when $\mu(\Lambda_1)=0$. We leave it to the reader to work through these degenerate cases as well as the logarithmic one,  $b_1b_2b_3b_4(P-S_2^-)^2=4$.
\end{remark}

\section{The formal series solution}
\label{sectiongeneral}
In equations \eqref{autonomoufg} we saw that the leading order behaviour of the solutions $f=tf_1$ and $g=tg_1$ is described by the autonomous system \eqref{losystemeq}. Furthermore we found the general solution to this autonomous system in the previous section. We therefore consider the following formal solution ansatz for $q$-$P(A_1)$,
\begin{align*}
f=\sum_{i=1}^\infty F_i t^i, && g=\sum_{i=1}^\infty G_i t^i.
\end{align*}
This approach reduces to the power series method if we assume that the $F_i$ and $G_i$ are plain complex numbers. However for now we work with these coefficients on a formal level, for example,
\begin{align*}
\overline{f}=\sum_{i=1}^\infty q^i\overline{F}_i t^i, && \overline{g}=\sum_{i=1}^\infty q^i\overline{G}_i t^i.
\end{align*}
We substitute these formal series into equations \eqref{qpa1expand} and compare coefficients of $t$ order by order. First of all, note that no terms $t^n$ with $n<4$ occur in equations \eqref{qpa1expand}. By comparing the coefficients of $t^4$ in equations \eqref{qpa1expand} we obtain the autonomous system \eqref{losystemeq} with $F=F_1$ and $G=G_1$. As to the higher order coefficients, for $n>1$, by comparing the coefficients of $t^{n+3}$ in equations \eqref{qpa1expand}, we obtain
\begin{subequations}
\label{recursiongencoef}
\begin{align}
&\begin{multlined}[b][.7\textwidth]
G_1(G_1F_1-1)q^n\overline{F}_n+qG_1(G_1\overline{F}_1-1)F_n+
q\left(2G_1F_1\overline{F}_1-F_1-\overline{F}_1\right)G_n=\\
\frac{q}{b_1b_2b_3b_4}Q^{(1)}(G_1)G_n+R_n^{(1)}\left[(F_i)_{1\leq i<n},(\overline{F}_i)_{1\leq i<n},(G_i)_{1\leq i<n}\right],
\end{multlined}\\
&\begin{multlined}[b][.7\textwidth]
q\overline{F}_1\left(\overline{F}_1\overline{G}_1-1\right)G_n+\overline{F}_1\left(G_1\overline{F}_1-1\right)q^n \overline{G}_n+
\left(2F_1G_1\overline{G}_1-G_1-\overline{G}_1\right)q^n\overline{F}_n=\\
b_1b_2b_3b_4Q^{(-1)}\left(\overline{F}_1\right)q^n\overline{F}_n+R_n^{(2)}\left[(\overline{F}_i)_{1\leq i<n},(G_i)_{1\leq i<n},(\overline{G}_i)_{1\leq i<n}\right],
\end{multlined}
\end{align}
\end{subequations}
for certain $R_n^{(1)}$ and $R_n^{(2)}$ which are polynomial with respect to their inputs, where the polynomials $Q^{(i)}(z)$ are defined by
\begin{equation*}
Q^{(i)}(z)=\frac{d}{dx}\left[\left(x-b_1^i\right)\left(x-b_2^i\right)\left(x-b_3^i\right)\left(x-b_4^i\right)\right],
\end{equation*}
for $i\in\{1,-1\}$.\\
Note that these equations are linear autonomous equations with respect to $F_n$ and $G_n$. It is straightforward to obtain explicit expressions for $R_n^{(1)}$ and $R_n^{(2)}$, these are however rather lengthy, which is why we omit them. As an example, $R_2^{(1)}$ and $R_2^{(2)}$ are given by
\begin{align*}
R_2^{(1)}\left(F_1,\overline{F}_1,G_1\right)&=\left(b_5^{-1}+b_6^{-1}+b_7^{-1}+b_8^{-1}\right)qG_1(F_1G_1-1)\left(\overline{F}_1G_1-1\right),\\
R_2^{(2)}\left(\overline{F}_1,G_1,\overline{G}_1\right)&=\left(b_5+b_6+b_7+b_8\right)q^2\overline{F}_1\left(\overline{F}_1G_1-1\right)\left(\overline{F}_1\overline{G}_1-1\right).
\end{align*}
Furthermore the polynomials $R_n^{(1)}$ and $R_n^{(2)}$ are of degree at most $n+3$ with respect to the weighted gradation $\deg_w$ on $\mathbb{C}\left[\cup_{i=1}^\infty\{F_i,\overline{F}_i,G_i,\overline{G}_i\}\right]$, which is uniquely defined by its values on the generators of this polynomial ring, as
\begin{align*}
\deg_w{F_i}=\deg_w{\overline{F}_i}=\deg_w{G_i}=\deg_w{\overline{G}_i}=i,
\end{align*}
for $i\in\mathbb{N}^*$.\\
The importance of this observation becomes clear when we substitute the generic solution \eqref{generalsolaut} to equations \eqref{losystemeq} for $F_1$ and $G_1$.
Indeed, if we set $F_1=F_1(\phi)=F(\phi)$ and $G_1=G_1(\phi)=G(\phi)$ as defined in equations \eqref{generalsolaut}, then $F_1(\phi)$ and $G_1(\phi)$ are Laurent polynomials in $\phi$ of degree $1$ in both $\phi$ and $\phi^{-1}$.
Hence the right-hand sides of equations \eqref{recursiongencoef} for $n=2$, are Laurent polynomials in $\phi$ of at most degree $n+3=5$ in both $\phi$ and $\phi^{-1}$, which shows that the system of equations \eqref{recursiongencoef} for $n=2$ possibly has a solution $\left(F_2(\phi),G_2(\phi)\right)$, such that $F_2(\phi)$ and $G_2(\phi)$  are Laurent polynomials in $\phi$ of at most degree $2$ in both $\phi$ and $\phi^{-1}$. Indeed a lengthy calculation confirms this.
More generally, we conjecture that there is an unique solution $\left((F_n(\phi))_{n=1}^\infty,(G_n(\phi))_{n=1}^\infty\right)$ to equations \eqref{recursiongencoef} with $F_1(\phi)=F(\phi)$ and $G_1(\phi)=G(\phi)$ as above, such that $F_n(\phi)$ and $G_n(\phi)$ are Laurent polynomials in $\phi$ of at most degree $n$ in both $\phi$ and $\phi^{-1}$. An equivalent formulation of this statement is given in Conjecture \ref{conjecture}. This however seems difficult to prove directly and we hence prove a weaker version, which states that there is an unique solution where the coefficients $F_n(\phi)$ and $G_n(\phi)$ are Laurent series in $\phi$ with highest order term of degree less or equal to $n$, for $n\in\mathbb{N}^*$. 

\begin{thm}
\label{thmgeneralsol0+}
There exists an unique formal series solution to $q$-$P(A_1)$ of the form
\begin{align}
f^{0,+}(t,\phi;\Lambda,\mathbf{b})=\sum_{n=1}^\infty{F_n^{0,+}(\phi;\Lambda,\mathbf{b})t^n}, && g^{0,+}(t,\phi;\Lambda,\mathbf{b})=\sum_{n=1}^\infty{G_n^{0,+}(\phi;\Lambda,\mathbf{b})t^n},\label{powerexpansionsol0+}
\end{align}
with, for $n\in\mathbb{N}^*$,
\begin{align}
F_n^{0,+}(\phi;\Lambda,\mathbf{b})=\sum_{i=-\infty}^n{F_{n,i}^{0,+}(\Lambda,\mathbf{b})\phi^i}, && G_n^{0,+}(\phi;\Lambda,\mathbf{b})=\sum_{i=-\infty}^n{G_{n,i}^{0,+}(\Lambda,\mathbf{b})\phi^i},\label{coefpowerexpansionsol0+}
\end{align}
where $F_{1,1}^{0,+}(\Lambda,\mathbf{b})=1$, $G_{1,1}^{0,+}(\Lambda,\mathbf{b})=\Lambda$ and $\Lambda$ and $\phi$ satisfy equations \eqref{indvariabledefi}, with
\begin{align*}
q=q(\mathbf{b})=\frac{b_1b_2b_3b_4}{b_5b_6b_7b_8},&&\lambda=\lambda(\Lambda,\mathbf{b})=\frac{\Lambda^2}{b_1b_2b_3b_4}.
\end{align*}
For $n\in\mathbb{N}^*$ and $i\in\mathbb{Z}_{\leq n}$, the coefficients $F_{n,i}^{0,+}(\Lambda,\mathbf{b})$ and $G_{n,i}^{0,+}(\Lambda,\mathbf{b})$ are rational functions in their inputs, which are regular at points $(\Lambda,\mathbf{b})\in\mathbb{C}^*\times\mathcal{B}$ such that
\begin{equation}
1\notin Q:=\{q_1^mq_2^n:(m,n)\in\mathbb{N}^2\setminus\{(0,0)\}\}, \label{Qcondition}
\end{equation}
where $q_1=q_1(\mathbf{b},\Lambda)=q\lambda$ and $q_2=q_2(\mathbf{b},\Lambda)=\lambda^{-1}$.\\
Furthermore, for fixed $\mathbf{b}\in\mathcal{B}$ with $|q|<1$, for any $\Lambda\in L_0(\mathbf{b})$, where
\begin{equation}\label{eq:L0defi}
L_0(\mathbf{b}):=\{x\in\mathbb{C}^*: |b_1b_2b_3b_4|<|x|^2<|b_5b_6b_7b_8|\},
\end{equation}
condition \eqref{Qcondition} is satisfied and this formal solution, written in terms of the variables 
$\zeta_1=t\phi$ and $\zeta_2=\phi^{-1}$,
\begin{subequations}
\label{fgzetavar}
\begin{align}
f^{0,+}(\zeta_1\zeta_2,\zeta_2^{-1};\Lambda,\mathbf{b})&=\sum_{n=1}^\infty{\sum_{m=0}^\infty{F_{n,n-m}^{0,+}(\Lambda,\mathbf{b})\zeta_1^n\zeta_2^m}},\\ g^{0,+}(\zeta_1\zeta_2,\zeta_2^{-1};\Lambda,\mathbf{b})&=\sum_{n=1}^\infty{\sum_{m=0}^\infty{G_{n,n-m}^{0,+}(\Lambda,\mathbf{b})\zeta_1^n\zeta_2^m}},
\end{align}
\end{subequations}
converges near $(\zeta_1,\zeta_2)=(0,0)$.\\
In fact, these expansions also depend holomorphically on $\Lambda$. That is, for any $L\subseteq L_0(\mathbf{b})$ open with $\overline{L}\subseteq L_0(\mathbf{b})$, there is an open environment $Z\subseteq \mathbb{C}^2$ of $\mathbf{0}$, such that the series \eqref{fgzetavar} converge uniformly on $Z\times L$, defining holomorphic functions on this set in $\left(\bm{\zeta},\Lambda\right)$.
\end{thm}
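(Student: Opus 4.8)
The plan is to prove existence and uniqueness of the formal coefficients by solving the linear recursion \eqref{recursiongencoef} order by order in the ring of Laurent series in $\phi$ that are bounded above in degree, and then to obtain convergence by a majorant estimate in the variables $\zeta_1 = t\phi$ and $\zeta_2 = \phi^{-1}$. First I would analyse the homogeneous part of the linear system \eqref{recursiongencoef}: for fixed $n$, substituting $F_1=F(\phi)$, $G_1=G(\phi)$ from \eqref{generalsolaut} and writing $F_n=\sum_i F_{n,i}\phi^i$, $G_n=\sum_i G_{n,i}\phi^i$, each monomial $\phi^i$ decouples (because $\overline{\phi}=\lambda\phi$ turns $\overline{F}_n$ into $\sum_i \lambda^i F_{n,i}\phi^i$) into a $2\times 2$ linear system for $(F_{n,i},G_{n,i})$. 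The determinant of this system is a Laurent polynomial in $\lambda$ and $q$; a direct computation (parallel to the one producing the matrix $M$ and its eigenvalues $\lambda^{\pm1}$ in Section \ref{sectionlo}) shows it vanishes precisely when $q^a\lambda^b=1$ for certain small nonnegative $a,b$ tied to the index $i$ and order $n$. Tracking exactly which pairs $(a,b)$ occur, one checks that solvability for all $n$ and all $i\le n$ is guaranteed by condition \eqref{Qcondition}, i.e. $1\notin Q$. This simultaneously gives uniqueness (the homogeneous system has only the zero solution) and shows the $F_{n,i},G_{n,i}$ are rational in $(\Lambda,\mathbf b)$, regular where \eqref{Qcondition} holds, with the normalisation $F_{1,1}=1$, $G_{1,1}=\Lambda$ coming from \eqref{generalsolaut}.

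Next I would establish the degree bound $i\le n$, i.e. that $F_n,G_n$ have no $\phi^i$ terms with $i>n$. This follows by induction from the weighted-degree statement already recorded in the text: $R_n^{(1)},R_n^{(2)}$ have $\deg_w \le n+3$, and since each $F_j,G_j$ ($j<n$) is by the inductive hypothesis a Laurent series in $\phi$ with top degree $\le j$, and $\deg_w$ assigns weight $j$ to $F_j,G_j$, the right-hand sides of \eqref{recursiongencoef} are Laurent series in $\phi$ with top degree $\le n$; the decoupled $2\times2$ systems then produce $F_{n,i}=G_{n,i}=0$ for $i>n$ as long as the relevant determinants are nonzero, which is again ensured by \eqref{Qcondition}. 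One must also check the bookkeeping that the $t^{n+3}$ coefficient really does contain only $F_j,G_j$ with $j\le n$ on the right — this is visible in \eqref{recursiongencoef} and in the explicit $R_2^{(1)},R_2^{(2)}$.

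For the convergence claim I would pass to $\zeta_1=t\phi$, $\zeta_2=\phi^{-1}$: the degree bound means $f^{0,+}$ and $g^{0,+}$ become honest power series $\sum_{n\ge1}\sum_{m\ge0}F_{n,n-m}^{0,+}\zeta_1^n\zeta_2^m$ as displayed in \eqref{fgzetavar}, so there is something to converge. I would recast $q$-$P(A_1)$ directly as a fixed-point equation for these double power series: rewriting \eqref{qpa1expand} in the variables $\zeta_1,\zeta_2$ (using that multiplication by $t$ is $\zeta_1\partial$-type and $\overline{\phantom{x}}$ acts diagonally, scaling $\zeta_1^n\zeta_2^m$ by $q^n\lambda^{n-m}=q_1^{\,?}q_2^{\,?}$), the equation takes the form $\bm{u}=\mathcal N(\bm{u})$ where $\mathcal N$ raises total order in $(\zeta_1,\zeta_2)$ by at least one and has coefficients controlled by the "small divisors" $1-q_1^mq_2^n$. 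On the domain $\Lambda\in L_0(\mathbf b)$ one has $|q_1|<1$ and $|q_2|<1$ (this is exactly why $L_0$ is defined by $|b_1b_2b_3b_4|<|\Lambda|^2<|b_5b_6b_7b_8|$, since $q_1=\Lambda^2/(b_5b_6b_7b_8)$ and $q_2=b_1b_2b_3b_4/\Lambda^2$), so $\inf_{(m,n)\neq(0,0)}|1-q_1^mq_2^n|>0$ and \eqref{Qcondition} holds automatically; a standard majorant/contraction argument on a Banach space of power series with norm $\|\sum c_{nm}\zeta_1^n\zeta_2^m\|=\sum |c_{nm}| \rho^{n+m}$, for $\rho$ small enough, then yields a convergent solution, which by the uniqueness from the recursion must coincide with $f^{0,+},g^{0,+}$. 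Finally, holomorphic dependence on $\Lambda$: for $L\Subset L_0(\mathbf b)$ all the small-divisor bounds and the coefficients $q_1(\Lambda,\mathbf b),q_2(\Lambda,\mathbf b),\mu(\Lambda,\mathbf b),F_{\mathrm{eq}},G_{\mathrm{eq}}$ are holomorphic and uniformly bounded on $\overline L$, so the contraction can be run uniformly in $\Lambda$ on a common polydisc $Z$; uniform limits of holomorphic functions are holomorphic, giving the joint holomorphy in $(\bm\zeta,\Lambda)$ on $Z\times L$.

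The main obstacle I anticipate is the small-divisor bookkeeping in the first paragraph: one has to pin down precisely which monomials $q_1^mq_2^n$ appear as determinants of the decoupled $2\times2$ systems at each order $n$ and index $i$, verify that the set of such monomials is exactly (a subset of) $\{q_1^mq_2^n:(m,n)\in\mathbb N^2\setminus\{(0,0)\}\}$, and confirm that no determinant degenerates for $(m,n)=(0,0)$ — this last point is where the nondegeneracy hypotheses behind \eqref{generalsolaut} (namely $b_1b_2b_3b_4(P-S_2^-)^2\neq 4$, equivalently $\Lambda^2\neq b_1b_2b_3b_4$) and the normalisation $F_{1,1}=1$ enter. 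Everything else is either a routine induction or a textbook majorant estimate.
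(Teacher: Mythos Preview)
Your overall strategy is workable and close in spirit to the paper's, but there is one concrete gap in the first paragraph.

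The claim that ``each monomial $\phi^i$ decouples \ldots into a $2\times 2$ linear system for $(F_{n,i},G_{n,i})$'' is not correct. The coefficients multiplying $F_n,\overline F_n,G_n,\overline G_n$ in \eqref{recursiongencoef} are not scalars: they are Laurent polynomials in $\phi$ of degree $\pm 3$, built from $F_1(\phi),G_1(\phi),\overline F_1(\phi),\overline G_1(\phi)$ (each of $\phi$-degree $\pm 1$). So comparing $\phi^j$-coefficients in \eqref{recursiongencoef} couples $F_{n,i},G_{n,i}$ for all $i$ in the window $j-3\le i\le j+3$. What you actually obtain is an \emph{upper-triangular} system in decreasing $i$: at $j=n+3$ only $(F_{n,n},G_{n,n})$ appear; at $j=n+2$ you solve for $(F_{n,n-1},G_{n,n-1})$ with the previous pair known; and so on. The diagonal $2\times 2$ block at step $i$ (coming from the leading $\phi^3$-terms of the coefficients) has determinant proportional to $(q^n\lambda^i-q\lambda)^2=q_1^2\bigl(q_1^{\,n-1}q_2^{\,n-i}-1\bigr)^2$, so invertibility for $n\ge 2$, $i\le n$ is exactly condition \eqref{Qcondition}. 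With this correction your existence/uniqueness/rationality argument goes through, and your degree-bound and majorant arguments are fine.

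The paper takes a different and considerably shorter route that bypasses the recursion analysis entirely. It substitutes $\zeta_1=t\phi$, $\zeta_2=\phi^{-1}$, $y_1=f/\zeta_1-1$, $y_2=g/\zeta_1-\Lambda$, which turns $q$-$P(A_1)$ into a two-variable system $\mathbf y(q_1\zeta_1,q_2\zeta_2)=H(\bm\zeta,\mathbf y)$ with $H(\mathbf 0,\mathbf 0)=\mathbf 0$ and Jacobian $\left(\begin{smallmatrix}-1&2\Lambda^{-1}\\-2\Lambda&3\end{smallmatrix}\right)$ (double eigenvalue $1$). All of existence, uniqueness, rationality of coefficients, the degree bound $i\le n$ (automatic, since the $\mathbf y$-solution is a genuine power series in $\zeta_2$), and convergence then follow in one stroke from the multi-variable $q$-Briot--Bouquet Theorem \ref{genbriottheoremsev}; holomorphic dependence on $\Lambda$ comes from the uniform version, Theorem \ref{genbriottheoremsevuniform}, after a reparameterisation so that $\mathbf q=(q_1,q_2)$ is the free variable. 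Your third paragraph is, in effect, a bespoke proof of this Briot--Bouquet theorem in the case at hand, so the two approaches converge for the analytic part; the paper's advantage is that the formal part (your first two paragraphs, including the triangular solve and the degree bound) is absorbed into the same black box.
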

\begin{proof}
We apply the $q$-Briot-Bouquet Theorem \ref{genbriottheoremsev} with $m=2$ to $q$-$P(A_1)$, after a change of dependent and independent variables. More precisely, inspired by equations \eqref{fgzetavar}, we introduce the following variables,
\begin{align}
\zeta_1=t\phi, &&  \zeta_2=\phi^{-1},&& y_1=\frac{f}{\zeta_1}-1,&&y_2=\frac{g}{\zeta_1}-\Lambda, \label{changefgy1y2}
\end{align}
where $\zeta_1$ and $\zeta_2$, in accordance with equations \eqref{indvariabledefi}, satisfy
\begin{align*}
\overline{\zeta}_1=q_1\zeta_1, && \overline{\zeta}_2=q_2\zeta_2.
\end{align*}
As $t=\zeta_1\zeta_2$, we can rewrite $q$-$P(A_1)$ in terms of these new variables as
\begin{subequations}
\label{qpa1zetavariables}
\begin{align}
y_1(q_1\zeta_1,q_2\zeta_2)&=H_1\left(\zeta_1,\zeta_2,y_1(\zeta_1,\zeta_2),y_2(\zeta_1,\zeta_2);\Lambda,\mathbf{b}\right),\\
y_2(q_1\zeta_1,q_2\zeta_2)&=H_2\left(\zeta_1,\zeta_2,y_1(\zeta_1,\zeta_2),y_2(\zeta_1,\zeta_2);\Lambda,\mathbf{b}\right),
\end{align}
\end{subequations}
for certain rational functions $H_1(\zeta_1,\zeta_2,y_1,y_2;\Lambda,\mathbf{b})$ and $H_2(\zeta_1,\zeta_2,y_1,y_2;\Lambda,\mathbf{b})$.\\
We apply the $q$-Briot-Bouquet Theorem \ref{genbriottheoremsev} to this system of $q$-difference equations.
We denote
\begin{align*}
\bm{\zeta}=(\zeta_1,\zeta_2),&&\mathbf{y}=(y_1,y_2),&&\mathbf{q}=(q_1,q_2),
\end{align*}
and leave it to the interested reader to write down $H_1(\bm{\zeta},\mathbf{y};\Lambda,\mathbf{b})$ and $H_2(\bm{\zeta},\mathbf{y};\Lambda,\mathbf{b})$ explicitly.
A rather lengthy calculation shows
\begin{align*}
H_1(\mathbf{0},\mathbf{y};\Lambda,\mathbf{b})=\frac{(y_2+\Lambda)^2}{\Lambda^2(y_1+1)}-1,&&H_2(\mathbf{0},\mathbf{y};\Lambda,\mathbf{b})=\frac{(y_2+\Lambda)^3}{\Lambda^2(y_1+1)^2}-\Lambda,
\end{align*}
in particular $H(\mathbf{0},\mathbf{0};\mathbf{q},\Lambda)=\mathbf{0}$ and we have
\begin{equation}
\label{DH}
D(\Lambda,\mathbf{b}):=\begin{pmatrix}
\frac{\partial H_1}{\partial y_1}(\mathbf{0},\mathbf{0};\Lambda,\mathbf{b}) &&\frac{\partial H_1}{\partial y_2}(\mathbf{0},\mathbf{0};\Lambda,\mathbf{b}) \\
\frac{\partial H_2}{\partial y_1}(\mathbf{0},\mathbf{0};\Lambda,\mathbf{b})  &&\frac{\partial H_2}{\partial y_2}(\mathbf{0},\mathbf{0};\Lambda,\mathbf{b}) 
\end{pmatrix}=\begin{pmatrix}
-1 & 2\Lambda^{-1}\\
-2\Lambda & 3
\end{pmatrix}.
\end{equation}
Note that $1$ is the only eigenvalue of $D(\Lambda,\mathbf{b})$, with multiplicity $2$. Therefore, by the $q$-Briot-Bouquet Theorem \ref{genbriottheoremsev}, if conditions \eqref{Qcondition} are satisfied, then the system of $q$-difference equations \eqref{qpa1zetavariables} has an unique power series solution of the form
\begin{equation}
\label{yzeta}
y_i(\zeta_1,\zeta_2;\Lambda,\mathbf{b})=\sum_{n=0}^\infty{\sum_{m=0}^\infty{y_{n,m}^{(i)}(\Lambda,\mathbf{b})\zeta_1^n\zeta_2^m}},
\end{equation}
with $y_{0,0}^{(i)}(\Lambda,\mathbf{b})=0$ for $i\in\{1,2\}$.\\
Associated via equations \eqref{changefgy1y2}, we have the following expansions for $f=f(\zeta_1,\zeta_2;\Lambda,\mathbf{b})$ and $g=g(\zeta_1,\zeta_2;\Lambda,\mathbf{b})$,
\begin{align}
f(\zeta_1,\zeta_2;\Lambda,\mathbf{b})=\sum_{n=1}^\infty{\sum_{m=0}^\infty{f_{n,m}(\Lambda,\mathbf{b})\zeta_1^n\zeta_2^m}}, && g(\zeta_1,\zeta_2;\Lambda,\mathbf{b})=\sum_{n=1}^\infty{\sum_{m=0}^\infty{g_{n,m}(\Lambda,\mathbf{b})\zeta_1^n\zeta_2^m}}, \label{fgzeta}
\end{align}
where the coefficients are defined by
\begin{align*}
f_{n,m}(\Lambda,\mathbf{b})=y^{(1)}_{n-1,m}(\Lambda,\mathbf{b}), &&g_{n,m}(\Lambda,\mathbf{b})=y^{(2)}_{n-1,m}(\Lambda,\mathbf{b}),
\end{align*}
for $n\in\mathbb{N}^*$ and $m\in\mathbb{N}$ with $(n,m)\neq (1,0)$, and
\begin{align*}
f_{1,0}(\Lambda,\mathbf{b})=1, && g_{1,0}(\Lambda,\mathbf{b})=\Lambda.
\end{align*}
Rewriting these expansions in terms of the original independent variables $t$ and $\phi$, the formulas
\begin{align*}
f^{0,+}(t,\phi;\Lambda,\mathbf{b})=f\left(t\phi,\phi^{-1};\Lambda,\mathbf{b}\right), && g^{0,+}(t,\phi;\Lambda,\mathbf{b})=g\left(t\phi,\phi^{-1};\Lambda,\mathbf{b}\right),
\end{align*}
define formal series solutions of $q$-$P(A_1)$, precisely as described in equations \eqref{powerexpansionsol0+}. Furthermore the $q$-Briot-Bouquet Theorem \ref{genbriottheoremsev} implies that the power series \eqref{fgzeta} converge in an open environment of $(\zeta_1,\zeta_2)=(0,0)$, if $1$ is not a limit point of $Q$. Note that this condition is trivially satisfied if $0<|q_1|,|q_2|<1$, which is equivalent to $\Lambda\in L_0(\mathbf{b})$. We conclude that the series \eqref{fgzetavar} indeed converge locally at $(\zeta_1,\zeta_2)=(0,0)$, for $\Lambda\in L_0(\mathbf{b})$. Strictly speaking, this only shows that the series defines a solution in the two variables $t$ and $\phi$ for a fixed $\Lambda\in\mathbb{C}^*$ such that condition \eqref{Qcondition} holds. Note however, that the proof of Theorem \ref{genbriottheoremsev} gives an explicit recursion for the coefficients, which proves  that the coefficients $F_{n,i}^{0,+}(\Lambda,\mathbf{b})$ and $G_{n,i}^{0,+}(\Lambda,\mathbf{b})$ are rational functions in their inputs and the formal series solution defines a solution on a formal level.
This finishes the proof of the first part of the theorem.

As to the second part, we would like to prove that the solutions \eqref{fgzeta} depend holomorphically on $\Lambda$, which is equivalent to proving that the expansions 
\eqref{yzeta} are holomorphic in $\Lambda$. To this end we apply Theorem \ref{genbriottheoremsevuniform}.
As, for any $L\subseteq L_0(\mathbf{b})$ with $\overline{L}\subseteq L_0(\mathbf{b})$, the set $\overline{L}$ is compact, a simple compactness argument shows that it suffices to prove that for any $\Lambda_0\in L_0(\mathbf{b})$, there is an open environment $L\subseteq L_0(\mathbf{b})$  of $\Lambda_0$, and an open environment $Z\subseteq \mathbb{C}^2$ of $\mathbf{0}$, such that the series \eqref{yzeta} converge uniformly on $Z\times L$.

 So let us take a $\Lambda_0\in L_0(\mathbf{b})$, we denote $\mathbf{q}_0=\left(q_1(\Lambda_0,\mathbf{b}),q_2(\Lambda_0,\mathbf{b})\right)$ and determine an $r>0$ such that
\begin{equation*}
B_{\text{max}}^2(\mathbf{q}_0,r)\subseteq \overline{B}_{\text{max}}^2(\mathbf{q}_0,r)\subseteq B_{\text{max}}^2(\mathbf{0},1)\setminus\{\mathbf{q}\in\mathbb{C}^2| q_1 q_2=0\}\subseteq\mathbb{C}^2,
\end{equation*}
and set $U=B_{\text{max}}^2(\mathbf{q}_0,r)$.\\
We have to modify the functions $H_1(\bm{\zeta},\mathbf{y};\Lambda,\mathbf{b})$ and $H_2(\bm{\zeta},\mathbf{y};\Lambda,\mathbf{b})$ a bit in order to be able to apply Theorem \ref{genbriottheoremsevuniform}, as $\Lambda$ and $\mathbf{b}$ are not independent of $\mathbf{q}=(q_1,q_2)$. Indeed, we have to reparameterise all variables in terms of $q_1$ and $q_2$.
 To this end, we keep the value of $b_i$ fixed for $2\leq i \leq 8$, but allow $b_1$ and $\Lambda$ to vary with $\mathbf{q}$. More explicitly, we define
\begin{align}
b_1'(\mathbf{q})=\frac{q_1q_2b_5b_6b_7b_8}{b_2b_3b_4}, && \mathbf{b}'(\mathbf{q})=(b_1'(\mathbf{q}),b_2,b_3,b_4,b_5,b_6,b_7,b_8),&&, \Lambda(\mathbf{q})=(b_5b_6b_7b_8)^{\tfrac{1}{2}}q_1^{\tfrac{1}{2}},\label{eq:repara}
\end{align}
for $\mathbf{q}\in U$, where we choose the sign of the square root such that $\Lambda(\mathbf{q}_0)=\Lambda_0$.\\
Note that at $\mathbf{q}=\mathbf{q}_0$, the original values of the parameters are recovered, as
\begin{align*}
\mathbf{b}'(\mathbf{q}_0)=\mathbf{b},&& \Lambda(\mathbf{q}_0)=\Lambda_0,
\end{align*}
and $\Lambda(\mathbf{q})$ is a univalued holomorphic function on $U$.\\
We modify $H_1$ and $H_2$, by setting
\begin{equation*}
\tilde{H}(\bm{\zeta},\mathbf{y};\mathbf{q})=H(\bm{\zeta},\mathbf{y};\Lambda(\mathbf{q}),\mathbf{b}'(\mathbf{q})).
\end{equation*}
The function $\tilde{H}(\bm{\zeta},\mathbf{y};\mathbf{q})$ is holomorphic at $(\bm{\zeta},\mathbf{y},\mathbf{q})=(\bm{0},\mathbf{0},\mathbf{q}')$ with $\tilde{H}(\bm{0},\mathbf{0},\mathbf{q}')=\mathbf{0}$, for every $\mathbf{q}'\in U$. The relevant Jacobian matrix of $\tilde{H}$ is given by
\begin{equation*}
\tilde{D}(\mathbf{q}):=\begin{pmatrix}
\frac{\partial \tilde{H}_1}{\partial y_1}(\mathbf{0},\mathbf{0};\mathbf{q})  &&\frac{\partial \tilde{H}_1}{\partial y_2}(\mathbf{0},\mathbf{0};\mathbf{q}) \\
\frac{\partial \tilde{H}_2}{\partial y_1}(\mathbf{0},\mathbf{0};\mathbf{q})  &&\frac{\partial \tilde{H}_2}{\partial y_2}(\mathbf{0},\mathbf{0};\mathbf{q}) 
\end{pmatrix}=D(\Lambda(\mathbf{q}),\mathbf{b}'(\mathbf{q}))=\begin{pmatrix}
-1 & 2\Lambda(\mathbf{q})^{-1}\\
-2\Lambda(\mathbf{q}) & 3
\end{pmatrix},
\end{equation*}
for $\mathbf{q}\in U$, where $D(\Lambda,\mathbf{b})$ is the Jacobian matrix of $H$, as defined in equation \eqref{DH}.\\
Again $1$ is the only eigenvalue of $\tilde{D}(\mathbf{q})$, which is not an element of $Q_0$ as defined in \eqref{eq:defiQ} with $m=2$, for $\mathbf{q}\in U$.
We can hence apply Theorem \ref{genbriottheoremsevuniform}, which gives open environments $Z\subseteq \mathbb{C}^2$ and $V\subseteq U$ of $\mathbf{0}$ and $\mathbf{q}_0$ respectively, such that the series $y_i(\bm{\zeta};\Lambda(\mathbf{q}),\mathbf{b}'(\mathbf{q}))$, with notation as in equation \eqref{yzeta} for $i=1,2$, converge uniformly on $Z\times V$, defining holomorphic functions in $(\bm{\zeta},\mathbf{q})$ on this set.
To undo the reparameterisation \eqref{eq:repara}, we define
\begin{equation*}
\mathbf{q}(s)=\left(\frac{s^2}{b_5b_6b_7b_8},\frac{b_1b_2b_3b_4}{s^2}\right),
\end{equation*}
 and determine an open connected environment $L\subseteq L_0(\mathbf{b})$ of $\Lambda_0$, such that 
 \begin{equation*}
 \{\mathbf{q}(s):s\in L\}\subseteq V.
 \end{equation*}
 Then we know that the series
 \begin{equation*}
 Y_i(\bm{\zeta},s):=y_i\left(\bm{\zeta};\Lambda\left(\mathbf{q}(s)\right),\mathbf{b}'\left(\mathbf{q}(s)\right)\right),
 \end{equation*}
 converge uniformly on $Z\times L$,  defining holomorphic functions in $(\bm{\zeta},s)$ on this set.\\
 Note however, that we have, for $s\in L$,
 \begin{align*}
 \Lambda\left(\mathbf{q}(s)\right)=s,&& \mathbf{b}'\left(\mathbf{q}(s)\right)=\mathbf{b},
 \end{align*}
 and hence
 \begin{equation*}
 Y_i(\bm{\zeta},s)=y_i\left(\bm{\zeta};s,\mathbf{b}\right).
 \end{equation*}
 The theorem follows.
 \end{proof}
 \begin{remark}\label{remark:parameterregular}
 	In fact, the expansions \eqref{fgzetavar} also depend holomorphically on the parameters $\mathbf{b}$. That is, given $\mathbf{b}_0\in\mathcal{B}$ and $\Lambda_0\in L_0(\mathbf{b}_0)$, there exist open environments $Z\subseteq \mathbb{C}^2$, $L\subseteq \mathbb{C}$ and $B\subseteq\mathcal{B}$ of $\mathbf{0}$, $\Lambda_0$ and $\mathbf{b}_0$ respectively, such that for any $(\Lambda,\mathbf{b})\in L\times B$, we have $\Lambda\in L_0(\mathbf{b})$ and the series \eqref{fgzetavar} converge uniformly on $Z\times L\times B$, defining holomorphic functions on this set in $\left(\bm{\zeta},\Lambda,\mathbf{b}\right)$. This can be proven easily by incorporating parameters in Theorem \ref{genbriottheoremsevuniform}, see Remark \ref{remarkincorporatepar}.
 \end{remark}
As desired, we have 
\begin{align}
F_1^{0,+}(\phi;\Lambda,\mathbf{b})=F(\phi), && G_1^{0,+}(\phi;\Lambda,\mathbf{b})=G(\phi), \label{F0G0formal}
\end{align}
where $F$ and $G$ are defined as in equations \eqref{generalsolaut}.\\
Furthermore the coefficients $F_n^{0,+}(\phi;\Lambda,\mathbf{b})$ and $G_n^{0,+}(\phi;\Lambda,\mathbf{b})$ indeed satisfy equations \eqref{recursiongencoef}, which allows us to calculate them recursively.

In Theorem \ref{thmgeneralsol0+} the plus superscripts reflect the fact that there are only finitely many positive powers of $\phi$ occuring in the Laurent series \eqref{coefpowerexpansionsol0+}, we define the dual `minus' solutions as follows
\begin{subequations}
\label{minussol}
\begin{align}
f^{0,-}(t,\phi;\Lambda,\mathbf{b})&=f^{0,+}\left(t,\mu(\Lambda,\mathbf{b})\phi^{-1};\frac{b_1b_2b_3b_4}{\Lambda},\mathbf{b}\right),\\
g^{0,-}(t,\phi;\Lambda,\mathbf{b})&=g^{0,+}\left(t,\mu(\Lambda,\mathbf{b})\phi^{-1};\frac{b_1b_2b_3b_4}{\Lambda},\mathbf{b}\right).
\end{align}
\end{subequations}
Note that indeed, by Theorem \ref{thmgeneralsol0+}, this defines a formal solution to $q$-$P(A_1)$, as
\begin{equation*}
\overline{\mu(\Lambda,\mathbf{b})\phi^{-1}}=\frac{\left(\frac{b_1b_2b_3b_4}{\Lambda}\right)^2}{b_1b_2b_3b_4}\mu(\Lambda,\mathbf{b})\phi^{-1}.
\end{equation*}
Analogously to the expansions \eqref{powerexpansionsol0+} and \eqref{coefpowerexpansionsol0+}, we have
\begin{align}
f^{0,-}(t,\phi;\Lambda,\mathbf{b})=\sum_{n=1}^\infty{F_n^{0,-}(\phi;\Lambda,\mathbf{b})t^n}, && g^{0,-}(t,\phi;\Lambda,\mathbf{b})=\sum_{n=1}^\infty{G_n^{0,-}(\phi;\Lambda,\mathbf{b})t^n},\label{powerexpansionsol0-}
\end{align}
with, for $n\in\mathbb{N}^*$,
\begin{align}
F_n^{0,-}(\phi;\Lambda,\mathbf{b})=\sum_{i=-n}^\infty{F_{n,i}^{0,-}(\Lambda,\mathbf{b})\phi^i}, && G_n^{0,-}(\phi;\Lambda,\mathbf{b})=\sum_{i=-n}^\infty{G_{n,i}^{0,-}(\Lambda,\mathbf{b})\phi^i},\label{coefpowerexpansionsol0-}
\end{align}
where, for $i\in\mathbb{Z}_{\geq -n}$,
\begin{align*}
F_{n,i}^{0,-}(\Lambda,\mathbf{b})=F_{n,-i}^{0,+}\left(\frac{b_1b_2b_3b_4}{\Lambda},\mathbf{b}\right)\mu(\Lambda,\mathbf{b})^{-i},&&
G_{n,i}^{0,-}(\Lambda,\mathbf{b})=G_{n,-i}^{0,+}\left(\frac{b_1b_2b_3b_4}{\Lambda},\mathbf{b}\right)\mu(\Lambda,\mathbf{b})^{-i}.
\end{align*}
Using the symmetries
\begin{align*}
\mu\left(\frac{b_1b_2b_3b_4}{\Lambda}\right)=\mu(\Lambda,\mathbf{b}),&& F_{eq}\left(\frac{b_1b_2b_3b_4}{\Lambda}\right)=F_{\text{eq}}(\Lambda,\mathbf{b}), &&G_{eq}\left(\frac{b_1b_2b_3b_4}{\Lambda}\right)=G_{\text{eq}}(\Lambda,\mathbf{b}),
\end{align*}
it is easy to see that
\begin{align}
F_1^{0,-}(\phi;\Lambda,\mathbf{b})=F(\phi)=F_1^{0,+}(\phi;\Lambda,\mathbf{b}), && G_1^{0,-}(\phi;\Lambda,\mathbf{b})=G(\phi)=G_1^{0,+}(\phi;\Lambda,\mathbf{b}), \label{firstcoef+and-}
\end{align}
where $F(\phi)$ and $G(\phi)$ are as defined in equations \eqref{generalsolaut}.\\
Note that this implies that the coefficients of the formal `plus' and `minus' series solutions, \eqref{coefpowerexpansionsol0+} and \eqref{coefpowerexpansionsol0-}, satisfy the same recursive system of difference equations \eqref{recursiongencoef}, with the same initial values \eqref{firstcoef+and-}. Motivated by this plausibility argument, we formulate the following conjecture.
\begin{conj}
\label{conjecture}
The formal series solutions \eqref{powerexpansionsol0+} and \eqref{powerexpansionsol0-} are equal, that is,
\begin{align*}
f^{0,+}(t,\phi;\Lambda,\mathbf{b})=f^{0,-}(t,\phi;\Lambda,\mathbf{b}), && g^{0,+}(t,\phi;\Lambda,\mathbf{b})=g^{0,-}(t,\phi;\Lambda,\mathbf{b}),
\end{align*}
or equivalently, for $n\in\mathbb{N}^*$, the Laurent series \eqref{coefpowerexpansionsol0+} terminate at $i=-n$, that is,
\begin{align}
F_n^{0,+}(\phi;\Lambda,\mathbf{b})=\sum_{i=-n}^n{F_{n,i}^{0,+}(\Lambda,\mathbf{b})\phi^i}, && G_n^{0,+}(\phi;\Lambda,\mathbf{b})=\sum_{i=-n}^n{G_{n,i}^{0,+}(\Lambda,\mathbf{b})\phi^i}.\label{eq:terminate}
\end{align}
In particular, by equations \eqref{minussol}, we have, for $n\in\mathbb{N}^*$ and $i\leq n$,
\begin{subequations}
\label{conjidcoef}
\begin{align}
\mu(\Lambda,\mathbf{b})^iF_{n,i}^{0,+}\left(\frac{b_1b_2b_3b_4}{\Lambda},\mathbf{b}\right)&=F_{n,-i}^{0,+}(\phi;\Lambda,\mathbf{b}),\\
\mu(\Lambda,\mathbf{b})^iG_{n,i}^{0,+}\left(\frac{b_1b_2b_3b_4}{\Lambda},\mathbf{b}\right)&=G_{n,-i}^{0,+}(\phi;\Lambda,\mathbf{b}).
\end{align}
\end{subequations}
\end{conj}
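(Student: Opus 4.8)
The plan is to show that the conjecture is equivalent to a sharpening of Theorem~\ref{thmgeneralsol0+}, namely that the Laurent series \eqref{coefpowerexpansionsol0+} are in fact Laurent \emph{polynomials}, of degree $n$ in both $\phi$ and $\phi^{-1}$, and then to prove that sharpening by strong induction on $n$, driven by the most negative $\phi$-coefficients of the recursive system \eqref{recursiongencoef}.

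First the reduction. By Theorem~\ref{thmgeneralsol0+} and \eqref{minussol}, both families $(F_n^{0,+},G_n^{0,+})_{n\ge1}$ and $(F_n^{0,-},G_n^{0,-})_{n\ge1}$ solve \eqref{recursiongencoef} with the same order-one data, since $F_1^{0,\pm}=F(\phi)$ and $G_1^{0,\pm}=G(\phi)$ by \eqref{firstcoef+and-}. Suppose it is known that each $F_n^{0,+}(\phi;\Lambda,\mathbf b)$ and $G_n^{0,+}(\phi;\Lambda,\mathbf b)$, which by Theorem~\ref{thmgeneralsol0+} has degree at most $n$ in $\phi$, also has degree at least $-n$. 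Then, from the relations $F_{n,i}^{0,-}(\Lambda,\mathbf b)=F_{n,-i}^{0,+}(b_1b_2b_3b_4/\Lambda,\mathbf b)\,\mu(\Lambda,\mathbf b)^{-i}$ preceding the conjecture (and their $G$-analogues), the coefficients $F_{n,i}^{0,-}$ and $G_{n,i}^{0,-}$ vanish for $i>n$; hence $f^{0,-}$ and $g^{0,-}$ are again formal series of the form \eqref{powerexpansionsol0+}--\eqref{coefpowerexpansionsol0+} with $F_{1,1}^{0,-}=1$ and $G_{1,1}^{0,-}=\Lambda$, so the uniqueness clause of Theorem~\ref{thmgeneralsol0+} gives $f^{0,+}=f^{0,-}$ and $g^{0,+}=g^{0,-}$. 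Together with the assumed degree bound, which is precisely \eqref{eq:terminate}, this is the conjecture, and \eqref{conjidcoef} then drops out of \eqref{minussol} by comparing coefficients of $t^n\phi^i$. Conversely the conjecture obviously implies the degree bound, so the two statements are equivalent.

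For the degree bound I would induct on $n$; the base case $n=1$ holds because $F(\phi)$ and $G(\phi)$ from \eqref{generalsolaut} have degrees in $[-1,1]$. Assume $F_m^{0,+}$ and $G_m^{0,+}$ are Laurent polynomials supported in degrees $[-m,m]$ for all $m<n$. Since $\deg_wR_n^{(1)},\deg_wR_n^{(2)}\le n+3$ with $\deg_wF_m=\deg_wG_m=m$, this inductive hypothesis forces the $\phi$-degrees of $R_n^{(1)}$ and $R_n^{(2)}$ to lie in $[-(n+3),n+3]$. Suppose, for contradiction, that $F_n^{0,+}$ or $G_n^{0,+}$ has a non-zero coefficient at some $\phi^{i_0}$ with $i_0<-n$, and take $i_0$ minimal. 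The multipliers of $F_n$, $\overline F_n$, $G_n$ and $\overline G_n$ in \eqref{recursiongencoef} are polynomials in $F(\phi)$ and $G(\phi)$ and so have $\phi$-degrees in $[-3,3]$; therefore, comparing the coefficients of $t^{n+3}\phi^{\,i_0-3}$ in \eqref{qpa1expand}, where $R_n^{(1)}$ and $R_n^{(2)}$ contribute nothing because $i_0-3<-(n+3)$, produces a \emph{homogeneous} $2\times2$ linear system for $(F_{n,i_0}^{0,+},G_{n,i_0}^{0,+})$, with coefficients built from $\mu(\Lambda,\mathbf b)$, the coefficients of $\phi^{\pm1}$ in $F$ and $G$, and the quantity $q^{n-1}\lambda^{\,i_0-1}$. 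If the determinant of this system is not identically zero on $\mathbb C^*\times\mathcal B$, then $F_{n,i_0}^{0,+}=G_{n,i_0}^{0,+}=0$ on a dense subset, hence identically, since these coefficients are rational and regular by Theorem~\ref{thmgeneralsol0+} and Remark~\ref{remark:parameterregular}; this contradicts the minimality of $i_0$ and closes the induction.

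The essential and only non-routine point is the non-degeneracy just invoked: one must verify that this $2\times2$ determinant, a concrete rational expression, is not identically zero, and more precisely that its zero set, as a condition on $q^{n-1}\lambda^{\,i_0-1}$, is a proper subvariety, so that it is automatically avoided for $n\ge2$ and $i_0<-n$ once $\Lambda$ is treated as a free parameter. A clean way to organise this is to relate the recursion \eqref{recursiongencoef} to the system \eqref{qpa1zetavariables} in the variables $\zeta_1=t\phi$, $\zeta_2=\phi^{-1}$ used in the proof of Theorem~\ref{thmgeneralsol0+}, where the obstruction at bidegree $(k,m)$ is the resonance $q_1^{k}q_2^{m}=1$ with $q_1=q\lambda$, $q_2=\lambda^{-1}$; passing between the rational form \eqref{qpa1zetavariables} and the polynomial form \eqref{qpa1expand} only multiplies the linearisation at the fixed point by a non-zero scalar, and tracking this should exhibit the determinant as the resonance factor for the relevant bidegree, which is non-trivial. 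That the conjecture should hold is already visible in the autonomous limit: by the remark following \eqref{eq:logsol}, the two choices $\Lambda$ and $b_1b_2b_3b_4/\Lambda$ of the parameter, linked through $\phi\mapsto\mu(\Lambda,\mathbf b)\phi^{-1}$, yield the same solution of the leading-order system \eqref{losystemeq}; the conjecture is that this coincidence persists for the full series, and the induction above is the natural mechanism by which it would propagate order by order in $t$.
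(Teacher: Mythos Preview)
The statement is labelled a \emph{conjecture} in the paper and is not proved there. The authors only offer evidence: the case $n=1$ via \eqref{F0G0formal}, the case $n=2$ checked in Mathematica, and consistency with Proposition~\ref{prop:holopert}. A few lines before the conjecture they say explicitly that the Laurent-polynomial property ``seems difficult to prove directly and we hence prove a weaker version'' (namely Theorem~\ref{thmgeneralsol0+}).

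Your reduction paragraph is fine and matches the paper's own motivation for the conjecture. The gap is in the inductive step. You write ``take $i_0$ minimal''; but by Theorem~\ref{thmgeneralsol0+} the objects $F_n^{0,+}(\phi;\Lambda,\mathbf b)$ and $G_n^{0,+}(\phi;\Lambda,\mathbf b)$ are a priori genuine Laurent \emph{series} $\sum_{i=-\infty}^{n}(\cdots)\phi^{i}$, equivalently power series in $\zeta_2=\phi^{-1}$, so there is no reason a minimal nonzero index should exist. That step silently assumes the conclusion. Without it, the coefficient of $\phi^{\,i_0-3}$ on the left of \eqref{recursiongencoef} involves all of $F_{n,j},G_{n,j}$ for $j\in[i_0-6,i_0]$, not just the pair $(F_{n,i_0},G_{n,i_0})$, and you never arrive at a finite homogeneous $2\times2$ system to invert. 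One could try instead to view the $\phi$-coefficients of \eqref{recursiongencoef} as a seven-term linear recurrence in the index $i$, homogeneous once $i<-(n+3)$, and argue that the only tail compatible with the convergence in \eqref{fgzetavar} is zero; but that requires locating the characteristic roots of that recurrence and comparing them, uniformly in $(\Lambda,\mathbf b)$, with the growth allowed by Theorem~\ref{thmgeneralsol0+}, which is a different and substantially harder argument than the one you sketched.

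Even granting a minimal $i_0$, the second gap you yourself flag remains open: the $2\times2$ determinant is never computed. Your proposal to identify it with the resonance factor coming from \eqref{qpa1zetavariables} is essentially the content of Remark~\ref{remarkrelax}, which the paper states only \emph{conditionally on the conjecture}; invoking it here would be circular. In short, the proposal outlines a plausible strategy but does not close it, and the paper does not claim to close it either.
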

Equations \eqref{F0G0formal} show that \eqref{eq:terminate} is true when $n=1$ and we have checked the case $n=2$ using Mathematica. As an additional check, Proposition \ref{prop:holopert} is consistent with equations \eqref{conjidcoef}.
\begin{remark}
\label{remarkrelax}
By equations \eqref{F0G0formal} and \eqref{generalsolaut}, we see that the coefficients $F_{1}^{0,+}$ and $G_{1}^{0,+}$  are only singular when $\Lambda^2=b_1b_2b_3b_4$. Reflecting on the proofs of Theorem \ref{thmgeneralsol0+} and \ref{genbriottheoremsev}, this implies that condition \eqref{Qcondition} in Theorem \ref{thmgeneralsol0+} can be relaxed to $1\notin Q_1$, where $Q_{1}\subseteq Q$ equals
\begin{equation*}
Q_{1}=\{q_1^mq_2^n:\text{$(m,n)\in\mathbb{N}^2\setminus\{0,0\}$ with $m\geq 1$}\}\cup\{q_2\}.
\end{equation*}
In particular, if $|q_2|=1$ with $q_2\neq 1$ and $|q_1|<1$, then the convergence of expansions \eqref{fgzetavar} still holds.
If Conjecture \ref{conjecture} is true, then condition \eqref{Qcondition} can be relaxed further, to $1\notin Q_{\text{rel}}$, where $Q_{\text{rel}}\subseteq Q$ is defined as
\begin{equation*}
Q_{\text{rel}}=\{q_1^mq_2^n:\text{$(m,n)\in\mathbb{N}^2\setminus\{0,0\}$ with $n\leq m+1$}\}.
\end{equation*}
\end{remark}

\subsection{Formal series solution at infinity}
The B\"acklund transformation $\mathcal{T}_4$ defined in \eqref{backlundt}, shows that the critical points $0$ and $\infty$ play an essentially equivalent role in $q$-$P(A_1)$. Using B\"acklund transformation $\mathcal{T}_2$ and Theorem \ref{thmgeneralsol0+}, it is easy to see that
\begin{align}
f&=tg^{0,+}\left(\frac{1}{t},\widetilde{\phi};\widetilde{\Lambda},\mathbf{b}^{(2)}\right),&
g&=tf^{0,+}\left(\frac{1}{t},\widetilde{\phi};\widetilde{\Lambda},\mathbf{b}^{(2)}\right),\label{solinf}
\end{align}
defines a formal solution to $q$-$P(A_1)(\mathbf{b})$, if $\widetilde{\Lambda}$ and $\widetilde{\phi}$ satisfy
\begin{align}
\overline{\widetilde{\Lambda}}=\widetilde{\Lambda},&& \overline{\widetilde{\phi}}=\widetilde{\lambda}^{-1}\widetilde{\phi},&& \widetilde{\lambda}=\frac{\widetilde{\Lambda}^2}{b_1^{(2)}b_2^{(2)}b_3^{(2)}b_4^{(2)}}. \label{relationstilde1}
\end{align}
We introduce formal variables $\Lambda_\infty$ and $\phi_\infty$ satisfying
\begin{align}
\overline{\Lambda}_\infty=\Lambda_\infty,&& \overline{\phi}_\infty=\lambda_\infty \phi_\infty, && \lambda_\infty=\frac{\Lambda_\infty^2}{b_5b_6b_7b_8},\label{formalinf}
\end{align}
and set
\begin{align*}
\widetilde{\Lambda}=\frac{1}{\Lambda_\infty}, && \widetilde{\phi}=\Lambda_\infty \phi_\infty.
\end{align*}
Observe that equations \eqref{relationstilde1} are satisfied, and upon substitution into equation \eqref{solinf}, we find that
\begin{subequations}
	\label{powerexpansionsinf+}
	\begin{align}
	f^{\infty,+}(t,\phi_\infty;\Lambda_\infty,\mathbf{b})&=tg^{0,+}\left(\frac{1}{t},\Lambda_\infty \phi_\infty;\frac{1}{\Lambda_\infty},\mathbf{b}^{(2)}\right),\\
	g^{\infty,+}(t,\phi_\infty;\Lambda_\infty,\mathbf{b})&=tf^{0,+}\left(\frac{1}{t},\Lambda_\infty \phi_\infty;\frac{1}{\Lambda_\infty},\mathbf{b}^{(2)}\right),
	\end{align}
\end{subequations}
defines a formal series solution to $q$-$P(A_1)(\mathbf{b})$ at $t=\infty$.\\
Indeed, expanding this solution in $t$ and $\phi_\infty$, we find
\begin{align*}
f^{\infty,+}(t,\phi_\infty;\Lambda_\infty,\mathbf{b})&=\sum_{n=0}^\infty{F_n^{\infty,+}(\phi_\infty;\Lambda_\infty,\mathbf{b})t^{-n}},\\
g^{\infty,+}(t,\phi_\infty;\Lambda_\infty,\mathbf{b})&=\sum_{n=0}^\infty{G_n^{\infty,+}(\phi_\infty;\Lambda_\infty,\mathbf{b})t^{-n}},
\end{align*}
with, for $n\in\mathbb{N}$,
\begin{align*}
F_n^{\infty,+}(\phi_\infty;\Lambda_\infty,\mathbf{b})=G_{n+1}^{0,+}\left(\Lambda_\infty\phi_\infty;\frac{1}{\Lambda_\infty},\mathbf{b}^{(2)}\right),&&
G_n^{\infty,+}(\phi_\infty;\Lambda_\infty,\mathbf{b})=F_{n+1}^{0,+}\left(\Lambda_\infty\phi_\infty;\frac{1}{\Lambda_\infty},\mathbf{b}^{(2)}\right),
\end{align*}
and hence
\begin{align*}
F_n^{\infty,+}(\phi_\infty;\Lambda_\infty,\mathbf{b})=\sum_{i=-\infty}^{n+1}{F_{n,i}^{\infty,+}(\Lambda_\infty,\mathbf{b})\phi_\infty^i},&&
G_n^{\infty,+}(\phi_\infty;\Lambda_\infty,\mathbf{b})=\sum_{i=-\infty}^{n+1}{G_{n,i}^{\infty,+}(\Lambda_\infty,\mathbf{b})\phi_\infty^i},
\end{align*}
where, for $n\in\mathbb{N}$ and $i\in\mathbb{Z}_{\leq n+1}$,
\begin{align*}
F_{n,i}^{\infty,+}(\Lambda_\infty,\mathbf{b})=\Lambda_\infty^i G_{n+1,i}^{0,+}\left(\frac{1}{\Lambda_\infty},\mathbf{b}^{(2)}\right),&& G_{n,i}^{\infty,+}(\Lambda_\infty,\mathbf{b})=\Lambda_\infty^i F_{n+1,i}^{0,+}\left(\frac{1}{\Lambda_\infty},\mathbf{b}^{(2)}\right).
\end{align*}
Of course we can formulate analogous convergence results to the ones in Theorem \ref{thmgeneralsol0+}. To obtain the dual `minus' solutions at infinity, we again take $\Lambda_\infty$ and $\phi_\infty$ satisfying equations \eqref{formalinf}, and set
\begin{align*}
\widetilde{\Lambda}=\frac{\Lambda_\infty}{b_5b_6b_7b_8}, && \widetilde{\phi}=\mu\left(\frac{\Lambda_\infty}{b_5b_6b_7b_8},\mathbf{b}^{(2)}\right) \frac{1}{\Lambda_\infty\phi_\infty}.
\end{align*}
in equations \eqref{solinf}.

\subsection{Symmetries of the formal series solution}
\label{sectionsymmetries}
In Appendix \ref{appback} we discuss several B\"acklund transformations of $q$-$P(A_1)$. Using these we can find symmetries of the formal series solution \eqref{powerexpansionsol0+}. We discuss $3$ such examples.
First of all, note that for any permutation
 \begin{equation}
 \sigma\in Sym(\{1,2,3,4\})\times Sym(\{5,6,7,8\}),\label{perm}
 \end{equation}
$q$-$P(A_1)$ is invariant under permutation of the parameters $\sigma(b)_i=b_{\sigma(i)}$ correspondingly for $1\leq i \leq 8$,
and using Theorem \ref{thmgeneralsol0+} we deduce
\begin{align}
f^{0,+}(t,\phi;\Lambda,\mathbf{b})=f^{0,+}\left(t,\phi;\Lambda,\sigma(\mathbf{b})\right), && g^{0,+}(t,\phi;\Lambda,\mathbf{b})=g^{0,+}\left(t,\phi;\Lambda,\sigma(\mathbf{b})\right).\label{sigmapermu}
\end{align}
Next, we would like to derive a symmetry of the formal series solution \eqref{powerexpansionsol0+} by application of B\"acklund transformation $\mathcal{T}_1$ as defined in \eqref{backlundt}. Consider formal variables $\phi$ and $\Lambda$ satisfying \eqref{indvariabledefi} and put
\begin{align*}
\widetilde{\phi}=\frac{1}{t\phi},&&\widetilde{\Lambda}=\frac{1}{\Lambda}.
\end{align*}
Then we have 
\begin{align*}
\overline{\widetilde{\Lambda}}=\widetilde{\Lambda},&&\overline{\widetilde{\phi}}=\frac{{\widetilde{\Lambda}}^2}{b_1^{(1)}b_2^{(1)}b_3^{(1)}b_4^{(1)}}\widetilde{\phi},
\end{align*}
and by Theorem \ref{thmgeneralsol0+} this implies that
\begin{align*}
f^{0,+}\left(t,\widetilde{\phi};\widetilde{\Lambda},\mathbf{b}^{(1)}\right)=f^{0,+}\left(t,\frac{1}{t\phi};\frac{1}{\Lambda},\mathbf{b}^{(1)}\right),&& g^{0,+}\left(t,\widetilde{\phi};\widetilde{\Lambda},\mathbf{b}^{(1)}\right)=g^{0,+}\left(t,\frac{1}{t\phi};\frac{1}{\Lambda},\mathbf{b}^{(1)}\right),
\end{align*}
defines a formal solution to $q$-$P(A_1)(\mathbf{b}^{(1)})$.\\
We apply B\"acklund transformation $\mathcal{T}_1$, which shows that
\begin{align}
f(t,\phi)=\frac{t}{f^{0,+}\left(t,\frac{1}{t\phi};\frac{1}{\Lambda},\mathbf{b}^{(1)}\right)}, && g(t,\phi)=\frac{t}{g^{0,+}\left(t,\frac{1}{t\phi};\frac{1}{\Lambda},\mathbf{b}^{(1)}\right)}, \label{fgnewsol}
\end{align}
defines a formal solution to $q$-$P(A_1)(\mathbf{b})$.\\
We expand this solution in powers of $t$ and $\phi$ and prove that it is exactly the formal series solution \eqref{powerexpansionsol0+}. First of all, for the denominators in \eqref{fgnewsol}, expanding in $t$ gives
\begin{align*}
f^{0,+}\left(t,\frac{1}{t\phi};\frac{1}{\Lambda},\mathbf{b}^{(1)}\right)=\sum_{m=0}^\infty{\widetilde{f}_m(\phi;\Lambda,\mathbf{b})t^m},\\
g^{0,+}\left(t,\frac{1}{t\phi};\frac{1}{\Lambda},\mathbf{b}^{(1)}\right)=\sum_{m=0}^\infty{\widetilde{g}_m(\phi;\Lambda,\mathbf{b})t^m},
\end{align*}
where, for $m\in\mathbb{N}$,
\begin{subequations}
	\label{tildefgcoef}
	\begin{align}
	\widetilde{f}_m(\phi;\Lambda,\mathbf{b})=\sum_{i=-\infty}^{m-1}{F_{m-i,-i}^{0,+}\left(\frac{1}{\Lambda},\mathbf{b}^{(1)}\right)\phi^i},&&
	\widetilde{g}_m(\phi;\Lambda,\mathbf{b})=\sum_{i=-\infty}^{m-1}{G_{m-i,-i}^{0,+}\left(\frac{1}{\Lambda},\mathbf{b}^{(1)}\right)\phi^i}.
	\end{align}
\end{subequations}
We can hence expand equations \eqref{fgnewsol} in $t$, using for instance the Lagrange inversion formula, to obtain
\begin{subequations}
	\label{fgexpansiont}
	\begin{align}
	f(t,\phi)&=\frac{1}{\widetilde{f}_0(\phi;\Lambda,\mathbf{b})} t-\frac{\widetilde{f}_1(\phi;\Lambda,\mathbf{b})}{\widetilde{f}_0(\phi;\Lambda,\mathbf{b})^2}t^2+\ldots,\\
	g(t,\phi)&=\frac{1}{\widetilde{g}_0(\phi;\Lambda,\mathbf{b})} t-\frac{\widetilde{g}_1(\phi;\Lambda,\mathbf{b})}{\widetilde{g}_0(\phi;\Lambda,\mathbf{b})^2}t^2+\ldots,
	\end{align}
\end{subequations}
and compare the result with the formal series solution \eqref{powerexpansionsol0+}.\\
Indeed, by expanding the coefficients of the series \eqref{fgexpansiont} with respect to $\phi$, we see that they are of exactly the same form as solutions \eqref{powerexpansionsol0+}, that is, we can find $\widetilde{F}_{n,i}$ and $\widetilde{G}_{n,i}$ for $i\in\mathbb{N}_{\leq n}$ and $n\in\mathbb{N}^*$ such that
\begin{align*}
f(t,\phi)=\sum_{n=1}^\infty{\sum_{i=-\infty}^n{ \widetilde{F}_{n,i} t^n\phi^i}},&&
g(t,\phi)=\sum_{n=1}^\infty{\sum_{i=-\infty}^n{ \widetilde{G}_{n,i} t^n\phi^i}}.
\end{align*}
In particular, calculating $\widetilde{F}_{1,1}$ and $\widetilde{G}_{1,1}$ gives
\begin{align*}
\widetilde{F}_{1,1}=\frac{1}{F_{1,1}^{0,+}\left(\frac{1}{\Lambda},\mathbf{b}^{(1)}\right)}=1,&&
\widetilde{G}_{1,1}=\frac{1}{G_{1,1}^{0,+}\left(\frac{1}{\Lambda},\mathbf{b}^{(1)}\right)}=\Lambda.
\end{align*}
Therefore, by the uniqueness property of the formal series solution \eqref{powerexpansionsol0+} in Theorem \ref{thmgeneralsol0+}, we have
\begin{align*}
f(t,\phi)=f^{0,+}(t,\phi;\Lambda,\mathbf{b}), && g(t,\phi)=g^{0,+}(t,\phi;\Lambda,\mathbf{b}),
\end{align*}
and hence, by the definition of $f$ and $g$ \eqref{fgnewsol}, we obtain the formal identities
\begin{align}
f^{0,+}(t,\phi;\Lambda,\mathbf{b})f^{0,+}\left(t,\frac{1}{t\phi};\frac{1}{\Lambda},\mathbf{b}^{(1)}\right)=t,&&
g^{0,+}(t,\phi;\Lambda,\mathbf{b})g^{0,+}\left(t,\frac{1}{t\phi};\frac{1}{\Lambda},\mathbf{b}^{(1)}\right)=t. \label{formalid}
\end{align}
These equations induce a countable number of identities among the coefficients, each one given by comparing the coefficients of a positive power of $t$. In particular, comparing the coefficients of the lowest order term $t$, we obtain
\begin{align}
F_1^{0,+}(\phi;\Lambda,\mathbf{b})\widetilde{f}_0\left(\phi;\Lambda,\mathbf{b}\right)=1,&& G_1^{0,+}(\phi;\Lambda,\mathbf{b})\widetilde{g}_0\left(\phi;\Lambda,\mathbf{b}\right)=1. \label{formalid1}
\end{align}
Combining this identity with equations \eqref{F0G0formal}, \eqref{autonomoufg} and \eqref{tildefgcoef}, we find generating functions,
\begin{subequations}
	\label{genfunc}
\begin{gather}
\frac{x}{1+F_{\text{eq}}(\Lambda,\mathbf{b})x+\mu(\Lambda,\mathbf{b})x^2}=\sum_{i=1}^\infty{F_{i,i}^{0,+}\left(\frac{1}{\Lambda},\mathbf{b}^{(1)}\right)x^i},\\
\frac{x}{\Lambda+G_{\text{eq}}(\Lambda,\mathbf{b})x+\frac{b_1b_2b_3b_4}{\Lambda}\mu(\Lambda,\mathbf{b})x^2}=\sum_{i=1}^\infty{G_{i,i}^{0,+}\left(\frac{1}{\Lambda},\mathbf{b}^{(1)}\right)x^i}.
\end{gather}
\end{subequations}

Similarly, using B\"acklund transformation $\mathcal{T}_3$, we find formal identities
\begin{align}
f^{0,+}(t,\phi;\Lambda,\mathbf{b})&=g^{0,+}\left(q^{-\tfrac{1}{2}}t,q^{-\tfrac{1}{2}}\frac{b_1b_2b_3b_4}{\Lambda}\phi;\frac{\Lambda}{b_5b_6b_7b_8},\mathbf{b}^{(3)}\right),\label{symmetryfg1}\\
g^{0,+}(t,\phi;\Lambda,\mathbf{b})&=f^{0,+}\left(q^{-\tfrac{1}{2}}t,q^{\tfrac{1}{2}}\Lambda\phi;\frac{\Lambda}{b_5b_6b_7b_8},\mathbf{b}^{(3)}\right).
\end{align}
These equations plays an important role in the symmetrisation of $q$-$P(A_1)$, as described in Section \ref{sectionsym}.

\section{Constructing true solutions to $q$-$P(A_1)$}
\label{sectiontruesol}
In this section we use the formal series solution \eqref{powerexpansionsol0+} to construct true solutions of $q$-$P(A_1)$. The idea is relatively straightforward, we replace the formal variables $\Lambda$ and $\phi$ with actual functions satisfying equations \eqref{indvariabledefi}. Let us take any choice of parameter values $\mathbf{b}\in\mathcal{B}$, define $q$ by equation \eqref{eqconstraint} and assume $|q|<1$, or equivalently,
\begin{equation*}
|b_1b_2b_3b_4|<|b_5b_6b_7b_8|.
\end{equation*}
We first discuss how to find solutions on a discrete time domain $q^\mathbb{Z}t_0$. More precisely, we adopt the following discrete time interpretation of $q$-$P(A_1)$, we fix a $t_0\in\mathbb{C}^*$ and define
\begin{align*}
t_s=q^s t_0, && f_s=f(t_s), && g_s=g(t_s),
\end{align*}
then we have, for $s\in\mathbb{Z}$,
\begin{subequations}
	\label{qpa1discrete}
\begin{align}
\frac{(g_sf_s-t_s^2)(g_sf_{s+1}-qt_s^2)}{(g_sf_s-1)(g_sf_{s+1}-1)}&=\frac{(g_s-b_1  t_s)(g_s-b_2  t_s)(g_s-b_3  t_s)(g_s-b_4  t_s)}{(g_s-b_5)(g_s-b_6)(g_s-b_7)(g_s-b_8)},\\
\frac{(g_sf_{s+1}-qt_s^2)(g_{s+1}f_{s+1}-q^2t_{s})}{(g_sf_{s+1}-1)(g_{s+1}f_{s+1}-1)}&=\frac{(f_{s+1}- b_1^{-1}qt_{s})(f_{s+1}- b_2^{-1}qt_{s})(f_{s+1}- b_3^{-1} qt_{s})(f_{s+1}-  b_4^{-1}qt_{s})}{(f_{s+1}-b_5^{-1})(f_{s+1}-b_6^{-1})(f_{s+1}-b_7^{-1})(f_{s+1}-b_8^{-1})}.
\end{align}
\end{subequations}
We remark that Grammaticos and Ramani \cite{firstqp6} initially introduced $q$-$P(A_1)$ in this form.
In this setting, we give meaning to the autonomous equations \eqref{indvariabledefi}, by interpreting them as follows,
\begin{align}
\Lambda_{s+1}=\Lambda_s,&& \phi_{s+1}=\lambda_s \phi_s,&& \lambda_s=\frac{\Lambda_s^2}{b_1b_2b_3b_4}. \label{discreteinter}
\end{align}
Let us take any $\phi_0\in\mathbb{C}^*$ and $\Lambda_0\in L_0(\mathbf{b})$, as defined in \eqref{eq:L0defi}.
In accordance with Theorem \ref{thmgeneralsol0+} and equations \eqref{discreteinter}, we put
\begin{align*}
\lambda_0&=\frac{\Lambda_0^2}{b_1b_2b_3b_4}, & q_1&=q\lambda_0, & q_2=\lambda_0^{-1},
\end{align*} 
and define, for $s\in\mathbb{Z}$,
\begin{align*}
t_s&=q^st_0, & \phi_s&=\lambda_0^s \phi_0,& (\zeta_1)_s&=q_1^s \phi_0t_0,& (\zeta_2)_s&=q_2^s \phi_0^{-1}.
\end{align*}
As $\Lambda_0\in L_0(\mathbf{b})$, Theorem \ref{thmgeneralsol0+} shows that there is an $r>0$ such that the expansions \eqref{fgzetavar} with $\Lambda=\Lambda_0$ converge for all $(\zeta_1,\zeta_2)\in\mathbb{C}^2$ with $|\zeta_1|,|\zeta_2|<r$.
Note that $0<|q_1|,|q_2|<1$ and determine an $S\in\mathbb{Z}$, such that, for all $s\geq S$,
\begin{equation*}
|(\zeta_1)_s|,|(\zeta_2)_s|<r.
\end{equation*}
Then we know, that for all $s\geq S$,
\begin{align*}
f_s&=f^{0,+}(t_s,\phi_s;\Lambda_0,\mathbf{b})=\sum_{n=1}^\infty{\sum_{m=0}^\infty{F_{n,n-m}^{0,+}(\Lambda_0,\mathbf{b})(\zeta_1)_s^n(\zeta_2)_s^m}},\\
g_s&=g^{0,+}(t_s,\phi_s;\Lambda_0,\mathbf{b})=\sum_{n=1}^\infty{\sum_{m=0}^\infty{G_{n,n-m}^{0,+}(\Lambda_0,\mathbf{b})(\zeta_1)_s^n(\zeta_2)_s^m}},
\end{align*}
are well-defined, and converge uniformly in $s$ on $\mathbb{Z}_{\geq S}$, defining a solution of \eqref{qpa1discrete}.\\
Grammaticos and Ramani \cite{firstqp6} showed that \eqref{qpa1discrete} has the singularity confinement property, which allows us to extend $(f_s,g_s)_{s\geq S}$ to a full solution $(f_s,g_s)_{s\in\mathbb{Z}}$, with $f_s,g_s\in \mathbb{C}_\infty$ for $s\in \mathbb{Z}$, where $\mathbb{C}_\infty$ denotes the Riemann sphere. Note that this solution $(f_s,g_s)_{s\in\mathbb{Z}}$ is completely determined by our initial choices for $\Lambda_0$ and $\phi_0$, that is, writing
\begin{equation*}
(f_s,g_s)_{s\in\mathbb{Z}}=(f_s(\Lambda_0,\phi_0),g_s(\Lambda_0,\phi_0))_{s\in\mathbb{Z}},
\end{equation*} 
we found a family of solutions to $q$-$P(A_1)$ with discrete time $t=q^s t_0$, i.e. equations \eqref{qpa1discrete}, with two arbitrary integration constants $\Lambda_0\in L_0(\mathbf{b})$ and $\phi_0\in\mathbb{C}^*$.

\subsection{Constructing analytic solutions}
To construct solution with continuous time $t$, we replace the formal variables $\phi$ and $\Lambda$ in the formal series solution \eqref{powerexpansionsol0+} by analytic functions satisfying equations \eqref{indvariabledefi}, as done in the following theorem.
\begin{thm}
	\label{thmtruesol}
	Let $\mathbf{b}\in\mathcal{B}$, define $q$ by \eqref{eqconstraint} and assume $|q|<1$.
	Suppose we have a nonempty open set $T\subseteq\mathbb{C}^*$ with $qT=T$, a function $\Lambda(t)$ which is holomorphic on $T$ and $q$-periodic, i.e. $\Lambda(qt)=\Lambda(t)$, satisfying inequalities 
	\begin{equation}
	\label{condLambda}
	|b_1b_2b_3b_4|<|\Lambda(t)|^2<|b_5b_6b_7b_8|,
	\end{equation}
	for $t\in T$.\\
	Assume $\phi(t)$ is a holomorphic nonvanishing function with, for $t\in T$,
	\begin{align}
	\label{condphi}
	\phi(qt)=\lambda(t)\phi(t),&& \lambda(t):=\frac{\Lambda(t)^2}{b_1b_2b_3b_4}.
	\end{align}
	Then there is an unique meromorphic solution $(f(t),g(t))$ of $q$-$P(A_1)$ on $T$, characteristed by the fact that, for every $V\subseteq T$ open with $\overline{V}^*\subseteq T$ and $qV=V$, there is an $r>0$, such that the series
	\begin{subequations}
		\label{truesolconv}
		\begin{align}
		f^{0,+}(t,\phi(t);\Lambda(t),\mathbf{b})&=\sum_{n=1}^\infty{\sum_{i=-\infty}^n{F_{n,i}^{0,+}(\Lambda(t),\mathbf{b})t^n\phi(t)^i}},\\ g^{0,+}(t,\phi(t);\Lambda(t),\mathbf{b})&=\sum_{n=1}^\infty{\sum_{i=-\infty}^n{G_{n,i}^{0,+}(\Lambda(t),\mathbf{b})t^n\phi(t)^i}},
		\end{align}
	\end{subequations}
	converge uniformly on
	\begin{equation*}
	V\cap \{t\in\mathbb{C}^*:|t|<r\}, \label{Vdomain}
	\end{equation*}
	and we have $f(t)=f^{0,+}(t,\phi(t);\Lambda(t),\mathbf{b})$ and $g(t)=g^{0,+}(t,\phi(t);\Lambda(t),\mathbf{b})$ on this set.\\
	In particular the leading order behaviour of this solution on $V$ as $t\rightarrow 0$, is given by
	\begin{align}
	f(t)\sim \phi(t)t,&& g(t)\sim \Lambda(t)\phi(t)t. \label{leadingordertrue}
	\end{align}
\end{thm}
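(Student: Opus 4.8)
The plan is to feed the analytic functions $\Lambda(t),\phi(t)$ into the formal series solution $(f^{0,+},g^{0,+})$ produced by Theorem~\ref{thmgeneralsol0+}, check that near $t=0$ the substituted series converge to honest holomorphic functions solving $q$-$P(A_1)$, and then propagate this local solution along the $q$-lattice to all of $T$. The first step is to pass to the variables $\zeta_1=t\phi$, $\zeta_2=\phi^{-1}$ of Theorem~\ref{thmgeneralsol0+}, in which the series \eqref{fgzetavar} converge on a bidisc $\{|\zeta_1|,|\zeta_2|<r_0\}$, uniformly for $\Lambda$ in a compact subset of $L_0(\mathbf b)$. So, for a given open $V\subseteq T$ with $qV=V$ and $\overline{V}^{*}\subseteq T$, two things must be shown: that $\Lambda(t)$ stays in a fixed compact subset of $L_0(\mathbf b)$, and that $(\zeta_1(t),\zeta_2(t))$ enters and remains in the bidisc once $|t|$ is small. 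The first is immediate from $q$-periodicity of $\Lambda$: by invariance, $\Lambda(\overline{V}^{*})=\Lambda\bigl(\overline{V}^{*}\cap\{\rho\le|t|\le\rho/|q|\}\bigr)$ is the continuous image of a compact fundamental annulus, hence a compact subset of $L_0(\mathbf b)$ by \eqref{condLambda}.

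The key computation is the second point. Iterating \eqref{condphi} and using $q$-periodicity of $\lambda=\Lambda^2/(b_1b_2b_3b_4)$ gives $\phi(q^st)=\lambda(t)^s\phi(t)$, so $\zeta_1(q^st)=(q\lambda(t))^s\,t\phi(t)$ and $\zeta_2(q^st)=\lambda(t)^{-s}\phi(t)^{-1}$. The two inequalities in \eqref{condLambda} are precisely $|q\lambda(t)|=|\Lambda(t)|^2/|b_5b_6b_7b_8|<1$ and $|\lambda(t)^{-1}|=|b_1b_2b_3b_4|/|\Lambda(t)|^2<1$ on $T$; over the compact fundamental annulus these are bounded by constants strictly below $1$, while $|t\phi(t)|$ and $|\phi(t)^{-1}|$ are bounded there. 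Writing any $t\in V$ with small $|t|$ as $t=q^st'$ with $t'$ in the fundamental annulus and $s$ large, one obtains $|\zeta_1(t)|,|\zeta_2(t)|\to 0$ uniformly as $|t|\to 0$. Hence there is $r=r_V>0$ with $(\zeta_1(t),\zeta_2(t))\in\{|\zeta_1|,|\zeta_2|<r_0\}$ and $\Lambda(t)$ in the above compact set for all $t\in V$ with $|t|<r_V$; composing the convergent series of Theorem~\ref{thmgeneralsol0+} with the holomorphic map $t\mapsto(\zeta_1(t),\zeta_2(t),\Lambda(t))$ then shows that the series \eqref{truesolconv} converge uniformly on $V\cap\{|t|<r_V\}$, defining holomorphic functions $f_V,g_V$ there.

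Since $(f^{0,+},g^{0,+})$ is a formal solution of $q$-$P(A_1)$ with respect to the formal relations \eqref{indvariabledefi}, and $\Lambda,\phi$ satisfy exactly those relations with $\overline t=qt\in T$, the formal identities obtained by substituting into \eqref{qpa1expand} become genuine identities of holomorphic functions on $V\cap\{|t|<r_V\}$ (absolute convergence justifies the rearrangements, and $qV=V$, $|q|<1$ keep $qt$ in the same set), so $(f_V,g_V)$ solves $q$-$P(A_1)$ there, with leading order \eqref{leadingordertrue} because $F_{1,1}^{0,+}=1$ and $G_{1,1}^{0,+}=\Lambda$. As the coefficients $F_{n,i}^{0,+},G_{n,i}^{0,+}$ are fixed rational functions of $(\Lambda,\mathbf b)$ independent of $V$, the $(f_V,g_V)$ agree on overlaps by the identity theorem and glue to a solution $(f,g)$ holomorphic on a punctured neighbourhood of $0$ inside $T$. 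To extend $(f,g)$ to all of $T$ one iterates the evolution \eqref{qpa1discrete}: as $q$-$P(A_1)$ has the singularity confinement property \cite{firstqp6}, equivalently is realised as an automorphism of the $A_1^{(1)}$ Sakai surface, this evolution is invertible with values in $\mathbb C_\infty$, and propagating the local solution forward and backward along each $q^{\mathbb Z}$-orbit gives a meromorphic solution on $T$, exactly as in the discrete-time construction preceding the theorem. Uniqueness goes the same way: any meromorphic solution on $T$ with the stated convergence property coincides with $f^{0,+}(t,\phi(t);\Lambda(t),\mathbf b)$, $g^{0,+}(t,\phi(t);\Lambda(t),\mathbf b)$ on $V\cap\{|t|<r_V\}$, hence agrees with $(f,g)$ on a nonempty open set and, by invertibility of the evolution and the identity theorem, on all of $T$.

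The substantive step is the uniform decay estimate of the second paragraph: it genuinely uses \emph{both} inequalities of \eqref{condLambda} — one forces $|\zeta_1|\to0$, the other $|\zeta_2|\to0$ — and it depends on trading the pointwise bounds for uniform ones through $q$-periodicity and compactness of a fundamental annulus in $\overline{V}^{*}$. Everything afterwards (gluing the $f_V$, extending through confined singularities, uniqueness) is bookkeeping within the framework of \cite{firstqp6} and the discrete construction already carried out; one may additionally record that, by the holomorphic dependence on $\Lambda$ (and on $\mathbf b$, cf.\ Remark~\ref{remark:parameterregular}) in Theorem~\ref{thmgeneralsol0+}, the solution varies analytically with the data $(\Lambda,\phi,\mathbf b)$ should this be wanted.
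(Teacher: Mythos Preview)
Your proof is correct and follows essentially the same approach as the paper: pass to the variables $\zeta_1=t\phi$, $\zeta_2=\phi^{-1}$, use $q$-periodicity and compactness of a fundamental annulus in $\overline{V}^{*}$ to turn the pointwise bounds \eqref{condLambda} into uniform ones forcing $\zeta_1,\zeta_2\to 0$, invoke the uniform convergence on $Z\times L$ from Theorem~\ref{thmgeneralsol0+}, and then propagate the local solution through $V$ (and $T$) via the backward $q$-$P(A_1)$ iteration. The paper writes the decay estimate using $\log_{|q|}|t|$ rather than your explicit $q^s$-iteration, and it extends by rewriting $q$-$P(A_1)$ as $f=H_1(t,\overline f,g)$, $g=H_2(t,\overline f,\overline g)$ rather than citing singularity confinement directly, but these are purely presentational differences.
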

\begin{proof}
	Let us take any nonempty open set $V\subseteq T$ with $qV=V$, such that $\overline{V}^*\subseteq T$.
	We define 
	\begin{equation*}
	V_{\text{ann}}=\overline{V}\cap \{t\in \mathbb{C}:1\leq |t|\leq |q|^{-1}\},
	\end{equation*}
	then $V_{\text{ann}}$ is a compact subset of $T$.\\
	We define
	\begin{align*}
	\lambda^+=\sup_{t\in V_{\text{ann}}}{|\lambda(t)|},&& \lambda^-=\inf_{t\in V_{\text{ann}}}{|\lambda(t)|},&& \phi^+=\sup_{t\in V_{\text{ann}}}{|\phi(t)|}, &&\phi^-=\inf_{t\in V_{\text{ann}}}{|\phi(t)|},
	\end{align*}
	then we have, by inequalities \eqref{condLambda},
	\begin{equation}
	\label{lambdaineq}
	1<\lambda^-\leq \lambda^+<|q|^{-1}.
	\end{equation}
	By equation \eqref{condphi}, we obtain, 
	\begin{equation*}
	(\lambda^-)^{\log_{|q|}(|t|)}\phi^-\leq |\phi(t)|\leq (\lambda^+)^{\log_{|q|}(|t|)+1}\phi^+,
	\end{equation*}
	for all $t\in V$.\\
	Let us introduce the variables 
	\begin{align}
	\zeta_1(t)=t\phi(t),&&\zeta_2(t)=\phi(t)^{-1},\label{zetatdefi}
	\end{align}
	then we have inequalities
	\begin{subequations}
		\label{ineqzeta}
		\begin{align}
		|\zeta_1(t)|&\leq \lambda^+\phi^+(|q|\lambda^+)^{\log_{|q|}(|t|)},\\
		|\zeta_2(t)|&\leq \left(\phi^-\right)^{-1}(\lambda^-)^{-\log_{|q|}(|t|)},
		\end{align}
	\end{subequations}
	for $t\in V$.\\
	Determine $L\subseteq L_0(\mathbf{b})$ open with $\overline{L}\subseteq L_0(\mathbf{b})$, such that
	\begin{equation*}
	\{\Lambda(t):t\in V\}\subseteq L.
	\end{equation*}
	By Theorem \ref{thmgeneralsol0+}, there is an open environment $Z\subseteq \mathbb{C}^2$ of $\mathbf{0}$, such that the series \eqref{fgzetavar} converge uniformly in $(\bm{\zeta},\Lambda)$ on $Z\times L$. By inequalities \eqref{lambdaineq} and \eqref{ineqzeta}, we can determine an $r>0$ such that $\bm{\zeta}(t)\in Z$ for $|t|<r$. It follows that the series \eqref{truesolconv} converge uniformly on \eqref{Vdomain}, defining analytic solutions of $q$-$P(A_1)$ on this set.
	Of course, we can now use the $q$-$P(A_1)$ equation, to analytically extend the domain of the solution $(f_V(t),g_V(t))$ to $V$, giving a uni-valued meromorphic solution on $V$. Indeed, by rewriting $q$-$P(A_1)$ as
	\begin{align*}
	f=H_1\left(t,\overline{f},g\right), &&g=H_2\left(t,\overline{f},\overline{g}\right),
	\end{align*}
	and setting $f=f_V(t)$ and $g=g_V(t)$ in the above equations, analytic continuation to $V$ is easily obtained.\\
	As we can do so for any open set $V\subseteq T$ with $\overline{V}^*\subseteq T$ and $qV=V$, we take the union of these solutions $f_V(t)$ and $g_V(t)$, giving an unique meromorphic solution $(f(t),g(t))$ of $q$-$P(A_1)$ on $T$.
	\end{proof}

\begin{remark}
We note that if Conjecture \ref{conjecture} is true, then condition \eqref{condLambda} can be relaxed substantially. Indeed note that condition \eqref{condLambda} is equivalent to  $1<|\lambda(t)|<|q|^{-1}$ on $T$. If Conjecture \ref{conjecture} is true, then we can relax this condition to $|q|<|\lambda(t)|<|q|^{-1}$ and $\lambda(t)\neq 1$ on $T$. Indeed, let us set
\begin{align*}
T_+=\{t\in T: |\lambda(t)|>1\}, && T_0=\{t\in T: |\lambda(t)|=1\}, && T_-=\{t\in T: |\lambda(t)|<1\}.
\end{align*}
Using Theorem \ref{thmtruesol} we can construct a solution $(f_+(t),g_+(t))$ on $T_+$. Using a dual result of Theorem \ref{thmtruesol} for the formal `minus' solutions \eqref{minussol}, we construct a solution $(f_-(t),g_-(t))$ on $T_-$. If Conjecture \ref{conjecture} is true, then these solutions should match on $T_0$, giving a global solution $(f(t),g(t))$ on $T$. Passing through the set $T_0=\{t\in T: |\lambda(t)|=1\}$, an interesting transition in leading order behaviour near $t=0$ is expected, see in particular equations \eqref{oscasympdamp}. 
\end{remark}
Note that, upon fixing the domain $T$, the solution defined in the above theorem contains essentially two free parameters, the functions $\Lambda(t)$ and $\phi(t)$. After a proper choice for $\Lambda(t)$, we can take any $\phi(t)$ which satisfies the $q$-difference equation \eqref{condphi} and the other required conditions in Theorem \ref{thmtruesol}. As the $q$-difference equation \eqref{condphi} determines $\phi(t)$ upto a $q$-periodic function, this implies that we essentially have the freedom of choosing two arbitrary $q$-periodic functions for the solution defined in Theorem \ref{thmtruesol}, which is precisely what one would expect for a (locally) general solution to a second order $q$-discrete equation. This however, does not imply that Theorem \ref{thmtruesol} contains all possible critical behaviours at $t=0$ for solutions of $q$-$P(A_1)$. Indeed the holomorphic solutions, defined in Proposition \ref{merosolution0}, are not captured directly by Theorem \ref{thmtruesol}.

Allthough holomorphicity seems natural in Theorem \ref{thmtruesol}, one could leave it away or replace it for, for instance, continuity, to obtain a broader class of solutions.
Note that we can also formulate a real version of Theorem \ref{thmtruesol}. That is, assume that the parameters $\mathbf{b}$ satisfy
\begin{align*}
\overline{\{b_1,b_2,b_3,b_4\}}=\{b_1,b_2,b_3,b_4\},&& \overline{\{b_5,b_6,b_7,b_8\}}=\{b_5,b_6,b_7,b_8\},
\end{align*}
with $q(\mathbf{b})\in\left(0,1\right)$.\\
We let $\phi(t)$ and $\Lambda(t)$ be nonzero real-valued continuous functions satisfying \eqref{condLambda} and \eqref{condphi} on $\mathbb{R}_+$. Then there is an unique real-valued piecewise continuous solution $(f(t),g(t))$ of $q$-$P(A_1)$ on $\mathbb{R}_{+}$, such that there is an $r>0$ such that the series expansions \eqref{truesolconv} converge uniformly on $(0,r)$ and $f(t)=f^{0,+}(t,\phi(t);\Lambda(t),\mathbf{b})$ and $g(t)=g^{0,+}(t,\phi(t);\Lambda(t),\mathbf{b})$ on $(0,r)$. Setting $f(0)=g(0)=0$ the obtained solution $(f(t),g(t))$ is continuous on $[0,r)$.

\subsection{Complex power-type series solutions}
Let us discuss a special case of Theorem \ref{thmtruesol}. We take $\Lambda(t)$ constant, so we take a $\Lambda\in\mathbb{C}$ satisfying
\begin{equation*}
|b_1b_2b_3b_4|<|\Lambda|^2<|b_5b_6b_7b_8|,
\end{equation*}
and set $\Lambda(t)=\Lambda$.\\
We define $\lambda=\frac{\Lambda^2}{b_1b_2b_3b_4}\in \mathbb{C}$ and determine a $\rho\in\mathbb{C}$ such that $q^\rho=\lambda$.
We choose a $\phi_0\in\mathbb{C}^*$ and set $\phi(t)=\phi_0 t^\rho$. As $\rho\notin \mathbb{Z}$, we have to impose a branchcut on the domain $T\subseteq\mathbb{C}^*$, and in order to meet the requirement $qT=T$, we set this branchcut equal to a (continuous) $q$-spiral. That is, we choose a $\theta_b\in\mathbb{R}$, and set
\begin{equation}
\label{Tdefi}
T=\mathbb{C}^*\setminus \left\{e^{\theta_bi}q^s:s\in\mathbb{R}\right\}. 
\end{equation}
Then we can define the complex exponential $t^\rho$ uni-valued on $T$, such that
\begin{equation*}
\phi(qt)=\lambda \phi(t),
\end{equation*}
holds for all $t\in T$.\\
More explicitly, we could for instance define $t^\rho$ on $T$ as follows. Let $t\in T$, then there is an unique $s\in\mathbb{R}$ and an unique $\theta\in (\theta_b,\theta_b+2\pi)$ such that $t=q^se^{i\theta}$, and we define
\begin{equation}
\label{tpowerrho}
t^\rho=\lambda^se^{-\Im{(\rho)}\theta} e^{\Re{(\rho)}\theta i}.
\end{equation}
Applying Theorem \ref{thmtruesol} gives us an unique meromorphic solution $(f(t),g(t))$ of $q$-$P(A_1)$ on $T$, such that $f(t)$ and $g(t)$ are described by
\begin{subequations}
\label{specialcaseexp}
\begin{align}
f(t)&=f^{0,+}\left(t,\phi_0 t^\rho;q^\rho,\mathbf{b}\right)=\sum_{n=1}^\infty\sum_{i=-\infty}^n{F_{n,i}^{0,+}\left(q^\rho,\mathbf{b}\right)\phi_0^it^{\rho i+n}},\\ 
g(t)&=g^{0,+}\left(t,\phi_0 t^\rho;q^\rho,\mathbf{b}\right)=\sum_{n=1}^\infty\sum_{i=-\infty}^n{G_{n,i}^{0,+}\left(q^\rho,\mathbf{b}\right)\phi_0^it^{\rho i+n}},
\end{align}
\end{subequations}
for $t$ close to $0$, on every open set $V\subseteq T$ with $\overline{V}^*\subseteq T$ and $qV=V$.\\
However, since $\lambda$ is constant, it might come as no surprise that we actually have global uniform convergence of the expansions \eqref{specialcaseexp} on $T$ intersected with a neighbourhood of the origin. That is, there is an $r>0$, such that the expansions \eqref{specialcaseexp} converge uniformly on 
\begin{equation*}
T\cap \{t\in\mathbb{C}^*:|t|<r\}.
\end{equation*}
An interesting special case occurs when $\lambda\in q^\mathbb{R}$, then we can choose $\rho\in\mathbb{R}$, and as $1<|\lambda|<|q|^{-1}$, we have $-1<\rho<0$.
In Section \ref{solcontinuumlimit} we identify the solutions \eqref{specialcaseexp} with the solutions of Painlev\'e VI found by Jimbo \cite{Jimbo} in the continuum limit $q\rightarrow 1$.

\subsection{Oscillatory-type solutions}
Another case of special interest is given by setting $\lambda=e^{\theta i}$ in Theorem \ref{thmtruesol}, where $\theta\in \mathbb{R}\setminus 2\pi \mathbb{Z}$. Indeed, by Remark \ref{remarkrelax}, the expansions \eqref{powerexpansionsol0+} are well-defined in this case, and converge. This gives rise to solutions of $q$-$P(A_1)$ with leading order behaviour of damped oscillatory-type. Indeed, let $|q|<1$ as before, given a nonempty open set $T\subseteq\mathbb{C}^*$ with $qT=T$ and a nonvanishing function $\phi(t)$ satisfying $\phi(qt)=e^{\theta i}\phi(t)$ on $T$, setting 
\begin{equation*}
\Lambda(t)=\Lambda=\pm (b_1b_2b_3b_4)^{\tfrac{1}{2}}e^{\tfrac{1}{2}\theta i},
\end{equation*}
we can construct an unique meromorphic solution $(f(t),g(t))$ of $q$-$P(A_1)$ on $T$, such that,
for every $V\subseteq T$ open with $\overline{V}^*\subseteq T$ and $qV=V$, there is an $r>0$, such that the series \eqref{truesolconv} converges converge uniformly in $t$ on
\begin{equation*}
V\cap \{t\in\mathbb{C}^*:|t|<r\},
\end{equation*}
and we have $f(t)=f^{0,+}(t,\phi(t);\Lambda,\mathbf{b})$ and $g(t)=g^{0,+}(t,\phi(t);\Lambda,\mathbf{b})$ on this set.\\
The leading order behaviour of this solution is given by
\begin{subequations}
\label{oscasympdamp}
\begin{align}
f(t)&=t\left(\phi(t)+F_{\text{eq}}(\Lambda,\mathbf{b})+\mu(\Lambda,\mathbf{b})\phi(t)^{-1}\right)+\mathcal{O}\left(t^2\right),\\ 
g(t)&=t\left(\Lambda \phi(t)+G_{\text{eq}}(\Lambda,\mathbf{b})+\frac{b_1b_2b_3b_4}{\Lambda}\mu(\Lambda,\mathbf{b})\phi(t)^{-1}\right)+\mathcal{O}\left(t^2\right),
\end{align}
\end{subequations}
as $t\rightarrow 0$ in $V$ as above.\\
We are tempted to call these damped oscillatory-type solutions. Indeed, if $\theta\in \mathbb{Q}$, then $\phi(t)$ is periodic, leading to a vast number of possible damped oscillatory-type asymptotics in equations \eqref{oscasymp}, for different choices of $\phi(t)$. On the other hand, for any $\phi_0\in\mathbb{C}^*$, setting $\phi(t)=\phi_0t^{i\rho}$, where $\rho=\theta\log{(q)}^{-1}$, we have $\phi(q^s t)=\phi(t)$, where $s=\frac{2\pi}{\theta}\in\mathbb{R}$, which gives damped oscillatory-type asymptotics in equations \eqref{oscasympdamp} on $q$-spirals as well. Heuristically speaking, the latter solutions are related to the damped oscillatory-type solutions Guzzetti \cite{Guzzettielliptic} obtained for the continuous sixth Painlev\'e equation \eqref{pVI}, via the continuum limit discussed in Section \ref{sectioncontinuum}.\\
Let us get back to the general case. Applying B\"acklund Transformation $\mathcal{T}_1$ to the solution $(f(t),g(t))$, we see that
\begin{align*}
\widetilde{f}(t)=\frac{t}{f(t)}, && \widetilde{g}(t)=\frac{t}{g(t)},
\end{align*}
defines a meromorphic solution of $q$-$P(A_1)(\mathbf{b}^{(1)})$ on $T$.\\
Note that the leading order terms in equations \eqref{oscasympdamp} can not vanish identically on a non-empty open subset of $T$.
For any set $V\subseteq T$ satisfying $\overline{V}^*\subseteq T$ and $0\in\overline{V}$, such that $\overline{V}^*$ does not contain any zeros of the denominators appearing in equations \eqref{oscasymp}, we have
\begin{subequations}
\label{oscasymp}
\begin{align}
\widetilde{f}(t)&=\frac{1}{\phi(t)+F_{\text{eq}}(\Lambda,\mathbf{b})+\mu(\Lambda,\mathbf{b})\phi(t)^{-1}}+\mathcal{O}\left(t\right),\\ 
\widetilde{g}(t)&=\frac{1}{\Lambda \phi(t)+G_{\text{eq}}(\Lambda,\mathbf{b})+\frac{b_1b_2b_3b_4}{\Lambda}\mu(\Lambda,\mathbf{b})\phi(t)^{-1}}+\mathcal{O}\left(t\right),
\end{align}
\end{subequations}
as $t\rightarrow 0$ on $V$.\\
We are tempted to call these undamped oscillatory-type solutions. Note however, that for a bad choice of $\phi(t)$, the poles of $\widetilde{f}(t)$ and $\widetilde{g}(t)$ in $T$ might accumulate at $t=0$ in a rather unpleasant way. We easily circumvent this problem by demanding $\phi(t)$ to be such that the denominators appearing \eqref{oscasymp} do not vanish on $T$, for $t$ close enough to $0$. 
We remark that Guzzetti \cite{Guzzettielliptic} also obtained undamped oscillatory-type solutions to the continuous sixth Painlev\'e equation \eqref{pVI}.

\subsection{Asymptotics of true solutions}
Note that the asymptotics \eqref{leadingordertrue} are useful for identifying $\Lambda(t)$ and $\phi(t)$ for a given numerical solution of $q$-$P(A_1)$. That is, suppose one has a solution $(f(t),g(t))$ on an open domain $T\subseteq\mathbb{C}^*$ with $qT=T$. By numerically comparing the leading order behaviour of $f(t)$ and $g(t)$ as $t\rightarrow 0$ in $T$ with equations \eqref{leadingordertrue}, one can determine whether this solution can be described by equations \eqref{truesolconv} and determine the associated $\Lambda(t)$ and $\phi(t)$ numerically. The other way around, given $\Lambda(t)$ and $\phi(t)$, we are interested in obtaining numerics of the via Theorem \ref{thmtruesol} associated solution. So suppose we are given $\Lambda(t)$ and $\phi(t)$ on some open domain $T\subseteq\mathbb{C}^*$ with $qT=T$ satisfying the necessary conditions in Theorem \ref{thmtruesol}. Let $V\subseteq T$ be open with $\overline{V}^*\subseteq T$ and $qV=V$, then there is an $r>0$ such that the series \eqref{truesolconv} converge uniformly to $(f(t),g(t))$ on 
\begin{equation*}
V\cap \{t\in\mathbb{C}^*:|t|<r\}.
\end{equation*}
From a theoretical point of view, this means that we could obtain arbitrary accurate numerics of the solution $(f(t),g(t))$ on this set by calculating sufficiently many coefficients in the series \eqref{truesolconv}. However, in practice these coefficients seem to be quite hard to calculate for large $n$. As an example, writing the coefficients $F_{2,0}^{0,+}(\Lambda,\mathbf{b})$ and $G_{2,0}^{0,+}(\Lambda,\mathbf{b})$ of expansions \eqref{coefpowerexpansionsol0+}, down explicitly as a ratio of polynomials in $\Lambda$ and $b_1,\ldots, b_8$ already requires a couple of pages. Despite this drawback, note that we have $1\ll \phi(t)\ll t^{-1}$ on $V$ as $t\rightarrow 0$.  Therefore
\begin{align*}
\phi(t)^{-1}\ll t, && \phi(t)^it^n\ll t,
\end{align*}
for $i\in\mathbb{Z}_{<n}$ and $n\geq 2$, as $t\rightarrow 0$ in $V$.\\
Therefore by Theorem \ref{thmtruesol} and equations \eqref{truesolconv}, we have
\begin{align*}
f(t)&=f^{0,+}(t,\phi(t);\Lambda(t),\mathbf{b})=F_{1,0}^{0,+}(\Lambda(t),\mathbf{b})t+\sum_{n=1}^\infty{{F_{n,n}^{0,+}(\Lambda(t),\mathbf{b})t^n\phi(t)^n}}+o(t),\\
g(t)&=g^{0,+}(t,\phi(t);\Lambda(t),\mathbf{b})=G_{1,0}^{0,+}(\Lambda(t),\mathbf{b})t+\sum_{n=1}^\infty{{G_{n,n}^{0,+}(\Lambda(t),\mathbf{b})t^n\phi(t)^n}}+o(t),
\end{align*}
as $t\rightarrow 0$ in $V$.\\
And hence, using equations \eqref{genfunc}, we obtain
\begin{align*}
f(t)&=F_{1,0}^{0,+}(\Lambda(t),\mathbf{b})t+\frac{1}{F_1^{0,+}\left(t^{-1}\phi(t)^{-1},\Lambda(t)^{-1},\mathbf{b}^{(1)}\right)}+o(t),\\
g(t)&=G_{1,0}^{0,+}(\Lambda(t),\mathbf{b})t+\frac{1}{G_1^{0,+}\left(t^{-1}\phi(t)^{-1},\Lambda(t)^{-1},\mathbf{b}^{(1)}\right)}+o(t),
\end{align*}
as $t\rightarrow 0$ in $V$, and explicit formulas for $F_1^{0,+}$ and $G_1^{0,+}$ are given by equations \eqref{generalsolaut} and \eqref{F0G0formal}.\\
These closed formulas seem to indeed be quite useful for practical purposes such as numerical analysis.  To illustrate this, let us rewrite $q$-$P(A_1)$ as
\begin{align*}
f=H_1\left(t,\overline{f},g\right), &&g=H_2\left(t,\overline{f},\overline{g}\right).
\end{align*}
We put
\begin{align*}
f_0(t)&=F_{1,0}^{0,+}(\Lambda(t),\mathbf{b})t+\frac{1}{F_1^{0,+}\left(t^{-1}\phi(t)^{-1},\Lambda(t)^{-1},\mathbf{b}^{(1)}\right)},\\
g_0(t)&=G_{1,0}^{0,+}(\Lambda(t),\mathbf{b})t+\frac{1}{G_1^{0,+}\left(t^{-1}\phi(t)^{-1},\Lambda(t)^{-1},\mathbf{b}^{(1)}\right)},
\end{align*}
and define, for $n\in\mathbb{N}$,
\begin{align}
f_{n+1}(t)=H_1\left(t,f_{n}(qt),g_{n+1}(t)\right),&& g_{n+1}(t)=H_2\left(t,f_{n}(qt),g_{n}(qt)\right). \label{Hrec}
\end{align}
Of course $(f(t),g(t))$ is a fixed point of the above mapping and we hope that the $f_n$ and $g_n$ converge at least uniform to $f$ and $g$ on compact sets, which do not contain poles of $f$ and $g$, as $n\rightarrow \infty$. So far, numerics indicate that this is mostly the case, which allows us to obtain accurate numerics with relatively small computational effort. Clearly this motivates to do a stability analysis of the mapping \eqref{Hrec}. We will not pursue this issue further here.

\section{Six special $1$-parameter families of solutions}
\label{section:6special}
As a consequence of Conjecture \ref{conjecture}, we expect the inner summations in \eqref{coefpowerexpansionsol0+} to terminate at $i=0$, i.e. all negative powers of $\phi$ to disappear, when $\Lambda$ is equal to any of the roots of $\mu(\Lambda,\mathbf{b})$. Indeed we have the following result.
\begin{prop}
	\label{prop:holopert}
	Let $k\in\{1,2,3\}$, and $\Lambda_k^\pm$ and $\lambda_k$ be defined as in Section \eqref{section:special6}, where we fix the sign $\pm$ throughout the proposition. Take $\mathbf{b}\in \mathcal{B}$ such that 
	\begin{equation}\label{eq:1Qcon}
	1\notin Q_s:=\{(\lambda_k^{\pm 1}q)^{m-1}q^{n}: (m,n)\in\mathbb{N}^2\setminus\{(1,0)\}.
	\end{equation}
	Then, setting $\Lambda=\Lambda_k^\pm$, the formal solution \eqref{powerexpansionsol0+} of $q$-$P(A_1)$, defined in Theorem \ref{thmgeneralsol0+}, takes the form
	\begin{align}
	f^{0,+}(t,\phi;\Lambda_k^\pm,\mathbf{b})=\sum_{n=1}^\infty{\sum_{i=0}^n{F_{n,i}^{0,+}(\Lambda_k^\pm,\mathbf{b})\phi^it^n}}, && g^{0,+}(t,\phi;\Lambda_k^\pm,\mathbf{b})=\sum_{n=1}^\infty{\sum_{i=0}^n{G_{n,i}^{0,+}(\Lambda_k^\pm,\mathbf{b})\phi^it^n}},\label{eq:fgs}
	\end{align}
	where $\phi$ satisfies $\overline{\phi}=\lambda_k^{\pm 1}\phi$.\\
	Assuming $|q|<1$, $|\lambda_k^{\pm 1}|<|q|^{-1}$ and $\lambda_k^{\pm 1}\notin q^{\mathbb{N}^*}$, condition \eqref{eq:1Qcon} is satisfied and this formal solution, written in terms of the variables $t$ and $\zeta_1=t\phi$,
	\begin{subequations}
	\label{eq:fgzeta1var}
	\begin{align}
	f^{0,+}(t,\zeta_1/t;\Lambda_k^\pm,\mathbf{b})&=\sum_{m=1}^\infty{F_{m,0}^{0,+}(\Lambda_k^\pm,\mathbf{b})t^m}+
	\sum_{i=1}^\infty{\sum_{m=0}^\infty{F_{m+i,i}^{0,+}(\Lambda_k^\pm,\mathbf{b})\zeta_1^it^m}},\\ g^{0,+}(t,\zeta_1/t;\Lambda_k^\pm,\mathbf{b})&=\sum_{m=1}^\infty{G_{m,0}^{0,+}(\Lambda_k^\pm,\mathbf{b})t^m}+
	\sum_{i=1}^\infty{\sum_{m=0}^\infty{G_{m+i,i}^{0,+}(\Lambda_k^\pm,\mathbf{b})\zeta_1^it^m}},
	\end{align}
	\end{subequations}
	converges near $(t,\zeta_1)=(0,0)$.\\
	Furthermore, the pair of isolated power series in \eqref{eq:fgzeta1var}, equals the solution $(f^{(1,k)},g^{(1,k)})$, holomorphic at $t=0$, defined in Proposition \ref{merosolution1}, that is,
	\begin{subequations}
	\label{eq:identifyholo}
	\begin{align}
	f^{0,+}(t,0;\Lambda_k^\pm,\mathbf{b})&=\sum_{m=1}^\infty{F_{m,0}^{0,+}(\Lambda_k^\pm,\mathbf{b})t^m}=f^{(1,k)}(t),\\ g^{0,+}(t,0;\Lambda_k^\pm,\mathbf{b})&=\sum_{m=1}^\infty{G_{m,0}^{0,+}(\Lambda_k^\pm,\mathbf{b})t^m}=g^{(1,k)}(t),
	\end{align}
\end{subequations}
	and in particular these do not depend on the choice of sign $\pm$ in $\Lambda_k^\pm$.	
\end{prop}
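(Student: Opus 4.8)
The plan is to establish the three assertions of the proposition in turn: the termination \eqref{eq:fgs}, then the convergence of \eqref{eq:fgzeta1var}, and finally the identification \eqref{eq:identifyholo}.

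\emph{Termination.} Since $\Lambda_k^\pm$ is by definition a root of $\mu(\,\cdot\,,\mathbf b)$, equations \eqref{generalsolaut}, \eqref{F0G0formal} and \eqref{constantsolaut} give $F_1^{0,+}(\phi;\Lambda_k^\pm,\mathbf b)=\phi+f_1^{(1,k)}$ and $G_1^{0,+}(\phi;\Lambda_k^\pm,\mathbf b)=\Lambda_k^\pm\phi+g_1^{(1,k)}$, which are affine in $\phi$; in particular the $\phi^{-1}$-mode is absent. I would then prove by strong induction on $n$ that $F_n^{0,+}(\,\cdot\,;\Lambda_k^\pm,\mathbf b)$ and $G_n^{0,+}(\,\cdot\,;\Lambda_k^\pm,\mathbf b)$ are polynomials in $\phi$ of degree at most $n$. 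Granting this below $n$, the right-hand sides $R_n^{(1)},R_n^{(2)}$ of \eqref{recursiongencoef}, being of weighted degree $\le n+3$ in generators of positive $\phi$-degree, are polynomials in $\phi$ of degree $\le n+3$ with no negative powers; likewise the $\phi$-dependent coefficients on the left of \eqref{recursiongencoef} are polynomials in $\phi$ of degree $3$. Matching coefficients of $\phi^{\ell}$ in \eqref{recursiongencoef} for $\ell=n+3,n+2,\dots,3$ then yields, at each step, a $2\times 2$ linear system for $(F_{n,\ell-3}^{0,+},G_{n,\ell-3}^{0,+})$ whose coefficient matrix, after the reparametrisation \eqref{Pdefi} and the diagonalisation of $M$ leading to \eqref{DH}, is of the form $q_1^{\,a}q_2^{\,b}I-D(\Lambda_k^\pm,\mathbf b)$ with $a=n-1\ge 0$, $b=n-\ell+3\ge 0$, $D$ having the single eigenvalue $1$. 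The pairs $(a,b)$ occurring here — and those of the remaining, lower-order coefficients — are exactly those with $b\ge 0$ and $a\ge b-1$, and \eqref{eq:1Qcon} is precisely the statement that $q_1^{a}q_2^{b}\notin\mathrm{spec}\,D$ for all such $(a,b)\ne(0,0)$; so these matrices are invertible and determine $F_{n,j}^{0,+},G_{n,j}^{0,+}$ for $0\le j\le n$ within the required class.

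The heart of the argument is to show that the remaining equations — the $\phi^{\ell}$-coefficients of \eqref{recursiongencoef} for $\ell\le 2$ — are then automatically satisfied with $F_{n,j}^{0,+}=G_{n,j}^{0,+}=0$ for $j<0$; equivalently, that the truncated pair \eqref{eq:fgs} solves \eqref{recursiongencoef} identically. This is where $\mu(\Lambda_k^\pm,\mathbf b)=0$ is used decisively: it is the codimension-one condition under which the leading-order autonomous system \eqref{losystemeq} carries the rational families \eqref{eq:6specialfamilies} with no $\phi^{-1}$-mode, and this rationality must propagate to all orders. One route is computational: after eliminating the unknowns $F_{n,j}^{0,+},G_{n,j}^{0,+}$ ($0\le j\le n$), each of the finitely many surplus relations becomes a rational function of $(\Lambda,\mathbf b)$ which one checks to vanish on the entire zero locus of $\mu(\,\cdot\,,\mathbf b)$ — precisely $\{\Lambda_1^\pm,\Lambda_2^\pm,\Lambda_3^\pm\}$ — hence at $\Lambda_k^\pm$. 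A more structural route is to re-run the $q$-Briot--Bouquet Theorem \ref{genbriottheoremsev} for $q$-$P(A_1)$ in the variables $(\zeta_1,\zeta_2)=(t\phi,\phi^{-1})$ of the proof of Theorem \ref{thmgeneralsol0+} and to show that, when $\mu(\Lambda_k^\pm,\mathbf b)=0$, the resulting solution $(y_1,y_2)$ is supported in the cone $\{(a,b):b\le a+1\}$: the base case is the affineness of $F_1^{0,+},G_1^{0,+}$ above, and the inductive step is a cone-preservation property of the rational maps $H_1,H_2$ of \eqref{Hqpa1}, i.e.\ a bookkeeping of $\zeta_2$-degrees. Since $F_{n,i}^{0,+}$ equals the $\zeta_1^{\,n-1}\zeta_2^{\,n-i}$-coefficient of $y_1$ (for $(n,i)\ne(1,1)$), and likewise for $G$, this support statement is exactly \eqref{eq:fgs}. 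I expect this cone-preservation bookkeeping — equivalently, the divisibility by $\mu$ of the surplus relations — to be the main obstacle. Either way, once \eqref{eq:fgs} is shown to satisfy \eqref{recursiongencoef} identically, the uniqueness clause of Theorem \ref{thmgeneralsol0+} forces the truncated pair to coincide with $f^{0,+},g^{0,+}$ at $\Lambda=\Lambda_k^\pm$, which is \eqref{eq:fgs}.

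\emph{Convergence.} With \eqref{eq:fgs} in hand, the double series \eqref{fgzetavar} at $\Lambda=\Lambda_k^\pm$ has no term $\zeta_1^n\zeta_2^m$ with $m>n$, so it collapses to an ordinary power series in $\zeta_1$ and $\zeta_1\zeta_2=t$ — precisely \eqref{eq:fgzeta1var}. Under the hypotheses $|q|<1$ and $|\lambda_k^{\pm1}|<|q|^{-1}$ one has $|q_1|=|q\lambda_k^{\pm1}|<1$; combining this with the coefficient bounds furnished by the convergence of \eqref{fgzetavar} in Theorem \ref{thmgeneralsol0+} (applicable here since \eqref{eq:1Qcon} holds) gives a polydisc of convergence for \eqref{eq:fgzeta1var} in $(t,\zeta_1)$ about the origin, with holomorphic dependence on $\Lambda$ as in Theorem \ref{thmgeneralsol0+}. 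This collapse is exactly why only the one-sided bound $|\lambda_k^{\pm1}|<|q|^{-1}$, rather than a two-sided bound as in \eqref{eq:L0defi}, is needed: the variable $\zeta_2$, whose multiplier $q_2$ may have modulus $\ge 1$, no longer appears.

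\emph{Identification.} Because \eqref{eq:fgs} has no negative powers of $\phi$, one may set $\phi=0$; as $\overline 0=\lambda_k^{\pm1}\cdot 0$ this is consistent with \eqref{indvariabledefi}, so $\sum_m F_{m,0}^{0,+}(\Lambda_k^\pm,\mathbf b)t^m$ and $\sum_m G_{m,0}^{0,+}(\Lambda_k^\pm,\mathbf b)t^m$ are formal power-series solutions of $q$-$P(A_1)$, convergent near $t=0$ (set $\zeta_1=0$ in \eqref{eq:fgzeta1var}). Their leading coefficients are $F_{1,0}^{0,+}(\Lambda_k^\pm,\mathbf b)=F_{\text{eq}}(\Lambda_k^\pm,\mathbf b)=f_1^{(1,k)}$ and $g_1^{(1,k)}$ by \eqref{F0G0formal}, \eqref{generalsolaut} and \eqref{constantsolaut}. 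Since \eqref{eq:1Qcon} forces $\lambda_k^{\pm1}\notin q^{\mathbb Z}$, Proposition \ref{merosolution1} applies, and its uniqueness clause identifies these series with $f^{(1,k)}(t)$ and $g^{(1,k)}(t)$; this is \eqref{eq:identifyholo}. As $f^{(1,k)},g^{(1,k)}$ are defined in Proposition \ref{merosolution1} with no reference to the sign in $\Lambda_k^\pm$, the claimed sign-independence follows.
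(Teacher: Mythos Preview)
Your proposal correctly isolates the three claims and your treatment of \emph{identification} is fine, but the \emph{termination} step --- which you yourself flag as ``the main obstacle'' --- is not actually proved. Neither route you sketch is carried out: you do not verify the ``divisibility by $\mu$'' of the surplus relations, and the ``cone-preservation'' bookkeeping for $H_1,H_2$ in the $(\zeta_1,\zeta_2)$ variables is asserted rather than checked. Since termination is the whole content of the proposition, this is a genuine gap rather than a routine omission. A secondary issue: your convergence argument appeals to the coefficient bounds of Theorem \ref{thmgeneralsol0+}, but that theorem's convergence clause requires $\Lambda\in L_0(\mathbf b)$, i.e.\ the two-sided bound $1<|\lambda|<|q|^{-1}$, which need not hold for $\lambda_k^{\pm1}$; so you cannot simply import those bounds and then ``drop $\zeta_2$''.

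The paper avoids both difficulties by a different, constructive device: instead of trying to show that the coefficients $F_{n,i}^{0,+},G_{n,i}^{0,+}$ vanish for $i<0$, it \emph{builds} a solution that manifestly has the form \eqref{eq:fgs} and then invokes the uniqueness clause of Theorem \ref{thmgeneralsol0+}. Concretely, one first produces the holomorphic solution $(f^{(1,k)},g^{(1,k)})$ of Proposition \ref{merosolution1} and then perturbs around it,
\[
f=f^{(1,k)}(t)+\zeta_1\bigl(1+y_1(t,\zeta_1)\bigr),\qquad
g=g^{(1,k)}(t)+\zeta_1\bigl(\Lambda_k^\pm+y_2(t,\zeta_1)\bigr),\qquad \zeta_1=t\phi.
\]
In these variables $q$-$P(A_1)$ becomes a system $y_i(qt,q\lambda_k^{\pm1}\zeta_1)=H_i(t,\zeta_1,y_1,y_2)$; the crucial point is that the apparent $\zeta_1^{-1}$-term in $H_i$ vanishes \emph{precisely because} $(f^{(1,k)},g^{(1,k)})$ is a solution, so $H$ is holomorphic at the origin. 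One then applies the $q$-Briot--Bouquet Theorem \ref{genbriottheoremsev} directly in the $(t,\zeta_1)$ variables (the Jacobian has eigenvalues $1$ and $\lambda_k^{\mp2}$, and \eqref{eq:1Qcon} is exactly the non-resonance condition needed). This simultaneously gives a power-series solution in $(t,\zeta_1)$ --- hence polynomial in $\phi$ at each $t$-order, which is termination --- and its convergence near $(0,0)$ under the one-sided hypothesis $|q\lambda_k^{\pm1}|<1$, without ever passing through Theorem \ref{thmgeneralsol0+}. Matching leading coefficients and invoking uniqueness then identifies this solution with $f^{0,+},g^{0,+}$ at $\Lambda=\Lambda_k^\pm$. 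The identification \eqref{eq:identifyholo} is immediate from the ansatz at $\zeta_1=0$. A few auxiliary genericity assumptions (needed for Proposition \ref{merosolution1} and to separate the two Jacobian eigenvalues) are imposed and afterwards removed by analytic continuation in $\mathbf b$.
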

\begin{proof}
	For notational simplicity, we discuss the particular case $\Lambda=\Lambda_1^+=-b_1b_2$, noting that the other cases can be dealt with analogously. We assume condition \eqref{eq:1Qcon} with $k=1$ and $\pm=+$, plus the additional conditions
	\begin{align}
	b_1+b_2\neq b_3+b_4, && b_1^{-1}+b_2^{-1}\neq b_3^{-1}+b_4^{-1}, && 1\notin \lambda_1^2 Q_s.\label{eq:additional}
	\end{align}
	Once we have proven the proposition with these additional assumptions, we can easily discard them by analytic continuation using Remark \ref{remark:parameterregular}. Indeed, given the proposition, we find, that condition \eqref{Qcondition} in Theorem \ref{thmgeneralsol0+} can be replaced by $1\notin Q_s$ when $\Lambda=\Lambda_1^+$, as in Remark \ref{remarkrelax}.
	 The idea of the proof is to construct a formal solution $\left(f(t,\phi),g(t,\phi)\right)$ of $q$-$P(A_1)$, which has an expansion in $t$ and $\phi$, exactly of the form \eqref{eq:fgs}, and subsequently use the uniqueness property in Theorem \ref{thmgeneralsol0+} to conclude
	\begin{align}
	f^{0,+}(t,\phi;\Lambda_1^+,\mathbf{b})=f(t,\phi), && g^{0,+}(t,\phi;\Lambda_1^+,\mathbf{b})=g(t,\phi).\label{desiredid}
	\end{align}
	Firstly, by \eqref{eq:1Qcon}, we have $\lambda_1\notin q^\mathbb{Z}$, and using the first two conditions in \eqref{eq:additional}, we construct the solution $(f^{(1,1)},g^{(1,1)})$ of $q$-$P(A_1)$, holomorphic at $t=0$, defined in Proposition \ref{merosolution1}. 
	Next we apply the following change of variables
	\begin{align}
	f(t,\phi)=f^{(1,1)}(t)+\zeta_1\left(1+y_1(t,\zeta_1)\right),&& g(t,\phi)=g^{(1,1)}(t)+\zeta_1\left(-b_1b_2+y_2(t,\zeta_1)\right),\label{eq:changefgy1y2}
	\end{align}
	which allows us to rewrite $q$-$P(A_1)$ as
	\begin{subequations}
		\label{eq:changey1y2}
	\begin{align}
	y_1(q t,q\lambda \zeta_1)&=H_1\left(t,\zeta_1,y_1(t,\zeta_1),y_2(t,\zeta_2)\right),\\
	y_2(q t,q\lambda \zeta_1)&=H_2\left(t,\zeta_1,y_1(t,\zeta_1),y_2(t,\zeta_2)\right),
	\end{align}
\end{subequations}
	for some functions $H_1(t,\zeta_1,y_1,y_2)$ and $H_2(t,\zeta_1,y_1,y_2)$ which are rational in the elements of
	\begin{equation}\label{eq:ratvar}
	\left\{t,\zeta_1,\zeta_2,y_1,y_2,f^{(1,1)}(t),g^{(1,1)}(t),f^{(1,1)}(qt),g^{(1,1)}(qt)\right\}.
	\end{equation}
	We wish to apply the $q$-Briot Bouquet theorem \ref{genbriottheoremsev} with $\mathbf{Y}=(0,0)$, therefore the first condition we have to check is that $H_1$ and $H_2$ are holomorphic at $(t,\zeta_1,y_1,y_2)=(0,0,0,0)$. As $H_1$ and $H_2$ are rational in the elements of \eqref{eq:ratvar}, it is enough to expand $H_1$ and $H_2$ as series in $t,\zeta_1,y_1,y_2$ and check that no negative powers appear. Expanding $H_1$ and $H_2$ in $\zeta_1$, we find for $i=1,2$,
	\begin{equation}
	H_i(t,\zeta_1,y_1,y_2)=h_{-1}^{(i)}(t)\zeta_1^{-1}+h_{0}^{(i)}(t,y_1,y_2)+h_{1}^{(i)}(t,y_1,y_2)\zeta_1+\ldots.
	\end{equation}
	The coefficients $h_{-1}^{(i)}(t)$ are rational in $t$, $f^{(1,1)}(t)$, $g^{(1,1)}(t)$, $f^{(1,1)}(qt)$ and $g^{(1,1)}(qt)$. Formally speaking $h_{-1}^{(1)}(t)=0$ and $h_{-1}^{(2)}(t)=0$ is equivalent to the $q$-$P(A_1)$ equation with $f=f^{(1,1)}(t)$ and $g=g^{(1,1)}(t)$. That is, $h_{-1}^{(1)}(t)$ and $h_{-1}^{(2)}(t)$ are identically zero, precisely because we are perturbing around a solution of $q$-$P(A_1)$. We conclude, for $i=1,2$,
	\begin{equation}
	H_i(t,\zeta_1,y_1,y_2)=h_{0}^{(i)}(t,y_1,y_2)+h_{1}^{(i)}(t,y_1,y_2)\zeta_1+h_{2}^{(i)}(t,y_1,y_2)\zeta_1^2+\ldots.
	\end{equation}	
	In a similar fashion one can calculate that $H_i(t,\zeta_1,y_1,y_2)$ enjoys a power series expansion in the other variables $y_1,y_2$ and $t$, for $i=1,2$. The $y_1$ and $y_2$ cases are rather trivial, but in the $t$ case, we use the fact that we perturb around a solution of $q$-$P(A_1)$, holomorphic at $t=0$, in an essential way.
	 We conclude that $H_1$ and $H_2$ are holomorphic at $(t,\zeta_1,y_1,y_2)=(0,0,0,0)$, and calculate
	\begin{align}
	H_1(0,0,y_1,y_2)&=-\lambda_1^{-1}y_1-\frac{1}{b_1b_2}\left(1+\lambda_1^{-1}\right)y_2,\\
	H_2(0,0,y_1,y_2)&=b_3b_4\left(1+\lambda_1^{-1}\right)y_1+\left(1+\lambda_1^{-1}+\lambda_1^{-2}\right)y_2.	
	\end{align}
	So $H_i(0,0,0,0)=0$ for $i=1,2$, and the Jacobi matrix
	\begin{equation*}
	\begin{pmatrix}
	\frac{\partial H_1}{\partial y_1}(0,0,0,0) & \frac{\partial H_1}{\partial y_2}(0,0,0,0)\\
	\frac{\partial H_1}{\partial y_1}(0,0,0,0) & \frac{\partial H_1}{\partial y_2}(0,0,0,0)
	\end{pmatrix}=
	\begin{pmatrix}
	-\lambda_1^{-1} & -\frac{1}{b_1b_2}\left(1+\lambda_1^{-1}\right)\\
	b_3b_4\left(1+\lambda_1^{-1}\right) & 1+\lambda_1^{-1}+\lambda_1^{-2}
	\end{pmatrix},	
	\end{equation*}
	has eigenvalues $1$ and $\lambda_1^{-2}$.\\
	By \eqref{eq:1Qcon} and the third additional assumption in \eqref{eq:additional}, we can apply the $q$-Briot Bouquet Theorem \ref{genbriottheoremsev}, to obtain an unique power series solution to \eqref{eq:changey1y2} of the form
	\begin{equation*}
	y_i(t,\zeta_1)=\sum_{m,n=0}^\infty{y_{m,n}^{(i)}t^m \zeta_1^n},
	\end{equation*}
	with $y_{0,0}^{(i)}=0$ for $i=1,2$.\\
	Associated via equations \eqref{eq:changefgy1y2}, we have the solution $\left(f(t,\phi),g(t,\phi)\right)$ of $q$-$P(A_1)$ with
	\begin{align*}
	f(t,\phi)=\sum_{n=1}^\infty{\sum_{i=0}^n{f_{n,i}\phi^it^n}}, && g(t,\phi)=\sum_{n=1}^\infty{\sum_{i=0}^n{g_{n,i}\phi^it^n}},
	\end{align*}
	where
	\begin{align}
	f_{1,1}&=1, & f_{1,0}&=f_1^{(1,1)},& f_{n,0}&=f_n^{(1,1)}, & f_{n,i}&=y_{n-i,i-1}^{(1)},\\
	g_{1,1}&=-b_1b_2, & g_{1,0}&=g_1^{(1,1)},& g_{n,0}&=g_n^{(1,1)}, & g_{n,i}&=y_{n-i,i-1}^{(2)},
	\end{align}
	for $1\leq i\leq n$ and $n\in\mathbb{N}_{\geq 2}$.\\
	By the uniqueness property in Theorem \ref{thmgeneralsol0+} we conclude that \eqref{desiredid} must hold. The remaining convergence result follows from the $q$-Briot Bouquet Theorem \ref{genbriottheoremsev}.
\end{proof}
The proof of Proposition \ref{prop:holopert} is not particularly elegant. This lies in the fact that we are dealing with a strongly resonant case in light of the general solution of a $q$-Briot Bouquet type equation. We do not want to delve too far into this issue, but just like to point out that the difficulty comes from the fact that in the case of solutions, holomorphic at $t=0$, the two eigenvalues of the relevant Jacobi matrix are each other's reciprocals, as the proof of Proposition \ref{merosolution0} shows. We avoid this issue by a change of dependent and independent variables, with the cost of dealing with some additional assumptions \eqref{eq:additional}.
\begin{remark}
	Equations \eqref{eq:identifyholo} allow us to analytically continue, for instance the solution $\left(f^{(1,1)}(t),g^{(1,1)}(t)\right)$ defined in Proposition \ref{merosolution1}, to the degenerate parameter cases $b_1+b_2=b_3+b_4$ and $b_1^{-1}+b_2^{-1}=b_3^{-1}+b_4^{-1}$.
\end{remark}
Let us discuss the particular case $\Lambda=\Lambda_1^+=-b_1b_2$ in Proposition \ref{prop:holopert}, in more detail. We choose some parameter values $\mathbf{b}\in \mathcal{B}$, such that $|q|<1$, $|\lambda_1|<|q|^{-1}$ and $\lambda_1\notin q^{\mathbb{N}^*}$. In particular condition \eqref{eq:1Qcon} is satisfied. Strictly speaking, Theorem \ref{thmtruesol} is only applicable if inequalities \eqref{condLambda} are satisfied by $\Lambda(t)=-b_1b_2$, or equivalently,
\begin{equation*}
1<|\lambda_1|<|q|^{-1}.
\end{equation*}
However, the convergence result of \eqref{eq:fgzeta1var} in Proposition \ref{prop:holopert}, allows us to easily extend the results of Theorem \ref{thmtruesol} to the cases $|\lambda_1|<1$ and $|\lambda_1|=1$. Indeed, let us consider the case $|\lambda_1|<1$, and take some analytic function $\phi(t)$ which satisfies $\phi(qt)=\lambda_1\phi(t)$ on a nonempty open set $T\subseteq\mathbb{C}^*$ with $qT=T$. Then there exists an unique meromorphic solution $(f(t),g(t))$ of $q$-$P(A_1)$ on $T$, characterised by
\begin{subequations}
\label{eq:fgspecial}
\begin{align}
f(t)=f^{0,+}(t,\phi(t);-b_1b_2,\mathbf{b})&=\sum_{n=1}^\infty\sum_{i=0}^n{F_{n,i}^{0,+}(-b_1b_2,\mathbf{b})\phi(t)^it^n},\\ g(t)=g^{0,+}(t,\phi(t);-b_1b_2,\mathbf{b})&=\sum_{n=1}^\infty\sum_{i=0}^n{G_{n,i}^{0,+}(-b_1b_2,\mathbf{b})\phi(t)^it^n},
\end{align}
\end{subequations}
for $t$ small in $T$, as the right-hand sides converge uniformly in $t$ on any set $V\subseteq\overline{V}^*\subseteq T$ with $qV=V$, intersected with a disk centered at the origin with radius chosen small enough.\\
In particular, by equations \eqref{eq:identifyholo}, the leading order behaviour of $f(t)$ and $g(t)$ is given by
\begin{align*}
f(t)&=f^{(1,1)}(t)+ \phi(t)t+\mathcal{O}\left(\phi(t)t^2\right),\\
g(t)&=g^{(1,1)}(t)-b_1b_2\phi(t)t+\mathcal{O}\left(\phi(t)t^2\right),
\end{align*}
as $t\rightarrow 0$ in $V$ as above.\\
Of course the choice $\phi(t)\equiv 0$ gives $f(t)=f^{(1,1)}(t)$ and $g(t)=g^{(1,1)}(t)$.
Now let us realise the case $f\ll t$ with $g\asymp t$ in \eqref{eq:conditionalcomb}, by assuming the condition
\begin{equation}\label{eq:cond1}
b_1+b_2=b_3+b_4,
\end{equation}
given in \eqref{eq:condfsmallertgt}.\\
 Indeed the leading term of $f^{(1,1)}(t)$ vanishes, as $f_1^{(1,1)}=0$, and hence we generically have $f\ll t$ and $g\asymp t$ as $t\rightarrow 0$ in $V\subseteq T$ as above. If we also set
\begin{equation}\label{eq:cond2}
b_1^{-1}+b_2^{-1}=b_3^{-1}+b_4^{-1},
\end{equation}
then the leading term of $g^{(1,1)}(t)$ also vanishes, as $g_1^{(1,1)}=0$, and this realises the case $f,g\ll t$ as $t\rightarrow 0$ in \eqref{eq:conditionalcomb}. Note that \eqref{eq:cond1} and \eqref{eq:cond2} imply $b_1=-b_2$ and $b_3=-b_4$, so condition \eqref{eq:condfgsmallert} is trivially satisfied, as expected. To give the reader an appreciation how far the rabbit hole of degenerations goes, let us consider the case
\begin{align}
b_1=-b_2,&& b_3=-b_4, && b_5=-b_6, && b_7=-b_8, && b_2=b_6.
\end{align}
The solution $(f(t),g(t))$ takes the form
\begin{align*}
f(t)=t\phi(t), && g(t)=b_2^2 t\phi(t), && \phi(qt)=\lambda_1\phi(t),&&\lambda_1=\left(\frac{b_2}{b_4}\right)^2, && q=\left(\frac{b_4}{b_8}\right)^2,
\end{align*}
where the parameters $b_2,b_4$ and $b_8$ can be chosen to our pleasure.\\
In particular, let us fix some $b_4,b_8\in\mathbb{C}^*$ with $|b_4|<|b_8|$. Then, for any $m\in\mathbb{N}$, we can choose $b_2\in\mathbb{C}^*$ small enough, such that $|\lambda_1|<|q|^m$, which gives 
\begin{equation*}
f(t),g(t)\ll t^{m+1},
\end{equation*}
as $t\rightarrow 0$ in any open set $V\subseteq \overline{V}^*\subseteq T$ with $qV=V$.\\
Let us return to the generic case \eqref{eq:fgspecial}, application of B\"acklund transformation $\mathcal{T}_1$ and a permutation $\mathbf{b}\mapsto\mathbf{b}^{(1)}$ of the parameters, gives an associated solution $(\tilde{f}(t),\tilde{g}(t))$ of $q$-$P(A_1)(\mathbf{b})$, with
\begin{subequations}
	\label{eq:asympsp}
\begin{align}
\tilde{f}(t)&=f^{(0,1)}(t)\left[1+f^{(0,1)}(t)\widetilde{\phi}(t)\right]^{-1}+\mathcal{O}\left(\widetilde{\phi}(t)t\right),\\
\tilde{g}(t)&=g^{(0,1)}(t)\left[1-b_5^{-1}b_6^{-1}g^{(0,1)}(t)\widetilde{\phi}(t)\right]^{-1}+\mathcal{O}\left(\widetilde{\phi}(t)t\right),
\end{align}
\end{subequations}
as $t\rightarrow 0$ in $V$ as above, where $\widetilde{\phi}(qt)=\lambda\widetilde{\phi}(t)$ with $\lambda=\frac{b_7b_8}{b_5b_6}$, subject to conditions $|q|<1$, $|\lambda|<1$, $\lambda\notin q^{\mathbb{N}^*}$ and, to ensure the validity of the asymptotics \eqref{eq:asympsp},
\begin{align*}
b_5+b_6\neq b_7+b_8 & &\text{and} & & b_5^{-1}+b_6^{-1}\neq b_7^{-1}+b_8^{-1}.
\end{align*}
Setting $\widetilde{\phi}(t)\equiv 0$, gives  the solution $\tilde{f}(t)=f^{(0,1)}(t)$ and $\tilde{g}(t)=g^{(0,1)}(t)$ defined in Proposition \ref{merosolution0}.

\section{Reduction to symmetric $q$-$P(A_1)$}
\label{sectionsym}
There are some natural conditions on the parameters $\mathbf{b}$ wich allow a reduction of $q$-$P(A_1)$ to its symmetric form,
\begin{equation}
\label{qpa1symmetric}
\frac{\left(x\hat{x}-\xi t^2\right)\left(x\uhat{x}-\xi^{-1} t^2\right)}{\left(x\hat{x}-1\right)\left(x\uhat{x}-1\right)}=\frac{\left(x-at\right)\left(x-a^{-1}t\right)\left(x-bt\right)\left(x-b^{-1}t\right)}{\left(x-c\right)\left(x-c^{-1}\right)\left(x-d\right)\left(x-d^{-1}\right)},
\end{equation}
where $a,b,c,d\in\mathbb{C}^*$ are complex parameters and $\xi\in\mathbb{C}^*$ defines the time evolution,
\begin{align*}
\hat{t}=\xi t, && x=x(t),&& \hat{x}=x(\xi t),&& \uhat{x}=x\left(\xi^{-1} t\right).
\end{align*}
We write $\mathbf{b}_s=(a,b,c,d)$ and denote the parameter space of symmetric $q$-$P(A_1)$ by
\begin{equation*}
\mathcal{B}_s=\{(a,b,c,d)\in \mathbb{C}^4|a,b,c,d\neq 0\}.
\end{equation*}
Consider $q$-$P(A_1)$, let $\xi^2=q$ and assume that the parameters $\mathbf{b}$ satisfy
\begin{align}
b_1b_2=\xi,&& b_3b_4=\xi,&& b_5b_6=1,&& b_7b_8=1. \label{parametersymcond}
\end{align}
We set
\begin{align*}
a=b_1\xi^{-\tfrac{1}{2}}, && b=b_3\xi^{-\tfrac{1}{2}}, && c=b_5, && d=b_7.
\end{align*}
If $(f(t),g(t))$ is a solution of $q$-$P(A_1)$ which satisfies
\begin{equation}
f\left( t\right)=g\left(\xi^{-1} t\right), \label{symmetrycondition}
\end{equation}
then it is easy to see that,
\begin{equation}
\label{xdefisym}
x(t)=f\left(\xi^{\tfrac{1}{2}}t\right),
\end{equation}
defines a solution of symmetric $q$-$P(A_1)$.\\
Consider the formal series solution \eqref{powerexpansionsol0+} and assume \eqref{parametersymcond}, then we have
\begin{align}
\hat{\hat{\Lambda}}=\Lambda, && \hat{\hat{\phi}}=\lambda\phi, && \lambda=\left(\frac{\Lambda}{\xi}\right)^2. \label{lambdaphisy}
\end{align}
In order to make sense of condition \eqref{symmetrycondition}, we have to define the time evolution $\hat{\cdot}$ on $\Lambda$ and $\phi$. Inspired by equations \eqref{lambdaphisy}, we set 
\begin{align}
\hat{\Lambda}=\Lambda, && \hat{\phi}=\frac{\Lambda}{\xi}\phi. \label{timeevophilam}
\end{align}
Note that this is indeed consistent with \eqref{lambdaphisy} and condition \eqref{symmetrycondition} becomes
\begin{equation}
f^{(0,+)}\left(t,\phi;\Lambda,\mathbf{b}(\mathbf{b}_s,\xi)\right)=g^{(0,+)}\left(\xi^{-1}t,\uhat{\phi};\Lambda,\mathbf{b}(\mathbf{b}_s,\xi)\right), \label{explsymmetr}
\end{equation}
where we denote
\begin{equation}
\label{bsym}
\mathbf{b}(\mathbf{b}_s,\xi)=\left(a\xi^{\tfrac{1}{2}},a^{-1}\xi^{\tfrac{1}{2}},b\xi^{\tfrac{1}{2}},b^{-1}\xi^{\tfrac{1}{2}},c,c^{-1},d,d^{-1}\right).
\end{equation}
We prove that this condition indeed holds, which implies that the formal series solution \eqref{powerexpansionsol0+} reduces `naturally' to a solution of symmetric $q$-$P(A_1)$. By equation \eqref{symmetryfg1}, we have
\begin{align*}
f^{0,+}(t,\phi;\Lambda,\mathbf{b}(\mathbf{b}_s,\xi))&=g^{0,+}\left(\xi^{-1} t,\xi^{-1}\frac{b_1b_2b_3b_4}{\Lambda}\phi;\frac{\Lambda}{b_5b_6b_7b_8},\mathbf{b}^{(3)}(\mathbf{b}_s,\xi)\right)\\
&=g^{0,+}\left(\xi^{-1} t,\uhat{\phi};\Lambda,\mathbf{b}^{(3)}(\mathbf{b}_s,\xi)\right),
\end{align*}
where
\begin{equation}
\label{bsymwide}
\mathbf{b}^{(3)}(\mathbf{b}_s,\xi)=\left(a^{-1}\xi^{\tfrac{1}{2}},a\xi^{\tfrac{1}{2}},b^{-1}\xi^{\tfrac{1}{2}},b\xi^{\tfrac{1}{2}},c^{-1},c,d^{-1},d\right).
\end{equation}
so it remains to prove
\begin{equation*}
g^{(0,+)}\left(\xi^{-1}t,\uhat{\phi};\Lambda,\mathbf{b}(\mathbf{b}_s,\xi)\right)=g^{0,+}\left(\xi^{-1} t,\uhat{\phi};\Lambda,\mathbf{b}^{(3)}(\mathbf{b}_s,\xi)\right).
\end{equation*}
This identity, however, follows directly from equation \eqref{sigmapermu}, by comparing \eqref{bsym} and \eqref{bsymwide}, where the permutation $\sigma\in Sym(\{1,2,3,4\})\times Sym(\{5,6,7,8\})$ equals
\begin{equation*}
\sigma=(1\ 2)(3\ 4)(5\ 6)(7\ 8).
\end{equation*}
We conclude that, assuming equations \eqref{timeevophilam}, the symmetry condition \eqref{explsymmetr} always holds. Formula \eqref{xdefisym}, however, does not allow for any straightforward interpretation. Luckily we are working with formal variables, so let us for a moment, denote the time evolution $t\mapsto \xi^{\tfrac{1}{2}}t$ by $\tilde{t}$, so $\tilde{t}=\xi^{\tfrac{1}{2}}t$ and in general $\tilde{\tilde{\cdot}}=\hat{\cdot}$.
We simply introduce new formal variables $\Lambda_s$ and $\phi_s$ which are forced to satisfy
\begin{align*}
\widetilde{\Lambda}=\xi \Lambda_s,&& \widetilde{\phi}=\xi^{-\tfrac{1}{2}}\phi_s,
\end{align*}
and hence, by equations \eqref{timeevophilam}, satisfy
\begin{align}
\hat{\Lambda}_s=\Lambda_s, && \hat{\phi}_s=\Lambda_s\phi_s.\label{timeevolutionlambdaphi}
\end{align}
We conclude, using equation \eqref{xdefisym}, that
\begin{equation*}
x^{0,+}\left(t,\phi_s;\Lambda_s,\xi,\mathbf{b}_s\right)=f^{0,+}\left(\xi^{\tfrac{1}{2}}t,\xi^{-\tfrac{1}{2}}\phi_s;\xi\Lambda_s,\mathbf{b}(\mathbf{b}_s,\xi)\right),
\end{equation*}
defines a solution of symmetric $q$-$P(A_1)$.\\
Despite the appearance of square roots of $\xi$ in the above expression, the coefficients in the expansion are rational in $\xi$ and we have the following result.
\begin{thm}
\label{thmsymgeneralsol0+}
There exists an unique formal series solution to the symmetric $q$-$P(A_1)$ equation \eqref{qpa1symmetric}, of the form
\begin{equation}
\label{powersolsym}
x^{0,+}\left(t,\phi_s;\Lambda_s,\xi,\mathbf{b}_s\right)=\sum_{n=1}^\infty{\sum_{i=-\infty}^n{x_{n,i}^{0,+}\left(\Lambda_s,\xi,\mathbf{b}_s\right)t^n \phi_s^i}},
\end{equation}
where $x_{1,1}^{0,+}\left(\Lambda_s,\xi,\mathbf{b}_s\right)=1$ and $\xi$ and $\phi_s$ satisfy \eqref{timeevolutionlambdaphi}.\\
For $n\in\mathbb{N}^*$ and $i\in\mathbb{Z}_{\leq n}$, the coefficient $x_{n,i}^{0,+}\left(\Lambda_s,\xi,\mathbf{b}_s\right)$ is a rational function in its inputs which is regular at points $(\Lambda_s,\xi,\mathbf{b}_s)\in\mathbb{C}^*\times\mathbb{C}^*\times\mathcal{B}_s$ satisfying
\begin{equation}
1\notin Q:=\{q_1^mq_2^n:(m,n)\in\mathbb{N}^2\setminus\{(0,0)\}\}, \label{Qconditionsym}
\end{equation}
where $q_1=q_1(\xi,\Lambda_s)=\xi \Lambda_s$ and $q_2=q_2(\Lambda_s)=\Lambda_s^{-1}$.\\
Furthermore, let $|\xi|<1$ and $\Lambda_s\in L_0^s:=\{x\in\mathbb{C}:1<|x|<|\xi|^{-1}\}$, then condition \eqref{Qconditionsym}
is satisfied and this formal solution, written in terms of the variables 
$\zeta_1=t\phi_s$ and $\zeta_2=\phi_s^{-1}$,
\begin{equation}
\label{xzeta}
x^{0,+}(\zeta_1\zeta_2,\zeta_2^{-1};\Lambda_s,\xi,\mathbf{b}_s)=\sum_{n=1}^\infty{\sum_{m=0}^\infty{x_{n,n-m}^{0,+}(\Lambda_s,\xi,\mathbf{b}_s)\zeta_1^n\zeta_2^m}},
\end{equation}
converges near $(\zeta_1,\zeta_2)=(0,0)$.\\
In fact, these expansions also depend holomorphic on $\Lambda_s$. That is, for any $L\subseteq L_0^s$ open with $\overline{L}\subseteq L_0^s$, there is an open environment $Z\subseteq \mathbb{C}^2$ of $\mathbf{0}$, such that the series \eqref{xzeta} converges uniformly on $Z\times L$, defining holomorphic functions on this set in $\left(\bm{\zeta},\Lambda_s\right)$.
\end{thm}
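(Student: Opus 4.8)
The plan is to obtain Theorem~\ref{thmsymgeneralsol0+} as a corollary of Theorem~\ref{thmgeneralsol0+} via the reduction already set up above, rather than re-running the $q$-Briot--Bouquet machinery. Concretely, I would \emph{define}
\[
x^{0,+}(t,\phi_s;\Lambda_s,\xi,\mathbf{b}_s):=f^{0,+}\!\left(\xi^{1/2}t,\ \xi^{-1/2}\phi_s;\ \xi\Lambda_s,\ \mathbf{b}(\mathbf{b}_s,\xi)\right),
\]
with $\mathbf{b}(\mathbf{b}_s,\xi)$ as in \eqref{bsym}. That this is a formal solution of symmetric $q$-$P(A_1)$ \eqref{qpa1symmetric} is precisely the content of \eqref{xdefisym}--\eqref{explsymmetr}, which rests on the $\mathcal T_3$-identity \eqref{symmetryfg1} and the permutation symmetry \eqref{sigmapermu}; so the first step is merely to assemble those identities and expand, giving $x^{0,+}=\sum_{n\geq 1}\sum_{i\leq n}x_{n,i}^{0,+}t^n\phi_s^i$ with
\[
x_{n,i}^{0,+}(\Lambda_s,\xi,\mathbf{b}_s)=\xi^{(n-i)/2}\,F_{n,i}^{0,+}\!\left(\xi\Lambda_s,\mathbf{b}(\mathbf{b}_s,\xi)\right),\qquad x_{1,1}^{0,+}=1 .
\]

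The one genuinely nontrivial point---and the step I expect to be the main obstacle---is that this right-hand side is a priori only rational in $\xi^{1/2}$, whereas the theorem asserts rationality in $\xi$ (equivalently, that the construction is independent of the chosen branch of $\xi^{1/2}$). I would settle this by a parity argument. The equation $q$-$P(A_1)$ is invariant under the elementary symmetry $(t,b_1,b_2,b_3,b_4)\mapsto(-t,-b_1,-b_2,-b_3,-b_4)$ combined with $(f(t),g(t))\mapsto(f(-t),g(-t))$; this is a direct check, since $t^2$ and $q$ are fixed and every remaining occurrence of $t$ and of $b_1,\dots,b_4$ changes sign consistently. Composing this with the formal substitution $\phi\mapsto-\phi$ (needed to restore $x_{1,1}^{0,+}=1$), applying the result to the solution \eqref{powerexpansionsol0+}, matching the coefficient of $t\phi$, and invoking uniqueness in Theorem~\ref{thmgeneralsol0+}, one gets
\[
F_{n,i}^{0,+}(\Lambda,-b_1,-b_2,-b_3,-b_4,b_5,\dots,b_8)=(-1)^{n+i}F_{n,i}^{0,+}(\Lambda,\mathbf{b}).
\]
Specialising $\mathbf{b}=\mathbf{b}(\mathbf{b}_s,\xi)$, the sign change $\xi^{1/2}\mapsto-\xi^{1/2}$ is exactly the negation of $b_1,\dots,b_4$ and fixes $\xi$ (hence $\xi\Lambda_s$ and $q=\xi^2$); so $F_{n,i}^{0,+}(\xi\Lambda_s,\mathbf{b}(\mathbf{b}_s,\xi))$ acquires a factor $(-1)^{n+i}$ while $\xi^{(n-i)/2}$ acquires $(-1)^{n-i}$, and the product $x_{n,i}^{0,+}$ is invariant. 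A rational function of $\xi^{1/2}$ invariant under $\xi^{1/2}\mapsto-\xi^{1/2}$ is rational in $\xi$, which is the claim.

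The remaining assertions are transcriptions. For regularity: under the reduction the $q$-$P(A_1)$ multipliers are $q_1(\xi\Lambda_s,\mathbf{b}(\mathbf{b}_s,\xi))=(\xi\Lambda_s)^2=q_1^{\,2}$ and $q_2=\Lambda_s^{-2}=q_2^{\,2}$, with $q_1=\xi\Lambda_s$, $q_2=\Lambda_s^{-1}$ the symmetric multipliers, so the set $Q$ attached to $\mathbf{b}(\mathbf{b}_s,\xi)$ via \eqref{Qcondition} is contained in the set $Q$ of \eqref{Qconditionsym}; hence \eqref{Qconditionsym} forces \eqref{Qcondition} and Theorem~\ref{thmgeneralsol0+} gives regularity of the relevant $F_{n,i}^{0,+}$, whence of $x_{n,i}^{0,+}$. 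For convergence and holomorphic dependence: $L_0(\mathbf{b}(\mathbf{b}_s,\xi))=\{x:|\xi|<|x|<1\}$ corresponds under $\Lambda=\xi\Lambda_s$ exactly to $L_0^s=\{x:1<|x|<|\xi|^{-1}\}$, and on $L_0^s$ one has $0<|q_1|,|q_2|<1$ (so $1\notin Q$); moreover the two $\zeta$-coordinates transform by the linear isomorphism $(\zeta_1,\zeta_2)=(t\phi_s,\phi_s^{-1})\mapsto(t\phi_s,\xi^{1/2}\phi_s^{-1})$ near the origin, so the convergence of \eqref{xzeta} near $(\zeta_1,\zeta_2)=(0,0)$ and the uniform convergence on $Z\times L$ for precompact $L\subseteq L_0^s$ follow directly from the corresponding statements in Theorem~\ref{thmgeneralsol0+}. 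Finally, uniqueness of $x^{0,+}$ among formal series of the stated shape with $x_{1,1}^{0,+}=1$ follows either by pulling any competitor back through $f(t):=x(\xi^{-1/2}t)$ to a formal solution of $q$-$P(A_1)$ of the form in Theorem~\ref{thmgeneralsol0+} and quoting uniqueness there, or, if one prefers to avoid the square-root bookkeeping entirely, by running the $q$-Briot--Bouquet Theorem~\ref{genbriottheoremsev} directly on \eqref{qpa1symmetric} written as a two-component first-order system in $(x,\uhat{x})$---the lowest nonvanishing order is again $t^4$, the Jacobian at the base point has $1$ as its only eigenvalue, and the recursion is linear at each higher order.
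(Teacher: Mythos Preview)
Your approach is correct but genuinely different from the paper's. The paper's proof is a single line: ``We prove this analogously to Theorem~\ref{thmgeneralsol0+}.'' That is, it re-runs the $q$-Briot--Bouquet machinery directly on the symmetric equation \eqref{qpa1symmetric}, written as a first-order system in $(x,\uhat{x})$ with multipliers $q_1=\xi\Lambda_s$, $q_2=\Lambda_s^{-1}$. Since $\xi$ enters \eqref{qpa1symmetric} only as $\xi$ (never as $\xi^{1/2}$), rationality of the coefficients in $\xi$ is immediate from the recursion, and no square-root bookkeeping ever arises.

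Your route---deduce everything from Theorem~\ref{thmgeneralsol0+} via the reduction $x^{0,+}(t,\phi_s)=f^{0,+}(\xi^{1/2}t,\xi^{-1/2}\phi_s;\xi\Lambda_s,\mathbf{b}(\mathbf{b}_s,\xi))$---is more conceptual and avoids redoing the analytic work, but forces you to confront the $\xi^{1/2}$ issue. Your parity argument (the invariance of $q$-$P(A_1)$ under $(t,b_1,\dots,b_4)\mapsto(-t,-b_1,\dots,-b_4)$, yielding $F_{n,i}^{0,+}(\Lambda,-b_1,\dots,-b_4,b_5,\dots,b_8)=(-1)^{n+i}F_{n,i}^{0,+}(\Lambda,\mathbf{b})$ via uniqueness) is correct and neatly resolves it; in fact it is a pleasant observation not made explicit in the paper, which merely remarks before the theorem that ``despite the appearance of square roots of $\xi$\dots the coefficients in the expansion are rational in $\xi$'' and then sidesteps the issue by proving directly. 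Your transcriptions of the regularity, convergence, and $L_0$ statements are accurate: the non-symmetric $Q$-set at $(\xi\Lambda_s,\mathbf{b}(\mathbf{b}_s,\xi))$ consists of even powers in the symmetric $Q$-set, and $\Lambda=\xi\Lambda_s\in L_0(\mathbf{b}(\mathbf{b}_s,\xi))$ is equivalent to $\Lambda_s\in L_0^s$. For uniqueness your second option (direct $q$-Briot--Bouquet on \eqref{qpa1symmetric}) is exactly the paper's route; the first option works too, since setting $f(t)=x(\xi^{-1/2}t)$, $g(t)=x(\xi^{1/2}t)$ turns any formal solution of \eqref{qpa1symmetric} into a formal solution of $q$-$P(A_1)(\mathbf{b}(\mathbf{b}_s,\xi))$ of the required shape.
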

\begin{proof}
We prove this analogously to Theorem \ref{thmgeneralsol0+}.
\end{proof}
For the formal series solution \eqref{powersolsym}, conjecture \ref{conjecture} implies that the coefficients $x_{n,i}^{0,+}\left(\Lambda_s,\mathbf{b}_s,\xi\right)$ vanish for $i<-n$ and $n\in\mathbb{N}^*$. Indeed, by direct computation we checked this assertion for $n=1,2,3$. As to the case $n=1$, it is easy to see that
\begin{equation*}
x_{1}^{0,+}=x_{1}^{0,+}\left(\phi_s;\Lambda_s,\xi,\mathbf{b}_s\right)=\sum_{i=-\infty}^1{x_{1,i}^{0,+}\left(\Lambda_s,\xi,\mathbf{b}_s\right) \phi_s^i},
\end{equation*}
equals
\begin{equation}
\label{x1exp}
x_{1}^{0,+}=\phi_s-\Lambda_s\frac{a+a^{-1}+b+b^{-1}}{(\Lambda_s-1)^2}+\frac{\Lambda_s(\Lambda_s+ab)(\Lambda_s+\tfrac{a}{b})(\Lambda_s+\tfrac{b}{a})(\Lambda_s+\tfrac{1}{ab})}{(\Lambda_s-1)^4(\Lambda_s+1)^2}\phi_s^{-1},
\end{equation}
which defines a solution to the autonomous QRT-mapping
\begin{equation*}
\left(x_1\overline{x}_1-1\right)\left(x_1\underline{x}_1-1\right)=\left(x_1-a\right)\left(x_1-a^{-1}\right)\left(x_1-b\right)\left(x_1-b^{-1}\right).
\end{equation*}
Similar to Theorem \ref{thmtruesol}, we can use the formal series solution \eqref{powersolsym} to construct true solutions to symmetric $q$-$P(A_1)$. As an example we construct complex power-type series solutions, let $\rho\in\mathbb{C}^*$, set $\Lambda_s=\xi^\rho$ and assume
\begin{equation*}
1<|\Lambda_s|<|\xi|^{-1}.
\end{equation*}
Define $t^\rho$ analogously to \eqref{tpowerrho} on a domain $T$ defined by equation \eqref{Tdefi}, with $q$ replaced by $\xi$, for a $\theta_b\in\mathbb{R}$. Then there is an unique meromorphic solution $x(t)$ of symmetric $q$-$P(A_1)$ on $T$ such that
\begin{equation}
\label{xspecialcaseexp}
x(t)=x^{0,+}\left(t,\phi_0 t^\rho;\xi^\rho,\xi,\mathbf{b}_s\right)=\sum_{n=1}^\infty\sum_{i=-\infty}^n{x_{n,i}^{0,+}\left(\xi^\rho,\xi,\mathbf{b}_s\right)\phi_0^it^{\rho i+n}},
\end{equation}
for $t$ close to $0$.\\
More precisely, there is an $r>0$, such that the expansion on the right-hand side of equation \eqref{xspecialcaseexp} converges uniformly in $t$ on 
\begin{equation*}
T\cap \{t\in\mathbb{C}^*:|t|<r\},
\end{equation*}
and the equation holds on this set.

\subsection{Continuum limit of formal series solution}
\label{sectioncontinuum}
Grammaticos and Ramani \cite{firstqp6} calculated the following continuum limit of the symmetric $q$-$P(A_1)$ equation \eqref{qpa1symmetric} to the sixth Painlev\'e equation \eqref{pVI}.
We set
\begin{align}
&a=-\xi^{\alpha}, & & b=\xi^{\beta},& & c=-\xi^{\gamma}, & &d=\xi^{\delta}, \label{parametervalues}
\end{align}
for some fixed $\alpha,\beta,\gamma,\delta\in\mathbb{C}$. \\
Then, by letting $\xi\rightarrow 1$, symmetric $q$-$P(A_1)$ \eqref{qpa1symmetric} becomes
\begin{align}
x''=&\tfrac{1}{2}\left(\frac{1}{x+1}+\frac{1}{x-1}+\frac{1}{x+z}+\frac{1}{x-t}\right)x'^2 \nonumber \\
&-\left(\frac{1}{t}+\frac{1}{t-1}+\frac{1}{t+1}+\frac{1}{x-t}-\frac{1}{x+t}\right)x' \label{equationcontinuouslimit} \\ 
&+\frac{(x^2-t^2)(x^2-1)}{t^2(t^2-1)}\left(\frac{(\alpha^2-\tfrac{1}{4}) t}{(x+t)^2}-\frac{(\beta^2-\tfrac{1}{4})t}{(x-t)^2}-\frac{\gamma^2}{(x+1)^2}+\frac{\delta^2}{(x-1)^2}\right), \nonumber
\end{align}
which is a non-canonical form of the sixth Painlev\'e equation.\\
Indeed, the change of variables
\begin{align}
t=\frac{1-\sqrt{\zeta}}{1+\sqrt{\zeta}},& & x=\frac{\sqrt{\zeta}-w}{\sqrt{\zeta}+w}, \label{changevar}
\end{align}
leads to the sixth Painlev\'e equation $P_{\text{VI}}$ in canonical form,
\begin{equation}
\label{pVI}
P_{\text{VI}}\quad\begin{cases}
&w''=\tfrac{1}{2}\left(\frac{1}{w}+\frac{1}{w-1}+\frac{1}{w-\zeta}\right)w'^2-\left(\frac{1}{\zeta}+\frac{1}{\zeta-1}+\frac{1}{w-\zeta}\right)w'\\
&+\frac{w(w-1)(w-\zeta)}{2\zeta^2(\zeta-1)^2}\left(\gamma^2-\frac{\delta^2\zeta}{w^2}+\frac{\alpha^2(\zeta-1)}{(w-1)^2}+\frac{(1-\beta^2)\zeta(\zeta-1)}{(w-\zeta)^2}\right).
\end{cases}
\end{equation}
We remark that the parametrisation of $P_{\text{VI}}$ in equation \eqref{pVI} differs slightly from the literature. Indeed, by applying the substitution,
\begin{align*}
\alpha^2\mapsto 2\gamma,&&(1-\beta^2)\mapsto 2\delta,&&\gamma^2\mapsto 2\alpha,&&\delta^2\mapsto -2\beta,
\end{align*}
we obtain a more common form.\\
To calculate the continuum limit of the formal series solution \eqref{powersolsym}, we consider solutions of the form \eqref{xspecialcaseexp}, so $\Lambda_s$ is fixed and we set $\phi(t)=t^\rho$, where the branchcut is taken independent of $\xi$. More precisely, we take any $\rho\in\mathbb{C}^*$ with $-1<\Re{(\rho)}<0$, let $\log(t)$ denote the natural logarithm with respect to a fixed branchcut and define $t^\rho=\exp{(\rho \log{t})}$ as usual. We let $\xi\in (0,1)$ and define $\mathbf{b}_s=\mathbf{b}_s(\xi)$ by equations \eqref{parametervalues} for some fixed $\alpha,\beta,\gamma,\delta\in\mathbb{C}$. We define, as in equation \eqref{xspecialcaseexp},
\begin{equation}
\label{solcontinuumlimit}
x(t;\phi_0,\rho,\xi)=x^{0,+}\left(t,\phi_0 t^\rho;\xi^\rho,\xi,\mathbf{b}_s(\xi)\right)=\sum_{n=1}^\infty\sum_{i=-\infty}^n{x_{n,i}^{0,+}\left(\xi^\rho,\xi,\mathbf{b}_s(\xi)\right)\phi_0^it^{\rho i+n}},
\end{equation}
which converges uniformly in $t$ on an open disc punctured at the origin.\\
Via analytic continuation we extend the domain of $x(t;\phi_0,\rho,\xi)$ to $\mathbb{C}^*$ excluding the branchcut of $\log(t)$.
Using Theorem \ref{genbriottheoremsevuniform}, it is not hard to see that this solution also depends holomorphic on $\xi$ for $\xi\in (0,1)$ and $t$ in an open disc punctured at the origin. However, we are interested in the limit $\xi\uparrow 1$ of solution \eqref{solcontinuumlimit}, but condition \eqref{Qconditionsym} for the existence of the formal series solution \eqref{powersolsym}, is not satisfied at $\xi=1$. Despite the fact that Theorem \ref{thmsymgeneralsol0+} becomes inapplicable in this limit, we do believe solution \eqref{solcontinuumlimit} converges to a true solution of the differential equation \eqref{equationcontinuouslimit} as $\xi\uparrow 1$. Proving this rigorously, probably requires an extension of Theorem \ref{genbriottheoremsevuniform} which incorporates limits of the variable $\mathbf{q}$ as it approaches the boundary of $B_{\text{max}}^m(\mathbf{0},1)$ under some specific assumptions, an interesting direction for future research. Instead, we proceed by heuristically calculating the continuum limit on a formal level. By equation \eqref{x1exp}, we have,
\begin{equation*}
\lim_{\xi\rightarrow 1}{x_{1,0}^{0,+}\left(\xi^\rho,\xi,\mathbf{b}_s(\xi)\right)}=\lim_{\xi\rightarrow 1}{-\xi^\rho\frac{\xi^\beta+\xi^{-\beta}-\xi^{\alpha}-\xi^{-\alpha}}{\left(\xi^\rho-1\right)^2}}=\frac{\alpha^2-\beta^2}{\rho^2},
\end{equation*}
where the second equality is obtained by applying L'H\^opital's rule twice.\\
Similarly, we find
\begin{equation*}
\lim_{\xi\rightarrow 1}{x_{1,-1}^{0,+}\left(\xi^\rho,\xi,\mathbf{b}_s(\xi)\right)}=\frac{\rho(\rho-\alpha-\beta)(\rho-\alpha-\beta)(\rho+\alpha-\beta)(\rho-\alpha+\beta))(\rho+\alpha+\beta)}{4\rho^4},
\end{equation*}
which motivates us to make the following bold move.\\
We assume that the limit,
\begin{equation*}
x_{n,i}^{c}\left(\rho,\mathbf{b}_c\right)=\lim_{\xi\rightarrow 1}{x_{n,i}^{0,+}\left(\xi^\rho,\xi,\mathbf{b}_s(\xi)\right)},
\end{equation*}
exists on a formal level, for all $i\in\mathbb{Z}_{\leq n}$ and $n\in\mathbb{N}^*$, where $\mathbf{b}_c=(\alpha,\beta,\gamma,\delta)$.\\
We hence obtain the following formal solution to the differential equation \eqref{equationcontinuouslimit},
\begin{equation*}
x^c\left(t;\phi_0,\rho,\mathbf{b}_c\right)=\sum_{n=1}^\infty\sum_{i=-\infty}^n{x_{n,i}^{c}\left(\rho,\mathbf{b}_c\right)\phi_0^it^{\rho i+n}},
\end{equation*}
and assuming convergence, its leading order behaviour is given by
\begin{multline*}
x^c\left(t;\phi_0,\rho,\mathbf{b}_c\right)=\phi_0t^{1+\rho}+\frac{\alpha^2-\beta^2}{\rho^2}t+\\
\frac{(\rho-\alpha-\beta)(\rho+\alpha-\beta)(\rho-\alpha+\beta)(\rho+\alpha+\beta)}{4\rho^4}\phi_0^{-1}t^{1-\rho}+\mathcal{O}\left(t^{2-2|\Re(\rho)|}\right),
\end{multline*}
as $t\rightarrow 0$.\\
Applying the change of variables \eqref{changevar} to this solution, we find an associated solution $w^0\left(\zeta;\rho,s,\mathbf{b}_c\right)$ to $P_{\text{VI}}$, where $s=2^{-1-2\rho}\phi_0$, whose leading order behaviour is given by
\begin{multline}
w^0\left(\zeta;s,\rho,\mathbf{b}_c\right)=1-s(1-\zeta)^{1+\rho}-\frac{1}{2}\left(1+\frac{\alpha^2-\beta^2}{\rho^2}\right)(1-\zeta)\\-\frac{(\rho-\alpha-\beta)(\rho+\alpha-\beta)(\rho-\alpha+\beta)(\rho+\alpha+\beta)}{16\rho^4}s^{-1}(1-\zeta)^{1-\rho}+\mathcal{O}\left(\zeta^{2-2|\Re(\rho)|}\right), \label{expanscon}
\end{multline}
as $\zeta\rightarrow 1$.\\
This is exactly the asymptotic expansion around $\zeta=1$ which characterises the solutions obtained by Jimbo \cite{Jimbo} for $P_{\text{VI}}$ \eqref{pVI}. Indeed Guzzetti \cite{Guzetti} showed how to calculate the coefficients in the asymptotic expansions of these solutions, in particular, for any $r\in\mathbb{C}^*$, there is an unique solution $\omega$ of $P_{\text{VI}}$ which satisfies,
\begin{multline}
\omega(\zeta)=-\frac{r}{\rho}\zeta^{1+\rho}+\frac{1}{2}\left(1-\frac{\beta^2-\delta^2}{\rho^2}\right)\zeta\\-\frac{(\rho^2-2\beta\rho+\beta^2-\delta^2)(\rho^2+2\beta\rho+\beta^2-\delta^2)}{16\rho^3}r^{-1}\zeta^{1-\rho}+\mathcal{O}\left(\zeta^{2-2|\Re(\rho)|}\right),\label{expanscon1}
\end{multline}
as $\zeta\rightarrow 0$.\\
Setting $r=-s \rho$ we identify this solution with $w^0$ \eqref{expanscon}, by applying the following B\"acklund transformation of $P_{\text{VI}}$ \eqref{pVI},
\begin{align*}
\widetilde{w}(\zeta)=1-w(1-\zeta),&& \widetilde{\alpha}=\delta, &&\widetilde{\beta}=\beta, && \widetilde{\gamma}=\gamma, && \widetilde{\delta}=\alpha.
\end{align*}
Guzzetti \cite{Guzetti} states that the full expansion of the solution $\omega(t)$ of $P_{\text{VI}}$, characterised by equation \eqref{expanscon1}, is given by
\begin{equation}
\omega(\zeta)=\sum_{n=1}^\infty\sum_{i=-n}^n{\omega_{n,i}\left(\rho,\mathbf{b}_c\right)r^i\zeta^{\rho i+n}}, \label{continuousexp}
\end{equation}
where $\omega_{1,1}=-\tfrac{1}{\rho}$ and the remaining coefficients can be determined uniquely via substitution into $P_{\text{VI}}$ \eqref{pVI} and comparing coefficients.\\
So indeed, the continuous counterpart of Conjecture \ref{conjecture} seems true. That is, there are no terms $t^{\rho i+n}$ in expansion \eqref{continuousexp}, with $i<-n$ and $n\in\mathbb{N}^*$. However, also in the continuous case, this is not a trivial result. Indeed Guzzetti \cite{Guzetti} shows how to determine the coefficients $\omega_{n,i}$ recursively and observes that for at least $n\leq 3$, no terms $t^{\rho i+n}$ with $i<-n$ have to be introduced, but he does not give a proof of this fact for general $n$. Recent work by Lisovyy and collaborators \cite{Bershtein,Gamayun}
expresses the coefficients in the asymptotic expansion near $\zeta=0$ of the $\tau$-function associated with $\omega(\zeta)$, in terms of conformal blocks. One should be able to find corresponding explicit expressions for the coefficiens $\omega_{n,i}$ in the expansion \eqref{continuousexp}, and in particular be able to understand the termination of the inner summations from this perspective. Similarly, it might also be possible to find explicit expressions for the coefficients $F_{n,i}^{0,+}$ and $G_{n,i}^{0,+}$ in the formal series solution \eqref{powerexpansionsol0+}, and in particular prove Conjecture \ref{conjecture} via such an approach.

\section{Concluding Remarks}
In Theorem \ref{thmgeneralsol0+}, we have constructed a formal series solution of $q$-$P(A_1)$, starting from the generic solution \eqref{generalsolaut}, parameterised in terms of formal variables $\Lambda$ and $\phi$, of the leading order autonomous system \eqref{losystemeq}. Furthermore, in Proposition \ref{prop:holopert}, we have proven that the formal series solution takes a special form in the $6$ subcases \eqref{eq:6specialfamilies} of the generic solution  \eqref{generalsolaut}, associated with $6$ special values of the parameter $\Lambda$. It remains to find the formal series solution of $q$-$P(A_1)$ with formal leading order behaviour given by the exceptional logarithmic-type solutions \eqref{eq:logsol} of the autonomous system \eqref{losystemeq}.
We expect these to take the form 
\begin{align}
f(t,\chi)=\sum_{n=1}^\infty{ F_n(\chi)t^n}, && g(t,\chi)=\sum_{n=1}^\infty {G_n(\chi)t^n}, \label{eq:logaexp}
\end{align}
with $F_1(\chi)=F_l^\pm(\chi)$ and $G_1(\chi)=G_l^\pm(\chi)$ as defined in \eqref{eq:logsol}, where each of the coefficients $F_n(\chi)$ and $G_n(\chi)$ is a polynomial of degree $2n$ in $\chi$ for $n\in\mathbb{N}^*$.\\
This, however, does not seem straightforward to prove, and we leave this case open for future research. We would like to note that logarithmic-type solutions have been obtained for $P_{\text{VI}}$, see in particular Guzzetti \cite{guzzettiloga} for an extensive study of these.

By replacing the formal variables $\Lambda$ and $\phi$, in the formal series solution of $q$-$P(A_1)$, with analytic functions, we obtain the generic solutions to $q$-$P(A_1)$ near the origin and infinity.
Guzzetti \cite{guzzettitabulation} gives a tabulation of critical behaviours of solutions of $P_{\text{VI}}$. We have found reflections of all these different critical behaviours in the $q$-$P(A_1)$ case. Indeed, using the terminology in Guzzetti \cite{guzzettitabulation}, we have encountered complex power behaviour in \eqref{specialcaseexp}, (inverse) oscillatory behaviours in \eqref{oscasympdamp} and \eqref{oscasymp}, Taylor expansions in \eqref{eq:holo0} and \eqref{eq:holo1} and logarithmic behaviours in \eqref{eq:logaexp}. As to the remaining case of inverse logarithmic behaviours, these can be constructed by application of B\"acklund transformation $\mathcal{T}_1$ \eqref{backlundt} to \eqref{eq:logaexp}, once rigorously established.

We note that the method described in this study to obtain the formal and true solutions is local in nature and does not explicitly use the integrability of the $q$-$P(A_1)$ equation. We therefore expect it to be applicable to a wide range of equations. As a downside, it does not allow us to obtain global asymptotics, connecting the critical behaviours we found near the origin and infinity. As to $P_{\text{VI}}$, Guzzetti \cite{guzzettitabulation} gives an overview of connection formulae which relate the previously mentioned critical behaviours of solutions, near the different critical points $0$, $1$ and $\infty$, part of which goes back to Jimbo's work \cite{Jimbo}. These formulae can be  established rigorously using the isomonodromic deformation method. The authors are currently analysing the Lax pair of $q$-$P(A_1)$ found by Yamada \cite{yamadalax} to derive similar formulae for $q$-$P(A_1)$. To illustrate this, recall that we have parameterised the critical behaviour of solutions in terms of $(\Lambda,\phi)$ near $t=0$ in \eqref{powerexpansionsol0+} and in terms of $(\Lambda_\infty,\phi_\infty)$ near $t=\infty$ in \eqref{powerexpansionsinf+}. Ideally we would like to find formulae
\begin{align}
\Lambda&=C_{1}^0(\Lambda_\infty,\phi_\infty;\mathbf{b}), & \Lambda_\infty&=C_{1}^\infty(\Lambda,\phi;\mathbf{b}),\\
\phi&=C_{2}^0(\Lambda_\infty,\phi_\infty;\mathbf{b}), & \phi_\infty&=C_{2}^\infty(\Lambda,\phi;\mathbf{b}),
\end{align}
with the $C_{i}^0$ and $C_{i}^\infty$ ($i=1,2$) given explicitly, which relate the behaviour near $t=0$ and $t=\infty$.\\
The authors have parameterised the connection matrix of the spectral part of Yamada's Lax pair in terms of both $\{\Lambda,\phi\}$ and $\{\Lambda_\infty,\phi_\infty\}$, hence these results seem within reach. As, at the time of writing, this analysis is not yet completed, we discuss it elsewhere.

\section{Acknowledgment}
We would like to thank Professor Ohyama for providing us with the reference to Poincare's work on $q$-difference equations \cite{Poincare}.
This research was supported by an Australian Laureate Fellowship \# FL 120100094 and grant \# DP130100967 from the Australian Research Council. Pieter Roffelsen was supported in part by an IPRS scholarship at the University of Sydney.

\appendix

\section{$q$-Briot-Bouquet theorem}
In 1856, Briot and Bouquet \cite{Briot} analysed the existence and uniqueness of ordinary differential equations of a specific type, which are appropriately called Briot-Bouquet equations nowadays. We are interested in $q$-analog equations of this type, wich are systems of $q$-difference equations of the form \eqref{qbrioteq}.
In 1890 Poincar\'e \cite{Poincare} analysed these systems for $m=1$ and proved the so called $q$-Briot-Bouquet theorem, which is the special case of Theorem \ref{genbriottheoremsev} where $m=1$ and $\mathbf{Y}=0$.
In this section we discuss an extension of the classical $q$-Briot-Bouquet Theorem to several independent variables and, more importantly, we prove that the constructed solutions depend regularly on various parameters involved. This is a crucial ingredient in the proof of Theorem \ref{thmtruesol}, where we use the formal series solution defined in Theorem \ref{thmgeneralsol0+}, to construct true solutions of $q$-$P(A_1)$. We use standard multi-index notation, for $n\in\mathbb{N}^*$, for $\alpha=(\alpha_1,\ldots,\alpha_n)\in\mathbb{N}^{n}$, we set
\begin{equation*}
|\alpha|=\alpha_1+\ldots+\alpha_n.
\end{equation*}
For $\alpha,\beta\in \mathbb{N}^{n}$, we write $\alpha\leq\beta$ if and only if for all $1\leq i \leq n$ we have $\alpha_i\leq \beta_i$.
This defines a partial order on $\mathbb{N}^{n}$, and we say $\alpha<\beta$ if and only of $\alpha\leq\beta$ and $\alpha\neq \beta$.\\
If $\mathbf{y}\in\mathbb{C}^n$, we define
\begin{equation*}
\mathbf{y}^{\alpha}=y_1^{\alpha_1}\cdot \ldots\cdot y_n^{\alpha_n}.
\end{equation*}
The following Theorem is an extension of the $q$-Briot-Bouquet Theorem to several independent variables $t_1,\ldots t_m$, each with their own time evolution, $\overline{t}_i=q_i t_i$, where $q_i\in\mathbb{C}^*$ for $1\leq i \leq m$.
\begin{thm}[$q$-Briot-Bouquet theorem (several independent variables)]
\label{genbriottheoremsev}
Let $m,n\in\mathbb{N}^*$ and let us denote
\begin{align}
\mathbf{t}=(t_1,\ldots, t_m), & &\mathbf{q}=(q_1,\ldots, q_m)&& \overline{\mathbf{t}}=(q_1t_1,\ldots,q_m t_m), && \mathbf{y}=(y_1,\ldots,y_n). \label{timeevolution}
\end{align}
Let $H(\mathbf{t},\mathbf{y};\mathbf{q})=\left(H_1(\mathbf{t},\mathbf{y};\mathbf{q}), \ldots, H_n(\mathbf{t},\mathbf{y};\mathbf{q})\right)$ be a vector valued function. Assume there is a $\mathbf{Y}\in\mathbb{C}^n$, such that  $H(\mathbf{t},\mathbf{y})$ is holomorphic at $(\mathbf{t},\mathbf{y})=(\mathbf{0},\mathbf{Y})$ with $H(\mathbf{0},\mathbf{Y})=\mathbf{Y}$.
Suppose the eigenvalues of the Jacobi matrix 
\begin{equation*}
D=\left(\frac{\partial H_j}{\partial y_k}(\mathbf{0},\mathbf{Y})\right)_{1\leq j,k\leq n},
\end{equation*}
are not elements of the set
\begin{equation}
\label{defi Q}
Q:=\{\mathbf{q}^{\alpha} \mid \alpha\in\mathbb{N}^m\setminus\{0\}\}.
\end{equation}
Then the system of $q$-difference equations
\begin{equation}
\label{qbrioteq}
y_j\left(\overline{\mathbf{t}};\mathbf{q}\right)=H_j(\mathbf{t},\mathbf{y}(\mathbf{t});\mathbf{q}) \hspace{1cm} (1\leq j \leq n),
\end{equation}
has an unique power series solution of the form,
\begin{equation*}
y_j(\mathbf{t};\mathbf{q})=Y_j+\sum_{\alpha\in\mathbb{N}^m\setminus\{0\}}{b^{(j)}_{\alpha}(\mathbf{q}) \mathbf{t}^{\alpha}} \hspace{1cm} (1\leq j \leq n).
\end{equation*}
Furthermore, if the eigenvalues of the matrix $D$ are not limit points of the set $Q$, then these power series convergence in an open environment of $\mathbf{t}=\mathbf{0}$.
\end{thm}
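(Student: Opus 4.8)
The strategy is the classical majorant method adapted to the $q$-difference setting, following Poincaré's original argument for $m=1$ but carried out with multi-indices. First I would normalise: by translating $\mathbf{y}\mapsto\mathbf{y}-\mathbf{Y}$ we may assume $\mathbf{Y}=\mathbf{0}$, and by a linear change of the $y_j$ that puts $D$ in Jordan form (or merely upper-triangular form, which suffices) we may assume $D$ is triangular with the eigenvalues $\lambda_1,\ldots,\lambda_n$ on the diagonal. Writing $H_j(\mathbf{t},\mathbf{y};\mathbf{q})=\sum_{k}D_{jk}y_k+\widetilde H_j(\mathbf{t},\mathbf{y};\mathbf{q})$, where $\widetilde H_j$ collects all terms that are either of total degree $\geq 2$ in $\mathbf{y}$ or of degree $\geq 1$ in $\mathbf{t}$, the functional equation \eqref{qbrioteq} becomes, upon substituting $y_j=\sum_{\alpha\neq 0}b^{(j)}_\alpha\mathbf{t}^\alpha$ and comparing the coefficient of $\mathbf{t}^\alpha$,
\begin{equation*}
\bigl(\mathbf{q}^\alpha I-D\bigr)\,\mathbf{b}_\alpha=\mathbf{c}_\alpha,
\end{equation*}
where $\mathbf{c}_\alpha$ is a polynomial (with coefficients from the Taylor coefficients of $\widetilde H$) in the $b^{(j)}_\beta$ with $\beta<\alpha$, $\beta\neq 0$. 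Because the eigenvalues of $D$ avoid $Q$, the matrix $\mathbf{q}^\alpha I-D$ is invertible for every $\alpha\in\mathbb{N}^m\setminus\{0\}$, so the $\mathbf{b}_\alpha$ are determined recursively and uniquely; this establishes existence and uniqueness of the formal power series solution, and simultaneously shows that each $b^{(j)}_\alpha(\mathbf{q})$ is a rational function of $\mathbf{q}$ (and of the data), the denominators being products of the factors $\det(\mathbf{q}^\beta I-D)$.

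\textbf{Convergence.} For the second assertion I would set up a scalar majorant. Since $H$ is holomorphic at the origin there are constants $C>0$, $\rho>0$ with $|\partial^\gamma_{\mathbf t}\partial^\delta_{\mathbf y}H_j(\mathbf 0,\mathbf 0)|/(\gamma!\,\delta!)\le C\rho^{-|\gamma|-|\delta|}$, so each $\widetilde H_j$ is majorised by a fixed function of the form $\Phi(\mathbf t,\mathbf y)=\dfrac{C\,(|t_1|+\cdots+|t_m|+|y_1|+\cdots+|y_n|)}{1-\rho^{-1}(|t_1|+\cdots+|y_n|)}$ minus its constant and linear-in-$\mathbf y$ parts. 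The key quantitative input is that the eigenvalues of $D$ are not \emph{limit points} of $Q$: this guarantees a uniform lower bound
\begin{equation*}
\bigl\|(\mathbf{q}^\alpha I-D)^{-1}\bigr\|\le M\qquad\text{for all }\alpha\in\mathbb{N}^m\setminus\{0\},
\end{equation*}
for some finite $M$ (one checks this by noting $\mathbf q^\alpha\to$ the set of limit points of $Q$ as $|\alpha|\to\infty$, and that on the finitely many small $|\alpha|$ the matrices are simply invertible; the triangular structure lets one bound the inverse in terms of $\min_i|\mathbf q^\alpha-\lambda_i|$, which is bounded below). Feeding this bound into the recursion, one shows by induction that $|b^{(j)}_\alpha|\le B_\alpha$, where $B_\alpha$ are the Taylor coefficients of the unique analytic solution $Y(\mathbf t)$ of the scalar fixed-point equation $Y=M\,n\,\Phi(t_1+\cdots+t_m,\,Y,\ldots,Y)$ with $Y(\mathbf 0)=0$; that scalar equation is solved by the analytic implicit function theorem and hence has a positive radius of convergence, which forces the original series to converge on a polydisc around $\mathbf t=\mathbf 0$.

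\textbf{Main obstacle.} The routine part is the majorant bookkeeping; the genuinely delicate point is the uniform estimate $\|(\mathbf q^\alpha I-D)^{-1}\|\le M$ over \emph{all} nonzero multi-indices. When $D$ is non-diagonalisable one cannot simply bound the inverse by $1/\operatorname{dist}(\mathbf q^\alpha,\operatorname{spec}D)$; one must control the nilpotent part as well, which is why I would triangularise $D$ and track the off-diagonal entries explicitly, and also why the hypothesis must be phrased in terms of limit points of $Q$ rather than mere non-membership — without it, $\mathbf q^\alpha$ could approach an eigenvalue along a subsequence and the bound $M$ would fail, destroying convergence even though the formal series still exists. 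I would also take care that the set $Q$ depends on $\mathbf q$ through the whole tuple, so for $|q_i|<1$ one has $\mathbf q^\alpha\to 0$ and the limit-point condition is automatic unless $0\in\operatorname{spec}D$; the statement is written to cover the general case, and the proof should make clear where each part of the hypothesis is used.
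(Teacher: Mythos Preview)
Your proposal is correct and is precisely the Poincar\'e majorant argument the paper invokes; the paper's own proof of this theorem is the single sentence ``Several variables can easily be incorporated in the proof by Poincar\'e \cite{Poincare}'', and the detailed working it gives for the closely related Theorem \ref{genbriottheoremsevuniform} follows the same recursion $(\mathbf{q}^\alpha I-D)\mathbf{b}_\alpha=\mathbf{c}_\alpha$, the same uniform bound on $\|(\mathbf{q}^\alpha I-D)^{-1}\|$, and the same scalar-majorant plus implicit function theorem scheme you outline. The one minor technical difference is that the paper does \emph{not} triangularise $D$: it bounds the inverse directly via the adjugate formula $\|(\mathbf{q}^\alpha I-D)^{-1}\|_{\max}\le (n-1)!\,(1+\|D\|_{\max})^{n-1}/|\det(\mathbf{q}^\alpha I-D)|$, so the non-diagonalisability issue you flag as the main obstacle is handled without a change of basis---the eigenvalue hypothesis controls the determinant, and the adjugate is polynomially bounded regardless of Jordan structure.
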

\begin{proof}
Several variables can easily be incorporated in the proof by Poincar\'e \cite{Poincare}.
\end{proof}
We would like to extend this result, by proving that the solution $\mathbf{y}$ depends holomorphically on $\mathbf{q}$, as formulated in Theorem \ref{genbriottheoremsevuniform}.  For notational simplicity, we restrict ourselves to the case $\mathbf{Y}=0$.
Iwasaki et al. \cite{Gauss} give an elegant proof of the classical Briot-Bouquet Theorem with several dependent variables, see Proposition 1.1.1 in their book. The proof of Theorem \ref{genbriottheoremsevuniform} is basically an adaptation of their proof, where every estimate is done uniformly in $\mathbf{q}$.
We define the max norm $||\cdot||_{\text{max}}$ on $\mathbb{C}^n$ by
\begin{equation*}
||\mathbf{v}||_{\text{max}}=\max_{1\leq i\leq n} |v_i|,
\end{equation*}
for $\mathbf{v}\in\mathbb{C}^n$, and for matrices $A\in \mathbb{C}^{n\times n}$, we set
\begin{equation*}
||A||_{\text{max}}=\max_{1\leq i,j\leq n} |A_{ij}|.
\end{equation*}
We have the following inequality
\begin{equation}
\label{maxnormineq}
||Av||_{\text{max}}\leq n ||A||_{\text{max}}||v||_{\text{max}},
\end{equation}
for $A\in \mathbb{C}^{n\times n}$ and $\mathbf{v}\in\mathbb{C}^n$.\\
For $\mathbf{v}\in\mathbb{C}^n$ and $R>0$, we define $B_{\text{max}}^n(\mathbf{v},R)$ and $\overline{B}_{\text{max}}^n(\mathbf{v},R)$ to be respectively the open and closed ball of radius $R$ centered at $\mathbf{v}$ in $\mathbb{C}^n$ with respect to the $||\cdot||_{\text{max}}$ norm.

\begin{thm}[$q$-Briot-Bouquet theorem (several independent variables, uniform in $\mathbf{q}$)]
\label{genbriottheoremsevuniform}
Let $m,n\in\mathbb{N}^*$ and denote $\mathbf{t},\mathbf{q},\overline{\mathbf{t}}$ and $\mathbf{y}$ as in \eqref{timeevolution}.
Let $H(\mathbf{t},\mathbf{y};\mathbf{q})=\left(H_1(\mathbf{t},\mathbf{y};\mathbf{q}), \ldots, H_n(\mathbf{t},\mathbf{y};\mathbf{q})\right)$ be a vector valued function.
Assume there is an open set $U\subseteq B_{\text{max}}^m(\mathbf{0},1)\subseteq \mathbb{C}^m$ such that, for every $\mathbf{q}_0\in U$, the function $H(\mathbf{t},\mathbf{y};\mathbf{q})$ is holomorphic at $(\mathbf{t},\mathbf{y};\mathbf{q})=(\mathbf{0},\mathbf{0};\mathbf{q}_0)$ with $H(\mathbf{0},\mathbf{0};\mathbf{q}_0)=0$. For $\mathbf{q}\in U$, let us denote the Jacobian matrix of $H$ with respect to $\mathbf{y}$ at $(\mathbf{t},\mathbf{y})=(\mathbf{0},\mathbf{0})$ by
\begin{equation*}
D(\mathbf{q})=\left(\frac{\partial H_j}{\partial y_k}(\mathbf{0},\mathbf{0};\mathbf{q})\right)_{1\leq j,k\leq n}.
\end{equation*}
We assume that for any $\mathbf{q}\in U$, the eigenvalues of the Jacobi matrix $D(\mathbf{q})$, are not elements of 
\begin{equation}
\label{eq:defiQ}
Q_0:=\{0\}\cup\{\mathbf{q}^{\alpha} \mid \alpha\in\mathbb{N}^m\setminus\{0\}\}.
\end{equation}
Then the $q$-Briot-Bouquet Theorem \ref{genbriottheoremsev} shows, that for every $\mathbf{q}\in U$, the system of $q$-difference equations
\begin{equation}
\label{differential equation}
y_j\left(\overline{\mathbf{t}};\mathbf{q}\right)=H_j(\mathbf{t},\mathbf{y}(\mathbf{t});\mathbf{q}) \hspace{1cm} (1\leq j \leq n),
\end{equation}
has an unique converging power series solution vanishing at $\mathbf{t}=\mathbf{0}$,
\begin{equation}
\label{formalexpansiony}
y_j(\mathbf{t};\mathbf{q})=\sum_{\alpha\in\mathbb{N}^m\setminus\{0\}}{b^{(j)}_{\alpha}(\mathbf{q}) \mathbf{t}^{\alpha}} \hspace{1cm} (1\leq j \leq n).
\end{equation}
For every $\mathbf{q}_0\in U$, for $1\leq j\leq n$, the series \ref{formalexpansiony} converges locally uniformly in $(\mathbf{t},\mathbf{q})$ at $(\mathbf{0},\mathbf{q}_0)$ on $\mathbf{C}^m\times U$.
That is, for every $\mathbf{q}_0\in U$, there are open environments $Z\subseteq \mathbb{C}^m$ and $V\subseteq U$ of $\mathbf{0}$ and $\mathbf{q}_0$ respectively, such that the series \eqref{formalexpansiony} converge uniformly on $Z\times V$ in $(\mathbf{t},\mathbf{q})$, defining analytic functions on this set.
\end{thm}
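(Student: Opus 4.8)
The plan is to prove the uniform convergence statement of Theorem \ref{genbriottheoremsevuniform} by adapting the majorant-series proof of the classical Briot--Bouquet theorem (as in Iwasaki et al. \cite{Gauss}, Proposition 1.1.1), making every estimate uniform in the spectral parameter $\mathbf{q}$ over a suitable neighbourhood of a fixed $\mathbf{q}_0\in U$. The starting point is the recursion obtained by substituting the formal expansion \eqref{formalexpansiony} into \eqref{differential equation} and comparing coefficients of $\mathbf{t}^\alpha$: for $|\alpha|\geq 1$ one gets
\begin{equation*}
\left(\mathbf{q}^\alpha I - D(\mathbf{q})\right)\mathbf{b}_\alpha(\mathbf{q}) = \mathbf{P}_\alpha\big(\mathbf{q},(\mathbf{b}_\beta(\mathbf{q}))_{\beta<\alpha}\big),
\end{equation*}
where $\mathbf{P}_\alpha$ is a universal polynomial in the lower-order coefficients whose coefficients are the Taylor coefficients of $H$ in $(\mathbf{t},\mathbf{y})$. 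Since the eigenvalues of $D(\mathbf{q})$ avoid $Q_0$ and, crucially, $|\mathbf{q}^\alpha|\to 0$ as $|\alpha|\to\infty$ uniformly for $\mathbf{q}$ in a small enough closed ball around $\mathbf{q}_0$ (because $U\subseteq B_{\max}^m(\mathbf{0},1)$), the matrices $\mathbf{q}^\alpha I - D(\mathbf{q})$ are invertible with operator norm bounded below by a constant $c>0$, uniformly in $\alpha$ and in $\mathbf{q}$ near $\mathbf{q}_0$; for the finitely many small $|\alpha|$ one uses continuity of $\mathbf{q}\mapsto D(\mathbf{q})$ and the hypothesis that eigenvalues avoid $Q_0$ (which for fixed small $\alpha$ is an open condition). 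This simultaneously shows the $\mathbf{b}_\alpha(\mathbf{q})$ are holomorphic in $\mathbf{q}$ (rational combinations of holomorphic data) and gives the uniform lower bound needed for the majorant step.

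**Next I would** set up the majorant. Fix a polydisc $\overline{B}_{\max}^m(\mathbf{0},\rho)\times\overline{B}_{\max}^n(\mathbf{0},R)$ and a closed ball $\overline{B}_{\max}^m(\mathbf{q}_0,r)\subseteq U$ on which $H$ is jointly holomorphic and bounded by some $M$; then by Cauchy estimates the Taylor coefficients of $H$ in $(\mathbf{t},\mathbf{y})$ are dominated, uniformly in $\mathbf{q}$ over this ball, by those of a single explicit majorant function, e.g.\ $\Phi(\mathbf{t},\mathbf{y})=\dfrac{M}{\left(1-\rho^{-1}\sum_i t_i\right)\left(1-R^{-1}\sum_j y_j\right)}$ minus its value and linear part, with the linear-in-$\mathbf{y}$ part replaced by a bound coming from $\|D(\mathbf{q})\|_{\max}\le M/R$. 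Using $\|(\mathbf{q}^\alpha I-D(\mathbf{q}))^{-1}\|_{\max}\le c^{-1}$ (with the inequality \eqref{maxnormineq} to pass between operator and max norms), one shows by induction on $|\alpha|$ that $\|\mathbf{b}_\alpha(\mathbf{q})\|_{\max}\le B_\alpha$ where the $B_\alpha$ are the Taylor coefficients of the unique holomorphic solution $Y(\mathbf{t})$ near $\mathbf{0}$ of the scalar fixed-point equation $Y = c^{-1}\,\widetilde\Phi(\mathbf{t},Y,\dots,Y)$; the latter is an ordinary implicit-function/Briot--Bouquet-at-a-regular-point situation and has a positive radius of convergence independent of $\mathbf{q}$. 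Consequently the series \eqref{formalexpansiony} converge on $Z\times V$ with $Z$ a small polydisc and $V=B_{\max}^m(\mathbf{q}_0,r)$, uniformly, hence define holomorphic functions there (Weierstrass), and a standard compactness argument over $\overline{L}$ as invoked in the proof of Theorem \ref{thmgeneralsol0+} upgrades this to the stated local-uniform conclusion.

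**The main obstacle** I expect is the uniform invertibility of $\mathbf{q}^\alpha I - D(\mathbf{q})$ across \emph{all} multi-indices simultaneously. For $|\alpha|$ large this is free from $|\mathbf{q}^\alpha|\to 0$ and the spectrum of $D(\mathbf{q}_0)$ not containing $0$ (this is where $\{0\}\in Q_0$ is used), shrinking $r$ so that $D(\mathbf{q})$ stays close to $D(\mathbf{q}_0)$. For each of the finitely many remaining $\alpha$ one needs $\mathbf{q}^\alpha$ to miss every eigenvalue of $D(\mathbf{q})$; this holds at $\mathbf{q}_0$ by hypothesis and persists on a small ball by continuity of eigenvalues, and one takes $r$ to be the minimum over this finite set. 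Splicing the ``large $|\alpha|$'' and ``small $|\alpha|$'' bounds into one constant $c$ is the only delicate bookkeeping point; everything else is the routine majorant induction. One should also be slightly careful that the majorant $Y(\mathbf{t})$ genuinely has a $\mathbf{q}$-independent radius of convergence — this follows because $M$, $\rho$, $R$ and $c$ can all be chosen uniformly over $\overline{B}_{\max}^m(\mathbf{q}_0,r)$ — and that the linear term is handled by absorbing $D(\mathbf{q})\mathbf{b}_\alpha$ onto the left-hand side exactly as above rather than into the majorant, which is why the inductive estimate closes.
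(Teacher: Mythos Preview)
Your proposal is correct and follows essentially the same majorant-series strategy as the paper: derive the recursion \eqref{recursion}, establish a uniform bound $B$ on $\|(\mathbf{q}^\alpha I-D(\mathbf{q}))^{-1}\|_{\max}$ over a closed ball around $\mathbf{q}_0$ (the paper asserts \eqref{uniformbound} by compactness, while you spell out the large-$|\alpha|$/small-$|\alpha|$ split that underlies it), build a $\mathbf{q}$-independent majorant solved by the implicit function theorem, and close by induction on the coefficients. The only cosmetic differences are the precise form of the majorant (the paper uses a product $\prod_i(1-t_i/R)^{-1}(1-Y/R)^{-n}$ rather than your sum-based $\Phi$) and that the paper inducts over the partial order on $\mathbb{N}^m$ rather than on $|\alpha|$, both of which are immaterial.
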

\begin{proof}
For every $\mathbf{q}\in U$ and $1\leq j\leq n$, since $H_j(\mathbf{t},\mathbf{y};\mathbf{q})$ is holomorphic at $(\mathbf{t},\mathbf{y},\mathbf{q})=(\mathbf{0},\mathbf{0},\mathbf{q})$ with $H_j(\mathbf{0},\mathbf{0};\textbf{q})=0$, we can find a converging power series expansion
\begin{equation}
\label{powerseriesf}
H_j(\mathbf{t},\mathbf{y};\mathbf{q})=\sum_{\alpha\in\mathbb{N}^{m},\beta\in\mathbb{N}^{n}}{C^{(j)}_{(\alpha,\beta)}(\mathbf{q}) \mathbf{t}^{\alpha}\mathbf{y}^{\beta}},
\end{equation}
about $(\mathbf{t},\mathbf{y})=(\mathbf{0},\mathbf{0})$, with $C^{(j)}_{(0,0)}(\mathbf{q})=0$, for $1\leq j\leq n$.\\
The coefficients $C^{(j)}_{(\alpha,\beta)}(\mathbf{q})$ are holomorphic in $\mathbf{q}$ on $U$ for all $\alpha\in\mathbb{N}^{m}$ and $\beta\in\mathbb{N}^{n}$. 
Substituting formal power series expansions \eqref{formalexpansiony} into equation \eqref{differential equation} gives the following recursion for the coefficients $b^{(j)}_{\alpha}(\mathbf{q})$:
\begin{equation}
\label{recursion}
\left(\mathbf{q}^{\alpha} I_n-D(\mathbf{q})\right)\left( \begin{array}{c} b^{(1)}_{\alpha}(\mathbf{q}) \\ b^{(2)}_{\alpha}(\mathbf{q})\\ \vdots \\b^{(n)}_{\alpha}(\mathbf{q})
\end{array} \right)
=\left( \begin{array}{c} M_{\alpha}\left[\left(C^{(1)}_{(\alpha',\beta)}(\mathbf{q})\right)_{(\alpha',\beta)\in L(\alpha)};\left(b^{(1)}_{\alpha'}(\mathbf{q})\right)_{\alpha'<\alpha},\ldots, \left(b^{(n)}_{\alpha'}(\mathbf{q})\right)_{\alpha'<\alpha} \right] \\ M_{\alpha}\left[\left(C^{(2)}_{(\alpha',\beta)}(\mathbf{q})\right)_{(\alpha',\beta)\in L(\alpha)};\left(b^{(1)}_{\alpha'}(\mathbf{q})\right)_{\alpha'<\alpha},\ldots, \left(b^{(n)}_{\alpha'}(\mathbf{q})\right)_{\alpha'<\alpha} \right]\\ \vdots \\ M_{\alpha}\left[\left(C^{(n)}_{(\alpha',\beta)}(\mathbf{q})\right)_{(\alpha',\beta)\in L(\alpha)};\left(b^{(1)}_{\alpha'}(\mathbf{q})\right)_{\alpha'<\alpha},\ldots, \left(b^{(n)}_{\alpha'}(\mathbf{q})\right)_{\alpha'<\alpha} \right]
\end{array} \right),
\end{equation}
for $\alpha\in \mathbb{N}^m\setminus\{0\}$, where the $M_{\alpha}$ are polynomials in their inputs with positive coefficients and the sets $L(\alpha)$ are defined by
\begin{equation*}
L(\alpha)=\{(\alpha',\beta)\in \mathbb{N}^m\times\mathbb{N}^n:\alpha'\leq \alpha, |\beta|\leq |\alpha-\alpha'|\text{ and if $\alpha'=0$, then $|\beta|\geq 2$}\}.
\end{equation*}
As the eigenvalues of $D(\mathbf{q})$ are not elements of $Q_0$ for $\mathbf{q}\in U$, we know that, for every $\alpha\in\mathbb{N}^m\setminus\{0\}$, the matrix $(\mathbf{q}^{\alpha} I_n-D(\mathbf{q}))$ is invertible for $\mathbf{q}\in U$ and, even stronger,
\begin{equation*}
\mathbf{q}\mapsto (\mathbf{q}^{\alpha} I_n-D(\mathbf{q}))^{-1},
\end{equation*}
is a holomorphic matrix-valued function on $U$.\\
Hence this recursion defines unique holomorphic functions $b^{(j)}_{\alpha}(\mathbf{q})$ on $U$ for $1\leq j\leq n$ and $\alpha\in\mathbb{N}^m\setminus\{0\}$. 

Let us take any $\mathbf{q}_0\in U$ and determine $R_U>0$ such that 
\begin{equation*}
B_{\text{max}}^m(\mathbf{q}_0,R_U)\subseteq \overline{B}_{\text{max}}^m(\mathbf{q}_0,R_U)\subseteq U.
\end{equation*}
As $\overline{B}_{\text{max}}^m(\mathbf{q}_0,R_U)\subseteq U$ is compact and the eigenvalues of $D(\mathbf{q})$ are not elements of $Q_0$ for $\mathbf{q}\in U$, we can obtain the following uniform bound on $\overline{B}_{\text{max}}^m(\mathbf{q}_0,R_U)$,
\begin{equation}
\label{uniformbound}
L=\inf_{\mathbf{q}\in \overline{B}_{\text{max}}^m(\mathbf{q}_0,R_U),\alpha\in\mathbb{N}^m\setminus\{0\}}{\left|\det{\left(\mathbf{q}^{\alpha}I_n-D(\mathbf{q})\right)}\right|}>0.
\end{equation}
Hence, for every $\mathbf{q}\in \overline{B}_{\text{max}}^m(\mathbf{q}_0,R_U)$, we have
\begin{align}
\left|\left|\left(\mathbf{q}^{\alpha} I_n-D(\mathbf{q})\right)^{-1}\right|\right|_{\text{max}}&=\left|\left|\frac{\text{adj}\left(\mathbf{q}^{\alpha} I_n-D(\mathbf{q})\right)}{\det\left(\mathbf{q}^{\alpha} I_n-D(\mathbf{q})\right)}\right|\right|_{\text{max}}\\
&=\frac{1}{\left|\det\left(\mathbf{q}^{\alpha} I_n-D(\mathbf{q})\right)\right|}\left|\left|\text{adj}\left(\mathbf{q}^{\alpha} I_n-D(\mathbf{q})\right)\right|\right|_{\text{max}}\\
&\leq \frac{(n-1)!}{L}\left|\left|\mathbf{q}^{\alpha} I_n-D(\mathbf{q})\right|\right|_{\text{max}}^{n-1}\\
&\leq \frac{(n-1)!}{L}\left(\left|\left|\mathbf{q}^{\alpha} I_n\right|\right|_{\text{max}}+\left|\left|(D(\mathbf{q})\right|\right|_{\text{max}}\right)^{n-1}\\
&\leq \frac{(n-1)!}{L}\left(1+\left|\left|(D(\mathbf{q})\right|\right|_{\text{max}}\right)^{n-1},
\end{align}
and, as $\left|\left|(D(\mathbf{q})\right|\right|_{\text{max}}$ is clearly uniformly bounded on the compact set $\overline{B}_{\text{max}}^m(\mathbf{q}_0,R_U)$, we have 
\begin{equation}
\label{Bdefi}
B=\sup_{\mathbf{q}\in \overline{B}_{\text{max}}^m(\mathbf{q}_0,R_U),\alpha\in\mathbb{N}^m\setminus\{0\}}{\left|\left|\left(\mathbf{q}^{\alpha} I_n-D(\mathbf{q})\right)^{-1}\right|\right|_{\text{max}}}<\infty.
\end{equation}
For all $\alpha\in\mathbb{N}^{m}$ and $\beta\in\mathbb{N}^{n}$ and $1\leq j\leq n$, we have a convergent power series expansion
\begin{equation}
\label{cycoef}
C^{(j)}_{(\alpha,\beta)}(\mathbf{q})=\sum_{\gamma\in\mathbb{N}^m}{C^{(j)}_{(\alpha,\beta,\gamma)} (\mathbf{q}-\mathbf{q}_0)^{\gamma}},
\end{equation}
about $\mathbf{q}=\mathbf{q}_0$.\\
Even stronger, for $1\leq j\leq n$, we have a convergent power series expansion,
\begin{equation}
\label{powerseriesH}
H_j(\mathbf{t},\mathbf{y};\mathbf{q})=\sum_{\alpha,\gamma\in\mathbb{N}^{m},\beta\in\mathbb{N}^{n}}{C^{(j)}_{(\alpha,\beta,\gamma)} (\mathbf{q}-\mathbf{q}_0)^{\gamma}\mathbf{t}^{\alpha}\mathbf{y}^{\beta}},
\end{equation}
about $(\mathbf{t},\mathbf{y},\mathbf{q})=(\mathbf{0},\mathbf{0},\mathbf{q}_0)$.\\
For every $1\leq j\leq n$, we determine an $R_j>0$, such that, for all $\mathbf{t},\mathbf{q}\in\mathbb{C}^m$ and $\mathbf{y}\in\mathbb{C}^n$, the series \eqref{powerseriesH} converges if
\begin{align}
||\mathbf{t}||_{\text{max}}<R_j, &&||\mathbf{q}-\mathbf{q}_0||_{\text{max}}<R_j,&& ||\mathbf{y}||_{\text{max}}<R_j.
\end{align}
We set $R_0=\min{(R_U,R_1,\ldots,R_n)}$, take any $0<R<R_0$ and define
\begin{equation*}
M_j=\sup_{\mathbf{q}\in \overline{B}_{\text{max}}^m(\mathbf{q}_0,R),\alpha\in\mathbb{N}^m, \beta\in\mathbb{N}^n}{\left|C_{(\alpha,\beta)}^{(j)}(\mathbf{q})\right|R^{|\alpha+\beta|}}.
\end{equation*}
Clearly the $M_j$ are finite and we set $M_0=\max{(M_1,\ldots, M_n)}$.
We define the function
\begin{equation*}
G(\mathbf{t},Y)=M\left(\left(1-\frac{t_1}{R}\right)^{-1}\cdot\ldots\cdot\left(1-\frac{t_m}{R}\right)^{-1}\left(1-\frac{Y}{R}\right)^{-n}-1-n\frac{Y}{R}\right).
\end{equation*}
Observe that $G$ is holomorphic at $(\mathbf{t},Y)=(\mathbf{0},0)$ with $G(\mathbf{0},0)=0$ and $\frac{\partial G}{\partial Y}(\mathbf{0},0)=0$. Hence $G$ has a convergent power series expansion
\begin{equation*}
G(\mathbf{t},Y)=\sum_{\substack{\alpha\in\mathbb{N}^m,i\in\mathbb{N}\\ (\alpha,i)\neq (0,0),(0,1)}}{C_{(\alpha,i)} \mathbf{t}^{\alpha} Y^i},
\end{equation*}
around $(\mathbf{t},Y)=(\mathbf{0},0)$.\\
Let $\alpha\in\mathbb{N}^m,i\in\mathbb{N}$ with $(\alpha,i)\neq (0,0),(0,1)$, then we have
\begin{equation*}
C_{(\alpha,i)}={n+i-1\choose i}\frac{M_0}{R^{|\alpha|+i}}.
\end{equation*}
Hence, for any $1\leq j\leq n$, for $\alpha\in\mathbb{N}^m,\beta\in\mathbb{N}^n$ such that, if $\alpha=0$, then $|\beta|\geq 2$, we have, by the definition of $M_0$,
\begin{equation*}
\label{Cinequalities}
\left|C^{(j)}_{(\alpha,\beta)}(\mathbf{q})\right|\leq \frac{M_0}{R^{|\alpha+\beta|}}\leq C_{(\alpha,|\beta|)},
\end{equation*}
for $\mathbf{q}\in \overline{B}_{\text{max}}^m(\mathbf{q}_0,R)$.\\
We consider the functional equation
\begin{equation}
\label{functionalequation}
Y(\mathbf{t})=Bn G(\mathbf{t},Y(\mathbf{t})).
\end{equation}
We prove that this equation has an unique solution $Y(\mathbf{t})$ which is holomorphic at $\mathbf{t}=\mathbf{0}$ with $Y(\mathbf{0})=0$.
For this we apply the implicit function theorem to the function $F(\mathbf{t},Y)=B nG(\mathbf{t},Y)-Y$. Observe that $F(\mathbf{0},0)=0$ and
\begin{equation*}
\frac{\partial F}{\partial Y}(\mathbf{0},0)=-1\neq 0.
\end{equation*}
Hence we can apply the implicit function theorem and obtain an unique solution $Y(\mathbf{t})$ of the functional equation \eqref{functionalequation} which is holomorphic at $\mathbf{t}=0$ with $Y(\mathbf{0})=0$. Let the Taylor series expansion of $Y(\mathbf{t})$ at $\mathbf{t}=0$ be given by
\begin{equation}
\label{taylorexpansion}
Y(\mathbf{t})=\sum_{\alpha\in\mathbb{N}^m\setminus \{0\}}{B_{\alpha} \mathbf{t}^{\alpha}}.
\end{equation}
Since $Y$ is a solution of the functional equation \eqref{functionalequation}, the coefficients $B_{\alpha}$ are determined uniquely by the recursion
\begin{equation}
\label{recursion1}
B_{\alpha}=Bn M_{\alpha}\left(\left(C_{(\alpha',|\beta|)}\right)_{(\alpha',\beta)\in L(\alpha)};\left(B_{\alpha'}\right)_{\alpha'<\alpha},\ldots, \left(B_{\alpha'}\right)_{\alpha'<\alpha} \right)
\end{equation}
for $\alpha\in \mathbb{N}^m\setminus \{0\}$, where the polynomials $M_{\alpha}$ are the same as in recursion \eqref{recursion}.\\
We prove the following inequality by complete induction with respect to the partial order $\leq$ on $\mathbb{N}^m$,
\begin{equation}
\label{coefficientssmaller}
\left|b^{(j)}_{\alpha}(\mathbf{q})\right|\leq B_{\alpha}, 
\end{equation}
for every $1\leq j\leq n$ and $\mathbf{q}\in \overline{B}_{\text{max}}^m(\mathbf{q}_0,R)$, for all $\alpha\in \mathbb{N}^m\setminus \{0\}$.\\
Let us fix a $\mathbf{q}\in \overline{B}_{\text{max}}^m(\mathbf{q}_0,R)$, take any $\alpha\in \mathbb{N}^m\setminus \{0\}$, and assume that for all $\alpha'<\alpha$ inequality \eqref{coefficientssmaller} holds for every $1\leq j\leq n$. Then we have, by applying inequality \eqref{maxnormineq} to equation \eqref{recursion},
\begin{align*}
\max_{1\leq i\leq n}{\left|b_{\alpha}^{(i)}(\mathbf{q})\right|}\leq& n \left|\left|\left(\mathbf{q}^{\alpha}I_n-D(\mathbf{q})\right)^{-1}\right|\right|_{\text{max}}\cdot\\
&\max_{1\leq i\leq n}{\left|M_{\alpha}\left[\left(C^{(i)}_{(\alpha',\beta)}(\mathbf{q})\right)_{(\alpha',\beta)\in L(\alpha)};\left(b^{(1)}_{\alpha'}(\mathbf{q})\right)_{\alpha'<\alpha},\ldots, \left(b^{(n)}_{\alpha'}(\mathbf{q})\right)_{\alpha'<\alpha} \right]\right|}\\
\leq &n B \max_{1\leq i\leq n}{M_{\alpha}\left[\left(\left|C^{(i)}_{(\alpha',\beta)}(\mathbf{q})\right|\right)_{(\alpha',\beta)\in L(\alpha)};\left(\left|b^{(1)}_{\alpha'}(\mathbf{q})\right|\right)_{\alpha'<\alpha},\ldots, \left(\left|b^{(n)}_{\alpha'}(\mathbf{q})\right|\right)_{\alpha'<\alpha} \right]}\\
\leq &n B M_{\alpha}\left(\left(C_{(\alpha',|\beta|)}\right)_{(\alpha',\beta)\in L(\alpha)};\left(B_{\alpha'}\right)_{\alpha'<\alpha},\ldots, \left(B_{\alpha'}\right)_{\alpha'<\alpha} \right)\\
=&B_{\alpha},
\end{align*}
where, in the second inequality we used the definition of $B$ \eqref{Bdefi} and the fact that the polynomials $M_{\alpha}$ have positive coefficients, in the third inequality we use the induction hypothesis \eqref{coefficientssmaller}, and in the last equality we used equation \eqref{recursion1}. \\
By complete induction, we conclude inequality \eqref{coefficientssmaller} holds for all $1\leq j\leq n$ and $\mathbf{q}\in \overline{B}_{\text{max}}^m(\mathbf{q}_0,R)$, for every $\alpha\in \mathbb{N}^m\setminus \{0\}$.
Determine $\rho_0>0$, such that the Taylor expansion \eqref{taylorexpansion} converges if $||\mathbf{t}||_{\text{max}}<\rho_0$. Take any $0<\rho<\rho_0$, then we have
\begin{equation}
\label{boundrho}
\sum_{\alpha\in\mathbb{N}^m\setminus \{0\}}{B_{\alpha} \rho^{|\alpha|}}<\infty.
\end{equation}
Take a $1\leq j\leq n$ and define, for $\alpha\in\mathbb{N}^m\setminus \{0\}$, the function
\begin{equation*}
Y_{\alpha}^{(j)}(\mathbf{t},\mathbf{q})=b_{\alpha}^{(j)}(\mathbf{q})\mathbf{t}^{\alpha},
\end{equation*}
which is holomorphic on $U\times \mathbb{C}^m$ and hence, also holomorphic on the compact set
\begin{equation}\label{eq:Sdefi}
S=\overline{B}_{\text{max}}^m(\mathbf{0},\rho)\times \overline{B}_{\text{max}}^m(\mathbf{q}_0,R).
\end{equation}
Let $||\cdot||_\infty^S$ denote the supremum norm on $S$, then we have, by inequality \eqref{coefficientssmaller},
\begin{equation*}
||Y_{\alpha}^{(j)}||_\infty^S\leq B_{\alpha} \rho^{|\alpha|},
\end{equation*}
and therefore, by equation \eqref{boundrho},
\begin{equation*}
\sum_{\alpha\in\mathbb{N}^m\setminus \{0\}}{||Y_{\alpha}^{(j)}||_\infty^S}<\infty.
\end{equation*}
We conclude that
\begin{equation}\label{eq:yjpower}
y_j(\mathbf{t};\mathbf{q})=\sum_{\alpha\in\mathbb{N}^m\setminus \{0\}}{Y_{\alpha}^{(j)}(\mathbf{t},\mathbf{q})}=\sum_{\alpha\in\mathbb{N}^m\setminus \{0\}}{b_{\alpha}^{(j)}(\mathbf{q})\mathbf{t}^{\alpha}},
\end{equation}
converges uniformly on $S$, defining a complex function holomorphic on the interior of $S$.
\end{proof}

\begin{remark}
\label{remarkincorporatepar}
It is straightforward to extend Theorem \ref{genbriottheoremsevuniform} to include parameters. That is, we write $\mathbf{b}=(b_1,\dots,b_s)$ for some $s\in\mathbb{N}^*$, set $H=H(\mathbf{t},\mathbf{y};\mathbf{q},\mathbf{b})$ and assume that the conditions in Theorem \ref{genbriottheoremsevuniform} hold for all $\mathbf{b}$ in some fixed open set $V\subseteq \mathbb{C}^s$. Then the obtained series $y(\mathbf{t};\mathbf{q},\mathbf{b})$ converge locally uniformly in $(\mathbf{t},\mathbf{q},\mathbf{b})$ at $(\mathbf{t},\mathbf{q},\mathbf{b})=(\mathbf{0},\mathbf{q}_0,\mathbf{b}_0)$ for $(\mathbf{q}_0,\mathbf{b}_0)\in U\times V$.
\end{remark}

\section{The QRT mapping}
In this section we discuss the QRT mapping, first introduced in Quispel, Roberts and Thompson \cite{Quispel}. We denote
\begin{equation*}
X=\begin{pmatrix}
x^2\\ x \\ 1
\end{pmatrix},
\end{equation*}
and take square matrices
\begin{equation*}
A_i=\begin{pmatrix}
\alpha_i & \beta_i & \gamma_i\\
\delta_i & \epsilon_i & \zeta_i\\
\kappa_i & \lambda_i & \mu_i
\end{pmatrix},
\end{equation*}
where $i=1,2$.\\
The QRT mapping is the $18$-parameter family of mappings given by
\begin{align}
\overline{x}=\frac{f_1(y)-xf_2(y)}{f_2(y)-xf_3(y)}, & &\overline{y}=\frac{g_1(\overline{x})-yg_2(\overline{x})}{g_2(\overline{x})-yg_3(\overline{x})}, \label{QRT}
\end{align}
with
\begin{align*}
f(x)=(A_0 X)\times (A_1 X), & & g(x)=\left(A_0^T X\right)\times \left(A_1^T X\right).
\end{align*}
Such a mapping has an invariant given by
\begin{equation*}
I(x,y)=\frac{\alpha_0 x^2y^2+  \beta_0x^2 y + \gamma_0 x^2+
\delta_0 x y^2 + \epsilon_0 x y + \zeta_0 x+
\kappa_0y^2 + \lambda_0 y + \mu_0}{\alpha_1 x^2y^2+  \beta_1x^2 y + \gamma_1 x^2+
\delta_1 x y^2 + \epsilon_1 x y + \zeta_1 x+
\kappa_1y^2 + \lambda_1 y + \mu_1},
\end{equation*}
that is, if $(x,y)$ satisfies \eqref{QRT}, then
\begin{equation}
\label{invarianteq}
I\left(\overline{x},\overline{y}\right)=I(\overline{x},y)=I(x,y).
\end{equation}
Let us set $P=I(x,y)$ and
\begin{align*}
A=\alpha_0-\alpha_1 P,& & B=\beta_0-\beta_1 P, & & D=\delta_0-\delta_1 P,&& G=\gamma_0-\gamma_1 P,&& E=\epsilon_0-\epsilon_1 P,\\
K=\kappa_0-\kappa_1 P,&& L=\lambda_0-\lambda_1 P, & & Z=\zeta_0-\zeta_1 P, && U=\mu_0-\mu_1 P,
\end{align*}
then, by equation \eqref{invarianteq}, we have
\begin{align}
Ax^2y^2+Bx^2y+Dxy^2+Gx^2+Exy+Ky^2+Zx+Ly+U&=0,\label{inv1}\\
A\overline{x}^2y^2+B\overline{x}^2y+D\overline{x}y^2+G\overline{x}^2+E\overline{x}y+Ky^2+Z\overline{x}+Ly+U&=0,\label{inv2}\\
A\overline{x}^2\overline{y}^2+B\overline{x}^2\overline{y}+D\overline{x}\overline{y}^2+G\overline{x}^2+E\overline{x}\overline{y}+K\overline{y}^2+Z\overline{x}+L\overline{y}+U&=0.\label{inv3}
\end{align}
Subtracting equation \eqref{inv1} from \eqref{inv2} and equation \eqref{inv2} from \eqref{inv3} we obtain respectively,
\begin{align*}
&\left(\overline{x}-x\right)\left(A(\overline{x}+x)y^2+B\left(\overline{x}+x\right)y+Dy^2+G\left(\overline{x}+x\right)+Ey+Z\right)=0,\\
&\left(\overline{y}-y\right)\left(A\overline{x}^2\left(\overline{y}+y\right)+B\overline{x}^2+D\overline{x}\left(\overline{y}+y\right)+E\overline{x}+K\left(\overline{y}+y\right)+L\right)=0.
\end{align*}
Excluding the cases $\overline{x}=x$ and $\overline{y}=y$, we obtain
\begin{align}
\overline{x}=-x-\frac{Dy^2+Ey+Z}{Ay^2+By+G}, & & \overline{y}=-y-\frac{B\overline{x}^2+E\overline{x}+L}{A\overline{x}^2+D\overline{x}+K}.\label{mcmillan}
\end{align}
If the various parameters $A,B,\ldots,L,Z$ in this system are plain complex numbers, this has been called the asymmetric McMillan map in Iatrou and Roberts \cite{Iatrou}, as it is indeed an asymmetric extension of the classical McMillan map \cite{McMillan}.
\begin{lem}
\label{lemqrtmclillan}
If $x$ and $y$ satisfy equations \eqref{inv1} and \eqref{mcmillan}, then they form a solution of the QRT mapping \eqref{QRT}.
\end{lem}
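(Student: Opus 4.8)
The plan is to identify equations \eqref{mcmillan} as the pair of ``root--switching'' maps attached to the biquadratic curve $I(x,y)=P$, and then to eliminate $P$ in order to recover the QRT formulas \eqref{QRT}.

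First I would record the bookkeeping identity $I(x,y)=\dfrac{X^{T}A_0 Y}{X^{T}A_1 Y}$, where $Y=(y^{2},y,1)^{T}$, so that \eqref{inv1} is precisely the statement $I(x,y)=P$. Fixing $y$ and abbreviating $A_0 Y=(p_0,q_0,r_0)^{T}$ and $A_1 Y=(p_1,q_1,r_1)^{T}$, the coefficients occurring in \eqref{mcmillan} become $Ay^{2}+By+G=p_0-Pp_1$, $Dy^{2}+Ey+Z=q_0-Pq_1$ and $Ky^{2}+Ly+U=r_0-Pr_1$, so that \eqref{inv1} says that $x$ is a root of $(p_0-Pp_1)\xi^{2}+(q_0-Pq_1)\xi+(r_0-Pr_1)=0$, a quadratic in the auxiliary variable $\xi$, while the first line of \eqref{mcmillan} is the Vieta sum--of--roots relation that singles out its second root as $\overline x$. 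A one--line computation using these two facts shows that $\overline x$ is again a root of the same quadratic, i.e. that \eqref{inv2} holds.

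Next I would eliminate $P$. From \eqref{inv1} we have $P=(x^{2}p_0+xq_0+r_0)/(x^{2}p_1+xq_1+r_1)$; substituting this into $\overline x=-x-(q_0-Pq_1)/(p_0-Pp_1)$ and clearing the common denominator, the $2\times2$ minors that remain are exactly the components of the cross product $f(y)=(A_0 Y)\times(A_1 Y)$, namely $f_1=q_0r_1-r_0q_1$, $f_2=r_0p_1-p_0r_1$, $f_3=p_0q_1-q_0p_1$, and one is left with $\overline x=\dfrac{f_1(y)-xf_2(y)}{f_2(y)-xf_3(y)}$, which is the first equation of \eqref{QRT}.

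Finally, since $I(x,y)=\dfrac{Y^{T}A_0^{T}X}{Y^{T}A_1^{T}X}$ as well, the whole argument is symmetric under interchanging the roles of the two variables together with $A_i\leftrightarrow A_i^{T}$. Applying it verbatim to the quadratic in $y$ obtained by fixing $\overline x$, now starting from \eqref{inv2} and the second line of \eqref{mcmillan}, gives $\overline y=\dfrac{g_1(\overline x)-yg_2(\overline x)}{g_2(\overline x)-yg_3(\overline x)}$, the second equation of \eqref{QRT}, and hence $(x,y)$ solves \eqref{QRT}. I expect the only genuine obstacle to be the bookkeeping in the elimination step --- matching the minors of the cross product with the coefficients produced after clearing denominators --- together with the degenerate configurations (a vanishing leading coefficient $Ay^{2}+By+G$, or a vanishing QRT denominator), which I would handle by reading all the displayed relations as identities of rational functions and passing to the limit by continuity wherever some denominator happens to vanish.
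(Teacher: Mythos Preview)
Your proof is correct and follows essentially the same route as the paper's: establish that \eqref{mcmillan} preserves the biquadratic (so that \eqref{inv2} holds), and then eliminate $P$ between \eqref{inv1} and the first line of \eqref{mcmillan} (respectively between \eqref{inv2} and the second line) to recover the two QRT formulas. The only difference is packaging: where the paper cites Iatrou--Roberts for the invariance step and simply writes ``eliminating $P$'' for the rest, you carry both out explicitly --- the Vieta root-switching argument for invariance and the cross-product/minor identification for the elimination --- which makes your write-up more self-contained.
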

\begin{proof}
Iatrou and Roberts \cite{Iatrou} prove that the mapping \eqref{mcmillan} leaves \eqref{inv1} invariant, in fact, they show that if $x$ and $y$ satisfy equations \eqref{inv1} and \eqref{mcmillan}, then equations \eqref{inv2} and \eqref{inv3} hold as well. Eliminating $P$ from equation \eqref{inv1} and the first equation in \eqref{mcmillan} gives the time-evolution of $x$ in \eqref{QRT}. Eliminating $P$ from equation \eqref{inv2} and the second equation in \eqref{mcmillan} gives the time-evolution of $y$ in \eqref{QRT}.
\end{proof}
\subsection{QRT mappings of linear type}
\label{qrtsolve}
If $A=B=D=0$, then the autonomous system \eqref{mcmillan} is linear. We therefore consider the following special case of the QRT mapping \eqref{QRT}, where
\begin{equation}
\alpha_0=\alpha_1=\beta_0=\beta_1=\delta_0=\delta_1=0. \label{assumptions}
\end{equation}
Equations \eqref{mcmillan} become the following system of linear equations
\begin{align}
\overline{x}+x+\frac{E}{G}y=-\frac{Z}{G}, && \overline{y}+y+\frac{E}{K}\overline{x}=-\frac{L}{K}.\label{mcmillanlinear}
\end{align}
Solving this is straightforward, we first look for an equilibrium solution $(x_{eq},y_{eq})$, that is, a solution invariant under the time evolution, so
\begin{align*}
2x_{eq}+\frac{E}{G}y_{eq}=-\frac{Z}{G}, & & 2y_{eq}+\frac{E}{K}x_{eq}=-\frac{L}{K},
\end{align*}
which gives
\begin{align}
x_{eq}=\frac{2KZ-EL}{E^2-4GK}, &&y_{eq}=-\frac{2GL-EZ}{E^2-4GK}. \label{xyeq}
\end{align}
It remains to find the general solution to the homogeneous part of system \eqref{mcmillanlinear}, which is
\begin{equation}
\label{homoeq}
\begin{pmatrix}
\overline{x}\\
\overline{y}
\end{pmatrix}=
\begin{pmatrix}
-1 & -\frac{E}{G}\\
\frac{E}{K}& \frac{E^2}{KG}-1
\end{pmatrix}
\begin{pmatrix}
x\\
y
\end{pmatrix}.
\end{equation}
Let us denote
\begin{equation}
\label{Mdefi}
M=\begin{pmatrix}
-1 & -\frac{E}{G}\\
\frac{E}{K}& \frac{E^2}{KG}-1
\end{pmatrix},
\end{equation}
In order to diagonalise this matrix, we consider the associated characteristic equation
\begin{equation}
\label{charactisticpol}
|M-\lambda I|=\lambda^2+\left(2-\frac{E^2}{KG}\right) \lambda+1=0,
\end{equation}
which generically does not have a solution in $\mathbb{C}(P)$.\\
To overcome this limitation we could set $P=c_2 \Lambda+c_1+c_0/\Lambda$ where $\overline{\Lambda}=\Lambda$ for some well chosen $c_0,c_1,c_2$, to guarantee that equation \eqref{charactisticpol} has a root in $\mathbb{C}(\Lambda)$. However the calculations quickly get out of hand, so we illustrate this process by example in \eqref{subsectionlo}, and hope the procedure to solve the system \eqref{mcmillanlinear} in general becomes clear. Once the general solution to this system is found, we substitute it into equation \eqref{inv1}, which forces us to fix the value of one free parameter as is done in equation \eqref{mudefi}. Then, by Lemma \ref{lemqrtmclillan}, we obtain the generic solution of the QRT mapping subject to conditions \eqref{assumptions}. Note that we assumed $E^2-4GK\neq 0$ to obtain the equilibrium solution \eqref{xyeq}. The case $E^2-4GK=0$ is delicate and requires a separate analysis. We discuss such a case in Section \ref{section:loga}.

\section{B\"acklund transformations of $q$-$P(A_1)$}
\label{appback}
The $q$-$P(A_1)$ equation has various B\"acklund transformations, which relate solutions of the equation for possibly different parameter values. The full affine Weyl group symmetry of $q$-$P(A_1)$ is of $E_7^{(1)}$ type, however we only use $4$ such transformations in this study, given by
\begin{equation}
\label{backlundt}
\mathcal{T}_k\left(f,g,\mathbf{b}\right)=\left(f^{(k)},g^{(k)},\mathbf{b}^{(k)}\right),
\end{equation}
for $k\in \{1,2,3,4\}$, where
\begin{align*}
f^{(1)}(t)&=\frac{t}{f(t)},& f^{(2)}(t)&=tg\left(\frac{1}{t}\right),& f^{(3)}(t)&=g\left(q^{-\tfrac{1}{2}}t\right),& f^{(4)}(t)&= \frac{1}{g(\frac{1}{t})},\\
g^{(1)}(t)&=\frac{t}{g(t)},& g^{(2)}(t)&=tf\left(\frac{1}{t}\right), & g^{(3)}(t)&=f\left(q^{\tfrac{1}{2}}t\right),& g^{(4)}(t)&=\frac{1}{f(\frac{1}{t})},
\end{align*}
with for $1\leq i\leq 4$ and $5\leq j \leq 8$,
\begin{align*}
b_i^{(1)}&=b_{i+4}^{-1},& b_i^{(2)}&=b_{i+4}^{-1},& b_i^{(3)}&=q^{\tfrac{1}{2}}b_i^{-1},& b_i^{(4)}&=b_i,\\
b_j^{(1)}&=b_{j-4}^{-1},& b_j^{(2)}&=b_{j-4}^{-1},& b_j^{(3)}&=b_j^{-1},& b_j^{(4)}&=b_j.
\end{align*}
Note that each of these transformations $\mathcal{T}_k$ leaves $q(\mathbf{b})$ invariant and indeed maps solutions of $q$-$P(A_1)(\mathbf{b})$ to solutions of $q$-$P(A_1)(\mathbf{b}^{(k)})$ for $k\in \{1,2,3,4\}$. We remark that these transformations are not independent, for instance $\mathcal{T}_1 \mathcal{T}_2=\mathcal{T}_2\mathcal{T}_1=\mathcal{T}_4$.

\end{document}